\documentclass[11pt]{article}
\addtolength{\textwidth}{3cm}
\addtolength{\textheight}{4cm}
\addtolength{\oddsidemargin}{-1.5cm}
\addtolength{\evensidemargin}{-1.5cm}
\addtolength{\topmargin}{-2cm}

\usepackage[utf8]{inputenc}
\usepackage[english]{babel}
\usepackage{graphicx}         
\usepackage[pdftex, colorlinks=true, linkcolor=black, citecolor=black]{hyperref}
\usepackage{braket}
\usepackage{amsmath}
\usepackage{amssymb}
\usepackage{amsthm}
\usepackage{mathtools}
\usepackage{bbm}		
\usepackage{array}		
\usepackage{color}
\usepackage[usenames,dvipsnames]{xcolor}
\usepackage{caption}
\usepackage{enumitem}	

\allowdisplaybreaks

\theoremstyle{definition}\newtheorem{definition}{Definition}[section]

\theoremstyle{plain}\newtheorem{thm}{Theorem}
\newtheorem{lem}{Lemma}[section]
\newtheorem{cor}[lem]{Corollary}
\newtheorem{prop}[lem]{Proposition}

\theoremstyle{remark}\newtheorem{remark}{Remark}

\newcommand{\lemit}[1]{\begin{enumerate}[label={(\alph*)}, ref={\thelem\alph*}]{#1}\end{enumerate}}	
\newcommand{\remit}[1]{\begin{enumerate}[label={(\alph*)}, ref={\theremark\alph*}]{#1}\end{enumerate}}

\renewcommand{\hat}[1]{\widehat{#1}}											
\renewcommand{\tilde}[1]{\widetilde{#1}}										

\newcommand{\ls}{\lesssim}													
\newcommand{\gs}{\gtrsim}													

\newcommand{\Vp}{V^\parallel}
\newcommand{\phe}{\varphi^\varepsilon}
\newcommand{\chie}{\chi^\varepsilon}

\newcommand{\z}{z}															

\newcommand{\wb}{w_\beta}
\newcommand{\wbot}{w_\beta^{(12)}}
\newcommand{\bb}{b_\beta}													

\newcommand{\he}{h_\varepsilon}												
\newcommand{\heot}{\he^{(12)}}
\newcommand{\heoth}{\he^{(13)}}
\newcommand{\heij}{\he^{(ij)}}

\newcommand{\hbo}{\overline{h}_{\bo}}									
\newcommand{\hboot}{\overline{h}_\bo^{(12)}}
\newcommand{\hbooth}{\overline{h}_\bo^{(13)}}
\newcommand{\hboij}{\overline{h}_\bo^{(ij)}}

\newcommand{\owot}{\overline{w}^{(12)}}
\newcommand{\owij}{\overline{w}^{(ij)}}

\newcommand{\te}{\Theta_\varepsilon}											
\newcommand{\teij}{\te^{(ij)}}
\newcommand{\teot}{\te^{(12)}}
\newcommand{\teoth}{\te^{(13)}}

\newcommand{\tb}{\overline{\Theta}_{\bo}}									
\newcommand{\tbot}{\overline{\Theta}_\bo^{(12)}}
\newcommand{\tboth}{\overline{\Theta}_\bo^{(13)}}
\newcommand{\tbij}{\overline{\Theta}_\bo^{(ij)}}

\newcommand{\tz}{\overline{\Theta}_0}
\newcommand{\tzot}{\tz^{(12)}}

\newcommand{\hz}{\overline{h}_0}
\newcommand{\hzot}{\hz^{(12)}}

\newcommand{\hb}{\overline{h}_\beta}
\newcommand{\hbot}{\hb^{(12)}}

\newcommand{\tbeta}{\overline{\Theta}_\beta}
\newcommand{\tbetaot}{\tbeta^{(12)}}

\newcommand{\na}{\nabla}														

\newcommand{\pp}{p^\Phi}														
\newcommand{\pc}{p^{\chie}}														
\newcommand{\qp}{q^\Phi}														
\newcommand{\qc}{q^{\chie}}														

\newcommand{\bo}{{\beta_1}}													

\newcommand{\lr}[1]{\left\langle #1 \right\rangle} 							
\newcommand{\llr}[1]{\left\llangle #1 \right\rrangle}							
\newcommand{\norm}[1]{\lVert#1\rVert}   										
\newcommand{\onorm}[1]{\lVert#1\rVert_\mathrm{op}}							

\newcommand{\R}{\mathbb{R}}													

\newcommand\mydots{,\makebox[1em][c]{.\hfil.\hfil.},}							

\newcommand{\Tr}{\mathrm{Tr}}
\renewcommand{\d}{\mathop{}\!\mathrm{d}}

\DeclareMathOperator*{\supp}{\mathrm{supp}}

\makeatletter
\DeclareFontFamily{OMX}{MnSymbolE}{}
\DeclareSymbolFont{MnLargeSymbols}{OMX}{MnSymbolE}{m}{n}
\SetSymbolFont{MnLargeSymbols}{bold}{OMX}{MnSymbolE}{b}{n}
\DeclareFontShape{OMX}{MnSymbolE}{m}{n}{
    <-6>  MnSymbolE5
   <6-7>  MnSymbolE6
   <7-8>  MnSymbolE7
   <8-9>  MnSymbolE8
   <9-10> MnSymbolE9
  <10-12> MnSymbolE10
  <12->   MnSymbolE12
}{}
\DeclareFontShape{OMX}{MnSymbolE}{b}{n}{
    <-6>  MnSymbolE-Bold5
   <6-7>  MnSymbolE-Bold6
   <7-8>  MnSymbolE-Bold7
   <8-9>  MnSymbolE-Bold8
   <9-10> MnSymbolE-Bold9
  <10-12> MnSymbolE-Bold10
  <12->   MnSymbolE-Bold12
}{}

\let\llangle\@undefined
\let\rrangle\@undefined
\DeclareMathDelimiter{\llangle}{\mathopen}%
                     {MnLargeSymbols}{'164}{MnLargeSymbols}{'164}
\DeclareMathDelimiter{\rrangle}{\mathclose}%
                     {MnLargeSymbols}{'171}{MnLargeSymbols}{'171}
\makeatother


\title{Derivation of the 1d NLS equation from the 3d quantum many-body dynamics of strongly confined bosons}
\author{Lea Boßmann\thanks{Fachbereich Mathematik, Eberhard Karls Universität Tübingen\newline
	\indent\hspace{4pt} Auf der Morgenstelle 10, 72076 Tübingen, Germany\newline
	\indent\hspace{4pt} E-mail: lea.bossmann@uni-tuebingen.de}}
\date{\today}

\begin{document}
\maketitle

\begin{abstract}\noindent
We consider the dynamics of $N$ interacting bosons initially exhibiting Bose--Einstein condensation. Due to an external trapping potential, the bosons are strongly confined in two spatial directions, with the transverse extension of the trap being of order $\varepsilon$. 
The non-negative interaction potential is scaled such that its scattering length is positive and of order $(N/\varepsilon^2)^{-1}$, the range of the interaction scales as $(N/\varepsilon^2)^{-\beta}$ for $\beta\in(0,1)$.
We prove that in the simultaneous limit $N\rightarrow\infty$ and $\varepsilon\rightarrow 0$, the condensation is preserved by the dynamics and the time evolution is asymptotically described by a cubic defocusing nonlinear Schrödinger equation in one dimension, where the strength of the nonlinearity depends on the interaction and on the confining potential. This is the first derivation of a lower-dimensional effective evolution equation for singular potentials scaling with $\beta\geq\frac12$ and lays the foundations for the derivation of the physically relevant one-dimensional Gross--Pitaevskii equation ($\beta=1$) in \cite{GP}. For our analysis, we adapt an approach by Pickl \cite{pickl2008} to the problem with strong confinement.
\end{abstract}

\section{Introduction}
We consider a system of $N$ identical bosons in $\R^3$ interacting among each other through repulsive pair interactions. The bosons are trapped within a cigar-shaped trap, which effectively confines the particles in two spatial directions to a region of order $\varepsilon$. To describe this mathematically, let us first introduce the coordinates 
$$\z=(x,y)\in\R^{1+2}.$$
The cigar-shaped confinement is given by the scaled potential 
$\tfrac{1}{\varepsilon^2}V^\perp\left(\tfrac{y}{\varepsilon}\right)$
for some $0<\varepsilon\ll1$ and $V^\perp:\R^2\rightarrow\R$.
The Hamiltonian of this system is
\begin{equation}\label{H:micro:coord}
H_\beta(t)=\sum\limits_{j=1}^N\left(-\Delta_j+\tfrac{1}{\varepsilon^2}V^\perp\left(\tfrac{y_j}{\varepsilon}\right)+\Vp(t,\z_j)\right)+\sum\limits_{1\leq i<j\leq N}\wb(\z_i-\z_j),
\end{equation}
where $\Delta$ denotes the Laplace operator on $\R^3$ and $\Vp$ is a possibly time-dependent additional external potential. The units are chosen such that $\hbar=1$ and $m=\frac12$. 
In the limit $\varepsilon\rightarrow0$, the system becomes effectively one-dimensional, in the sense that excitations in the transverse direction are energetically strongly suppressed.

The interaction between the particles is described by the potential $\wb$ with scaling parameter $\beta\in(0,1)$. For the sake of this introduction, let us for the moment assume that 
$$\wb(\z)=\left(\tfrac{N}{\varepsilon^2}\right)^{-1+3\beta}w\left(\left(\tfrac{N}{\varepsilon^2}\right)^\beta\z\right)$$
for some compactly supported, spherically symmetric, non-negative, bounded potential $w$.\footnote{In our main result, the interaction is of a more generic form.} This scaling describes a dilute gas, where the scaling parameter $\beta$ interpolates between Hartree ($\beta=0$)  and Gross--Pitaevskii ($\beta=1$) regime. The proof of the physically relevant Gross--Pitaevskii regime relies essentially on the result for $\beta\in(0,1)$ and is given in \cite{GP}.
An important parameter characterising the interaction $\wb$ is its effective range,
$$\mu:=\left(\tfrac{N}{\varepsilon^2}\right)^{-\beta}.$$ 

We study the dynamics of the system in the simultaneous limit $(N,\varepsilon)\rightarrow(\infty,0)$.
The state $\psi^{N,\varepsilon}(t)$ of the system at time $t$ is determined by the $N$-body Schrödinger equation
\begin{equation}\label{SE}
i\tfrac{\d}{\d t}\psi^{N,\varepsilon}(t)=H_\beta(t)\psi^{N,\varepsilon}(t)
\end{equation}
with initial data $\psi^{N,\varepsilon}(0)=\psi^{N,\varepsilon}_0\in L^2_+(\R^{3N}):=\otimes_\mathrm{sym}^N L^2(\R^3).$
We assume that the system initially exhibits Bose--Einstein condensation, i.e.~that the one-particle reduced density matrix $\gamma^{(1)}_{\psi_0^{N,\varepsilon}}$ of $\psi^{N,\varepsilon}_0$,
\begin{equation}
\label{eqn:k:particle:RDM}
\gamma^{(k)}_{\psi^{N,\varepsilon}_0}:=\Tr_{k+1\mydots N}\ket{\psi^{N,\varepsilon}_0}\bra{\psi^{N,\varepsilon}_0}
\end{equation}
for $k=1$, is asymptotically close to the projection onto a one-body state $\phe_0$. At low energies, the state factorises as a consequence of the strong confinement and is of the form $\phe_0(z)=\Phi_0(x)\chie(y)\in L^2(\R^3)$ (see Remark \ref{rem:LSSY}).
Here, $\Phi_0$ denotes the wavefunction along the $x$-axis and $\chie$ is the normalised ground state of $-\Delta_y+\frac{1}{\varepsilon^2}V^\perp(\frac{y}{\varepsilon})$ in the confined directions. Due to the rescaling by $\varepsilon$, $\chie$ is given by
\begin{equation}\label{chie}
\chie(y)=\tfrac{1}{\varepsilon}\chi(\tfrac{y}{\varepsilon}),
\end{equation}
where $\chi$ is the normalised ground state of $-\Delta_y+V^\perp(y)$. 

In Theorem \ref{thm}, we show that if the state of the system is initially such a factorised Bose--Einstein condensate with condensate wavefunction $\phe_0=\Phi_0\chie$, i.e.~if
$$\lim\limits_{(N,\varepsilon)\to(\infty,0)}\Tr_{L^2(\R^3)}\left|\gamma^{(1)}_{\psi^{N,\varepsilon}_0}-\ket{\phe_0}\bra{\phe_0}\right|=0,
$$
where the limit $(N,\varepsilon)\to(\infty,0)$ is taken in an appropriate way,
then the condensation of the system into a factorised state is preserved by the dynamics, i.e. for all $t\in\R$ and $k\in\mathbb{N}$,
$$
\lim\limits_{(N,\varepsilon)\rightarrow(\infty,0)}\Tr_{L^2(\R^{3k})}\Big|\gamma^{(k)}_{\psi^{N,\varepsilon}(t)}-\ket{\phe(t)}\bra{\phe(t)}^{\otimes k}\Big|=0.
$$
The condensate wavefunction at time $t$ is given by $\phe(t)=\Phi(t)\chie$, where $\Phi(t)$ is the solution of the one-dimensional nonlinear Schrödinger (NLS) equation
\begin{equation}\label{NLS}
i\tfrac{\partial}{\partial t}\Phi(t,x)=\left(-\tfrac{\partial^2}{\partial x^2}+\Vp(t,(x,0))+\bb|\Phi(t,x)|^2\right)\Phi(t,x)
=:h(t)\Phi(t,x)
\end{equation}
with $\Phi(0)=\Phi_0 $ and coupling parameter $\bb=\norm{w}_{L^1(\R^3)}\int_{\R^2}|\chi(y)|^4\d y.$ 

To our knowledge, Theorem \ref{thm} is the first rigorous derivation of an effectively lower-dimensional evolution equation directly from the three-dimensional $N$-body dynamics for $\beta\geq\frac12$. 
In \cite{keler2016}, von Keler and Teufel consider a similar problem for $\beta\in(0,\frac13)$ and in \cite{chen2013} and \cite{chen2017}, Chen and Holmer study interactions for values of $\beta$ in subsets of the interval $(0,\frac12)$.
The extension to $\beta\in(0,1)$ requires a non-trivial adaptation of methods used for the fully three-dimensional problem without strong confinement \cite{pickl2015} to handle the additional limit $\varepsilon\to0$ and the associated dimensional reduction. 
Not only is this an interesting mathematical problem on its own but it lays the foundations for the derivation of the physically relevant effectively one-dimensional Gross--Pitaevskii equation corresponding to the scaling $\beta=1$ \cite{GP}. In fact, the main idea of the proof in \cite{GP} is to approximate the interaction $w_{\beta=1}$ by a softer scaling interaction which is covered by our Theorem \ref{thm}, and to show that the remainders from this substitution vanish in the limit. The dimensional reduction occurs in the approximated interaction, hence the result for $\beta=1$ relies essentially on the tools and results proven here.

Let us give a brief motivation of the effective equation \eqref{NLS}. The $N$-body problem is interacting, hence the effective evolution is nonlinear and the strength of the linearity depends on the two-body scattering process. This process is to leading order described by the scattering length $a_\beta$ of $\wb$, which scales as $(\tfrac{N}{\varepsilon^2})^{-1}$ for $\beta\in(0,1]$ \cite[Lemma A.1]{erdos2007}. This implies that, for $\beta\in(0,1)$, the length scale of the inter-particle correlations is small compared to the range $\mu=(\tfrac{N}{\varepsilon^2})^{-\beta}$of $\wb$.
Hence, the correlations are negligible in the limit and the two-body scattering process is described by the first order Born approximation to the scattering length, $8\pi a_\beta\approx \int w_\beta(z)\d z$. 
The additional factor $\int_{\R^2}|\chi(y)|^4\d y$ in the coupling parameter arises from integrating out the transverse degrees of freedom in the course of the dimensional reduction. 
\\

Quasi one-dimensional Bose gases in highly elongated traps have been studied experimentally \cite{gorlitz2001,henderson2009} and the dynamical behaviour of such systems is of great physical interest \cite{esteve2006,kinoshita2006,meinert2017}. The first rigorous derivation of an NLS evolution for three-dimensional bosons was by Erd{\H o}s, Schlein and Yau \cite{erdos2007}. The main tool of their proof is the convergence of the BBGKY hierarchy, a system of coupled equations determining the time evolution of all $k$-particle density matrices. Later, the authors adapted their proof to handle the Gross--Pitaevskii scaling of the interaction \cite{erdos2010}. 
A different approach providing rates for the convergence of the reduced density matrices was proposed by Pickl \cite{pickl2008,pickl2011}, who derived effective evolution equations for NLS and Gross--Pitaevskii scaling of the interaction, including time-dependent external potentials \cite{pickl2015} as well as non-positive \cite{pickl2010,jeblick2018} and singular interactions \cite{knowles2010}. A third method for the Gross--Pitaevskii case, based on Bogoliubov transformations and coherent states on Fock space, was developed by Benedikter, De Oliveira and Schlein \cite{benedikter2015}, and a presumably optimal rate of convergence was recently proven by Brennecke and Schlein \cite{brennecke2017}. Further results concern bosons in one \cite{adami2007, chen2016} and two \cite{kirkpatrick2011,jeblick2016,jeblick2017} spatial dimensions.

Some authors have considered the problem of dimensional reduction for the NLS equation. In \cite{mehats2017}, M{\'e}hats and Raymond
study the cubic NLS equation in a two-dimensional quantum waveguide, i.e.~within a tube of width $\varepsilon$ around a curve in $\R^2$. They show that in the limit $\varepsilon\rightarrow0$, the nonlinear evolution is well approximated by a one-dimensional cubic NLS equation with an additional potential term due to the curvature. 
Ben Abdallah, Méhats, Schmeiser and Weishäupl consider in \cite{abdallah2005_2} an $(n + d)$-dimensional NLS equation subject to a strong confinement in $d$ directions and derive an effective $n$-dimensional NLS equation with a modified nonlinearity.

As mentioned above, there are few results concerning the derivation of lower-dimensional NLS equations from the underlying three-dimensional $N$-body dynamics. Chen and Holmer consider three-dimensional bosons with pair interactions in a harmonic potential that is strongly confining in one \cite{chen2013} or two \cite{chen2017} directions. For a repulsive interaction scaling with $\beta\in (0,\frac25)$ in case of the disc-shaped and for an attractive interaction with $\beta\in(0,\frac37)$ in case of the cigar-shaped confinement, they prove that the dynamics are effectively described by a two- or respectively one-dimensional NLS equation. 
In \cite{keler2016}, von Keler and Teufel study a Bose gas confined to a quantum waveguide with non-trivial geometry for scaling parameters $\beta\in(0,\tfrac13)$. They prove that the evolution is well captured by a one-dimensional NLS equation with additional potential terms arising from the twisting and bending of the waveguide. 

The remainder of this paper is structured as follows: in Section \ref{sec:main}, we specify our assumptions and present the result. Our proof follows an approach by Pickl, which is outlined in Section \ref{sec:proof}. This section also contains the proof of our main Theorem \ref{thm}, relying essentially on two propositions. Finally, these propositions are proven in Section \ref{sec:prop}.

\section{Main Result}\label{sec:main}
To study the effective behaviour of the many-body system in the simultaneous limit $(N,\varepsilon)\rightarrow(\infty,0)$, let us consider families of initial data $\psi^{N,\varepsilon}_0$ along sequences $(N_n,\varepsilon_n)\rightarrow(\infty,0)$.
\begin{definition}
A sequence $(N_n,\varepsilon_n)$ in $\mathbb{N}\times(0,1)$ is called \textit{admissible} if
$$\lim\limits_{n\rightarrow\infty}(N_n,\varepsilon_n)=(\infty,0) \quad\text{ and }\quad \lim\limits_{n\rightarrow\infty}\frac{\varepsilon^2_n}{\mu_n}=0\quad\text{ for }\mu_n:=\left(\frac{N_n}{\varepsilon^2_n}\right)^{-\beta}.$$
It is called \textit{moderately confining} if 
$$\lim\limits_{n\rightarrow\infty}\frac{\mu_n}{\varepsilon_n}=0.$$ 
\end{definition}

Moderate confinement means that the extension $\varepsilon$ of the confining potential shrinks to zero but is still large compared to the range of the interaction $\mu$. This prevents the interaction from being supported mainly in a region that is quasi inaccessible to the particles due to the strong confinement. As $\mu/\varepsilon=N^{-\beta}\varepsilon^{2\beta-1}$, this condition is a restriction only for $\beta<\frac12$.

The admissibility condition ensures that $\varepsilon$ shrinks sufficiently fast compared to $\mu$ that the system becomes effectively one-dimensional. Note that for $\delta>0$, $\varepsilon^\delta/\mu=N^\beta\varepsilon^{\delta-2\beta}$, hence $\delta=2$ is the smallest exponent for which $\varepsilon^\delta/\mu\rightarrow0$ is possible for all $\beta\in(0,1)$.
Both conditions are comparable to the assumptions in \cite{chen2017} for an attractive interaction scaling with $\beta\in(0,\tfrac{3}{7})$\footnote{In our notation, the assumptions in \cite{chen2017} are $N^{\nu_1(\beta)}\ls\varepsilon^{-2}\ls N^{\nu_2(\beta)}$, where $\nu_1$ and $\nu_2$ are given by $\nu_1(\beta)=\frac{\beta}{1-\beta}$ and $\nu_2=\min\left\{\frac{1-\beta}{\beta},\frac{\frac35-\beta}{\beta-\frac15}\mathbbm{1}_{\beta\geq\frac15}+\infty\cdot\mathbbm{1}_{\beta<\frac15}, \frac{2\beta}{1-2\beta}, \frac{\frac78-\beta}{\beta} \right\}$. 
Note that $N^{\nu_1(\beta)}\varepsilon^2=(\frac{\varepsilon^2}{\mu})^{\frac{1}{1-\beta}}$ and $N^{\nu_2(\beta)}\varepsilon^2\leq(\frac{\varepsilon}{\mu})^\frac{2}{1-2\beta}$ as $\nu_2(\beta)\leq \frac{2\beta}{1-2\beta}$, hence these conditions are comparable to our assumptions. }.

We will use the notation $A\ls B$ to indicate that there exists a constant $C>0$ independent of $\varepsilon, N, t, \psi^{N,\varepsilon}_0,\Phi_0$ such that $A\leq CB$. The constant may depend on the quantities fixed by the model, such as $V^\perp$, $\chi$ and $\Vp$.

We consider interactions of the following type:

\begin{definition}\label{def:W}
Let $\beta\in(0,1)$ and $\eta>0$. Define the set $\mathcal{W}_{\beta,\eta}$ as the set containing all families 
$$\wb:\mathbb{N}\times(0,1)\to L^\infty(\R^3,\R), \quad (N,\varepsilon)\mapsto \wb((N,\varepsilon)),$$ 
such that for any $(N,\varepsilon)\in\mathbb{N}\times(0,1)$
$$
\begin{cases} 
	(a)\;\,\norm{\wb((N,\varepsilon))}_{L^\infty(\R^3)}\ls \left(\tfrac{N}{\varepsilon^2}\right)^{-1+3\beta},
	\vspace{0.2cm}\\\vspace{0.2cm}
	(b)\;\, \wb((N,\varepsilon)) \text{ is non-negative and spherically symmetric},\\\vspace{0.2cm}
	(c)\;\, \supp{\wb((N,\varepsilon))} \subseteq \left\{\z\in\R^3:|\z|\ls \left(\tfrac{N}{\varepsilon^2}\right)^{-\beta}\right\},\\
	(d)\;\, 	\lim\limits_{(N,\varepsilon)\rightarrow(\infty,0)}\left(\frac{N}{\varepsilon^2}\right)^\eta
	\left|b_{N,\varepsilon}((N,\varepsilon),\wb)-\bb(\wb)\right|	=0,
\end{cases}$$
where
$$b_{N,\varepsilon}((N,\varepsilon),\wb):= N\int\limits_{\R^3}\wb((N,\varepsilon),\z)\d\z\int\limits_{\R^2}|\chie(y)|^4\d y=\tfrac{N}{\varepsilon^2}\int\limits_{\R^3}\wb((N,\varepsilon),\z)\d\z\int\limits_{\R^2}|\chi(y)|^4\d y,$$
$$\bb(\wb):=\lim\limits_{(N,\varepsilon)\rightarrow(\infty,0)}b_{N,\varepsilon}((N,\varepsilon),\wb).$$ 
We will in the following abbreviate $\wb((N,\varepsilon))\equiv\wb$, $b_{N,\varepsilon}((N,\varepsilon),\wb)\equiv b_{N, \varepsilon}$ and $\bb(\wb)\equiv\bb$.
\end{definition}

Condition (d) ensures that the $(N,\varepsilon)$-dependent parameter $b_{N,\varepsilon}$ converges sufficiently fast to its limit $\bb$. Clearly, the interaction $(\tfrac{N}{\varepsilon^2})^{-1+3\beta}w((\tfrac{N}{\varepsilon^2})^\beta z)$ from the introduction is contained in this set. In this case, $b_{N,\varepsilon}=\norm{w}_{L^1(\R^3)}\int_{\R^2}|\chi(y)|^4\d y=\bb$, hence (d) is true for any $\eta>0$.

In order to formulate our main theorem, we will need two different notions of one-particle energies:
\begin{itemize}
\item The \emph{``renormalised'' energy per particle}: for $\psi\in\mathcal{D}(H_\beta(t))$,
\begin{equation}\label{E^psi}
E^\psi(t):=\tfrac{1}{N}\lr{\psi,H_\beta(t)\psi}_{L^2(\R^{3N})}-\tfrac{E_0}{\varepsilon^2},
\end{equation}
where $E_0$ denotes the lowest eigenvalue of $-\Delta_y+V^\perp(y)$. By rescaling, $\frac{E_0}{\varepsilon^2}$ is the lowest eigenvalue of $-\Delta_y+\frac{1}{\varepsilon^2}V^\perp(\frac{y}{\varepsilon^2})$.
\item The \emph{effective energy per particle}: for $\Phi\in H^2(\R)$,
\begin{equation}\label{E^Phi}
\mathcal{E}^\Phi(t):=\lr{\Phi,\left(-\tfrac{\partial^2}{\partial x^2}+\Vp(t,(x,0))+\tfrac{\bb}{2}|\Phi|^2\right)\Phi}_{L^2(\R)}.
\end{equation}
\end{itemize}
Further, we define the function $\mathfrak{e}:\R\rightarrow [1,\infty)$ by
\begin{equation}\label{def:e}
\mathfrak{e}^2(t):=1+|E^{\psi^{N,\varepsilon}_0}(0)|+|\mathcal{E}^{\Phi_0}(0)|+\int\limits_0^t \norm{\dot{\Vp}(s)}_{L^\infty(\R^3)}\d s
+\sup\limits_{\substack{i,j\in\{0,1\}\\k\in\{1,2\}}}\norm{\partial_t^i\partial_{y_k}^j\Vp(t)}_{L^\infty(\R^3)}.
\end{equation}\label{eqn:e}
This function will be of use because, by the fundamental theorem of calculus,
\begin{equation}\label{bounds:E}
\big|E^{\psi^{N,\varepsilon}(t)}(t)\big|\leq \mathfrak{e}^2(t)-1 \quad\text{ and }\quad \big|\mathcal{E}^{\Phi(t)}(t)\big|\leq \mathfrak{e}^2(t)-1 
\end{equation}
for any time $t\in\R$. Note that if the external field $\Vp$ is time-independent, $\mathfrak{e}^2(t)\ls 1$ for any $t$, hence in this case, $E^{\psi^{N,\varepsilon}(t)}(t)$  and $\mathcal{E}^{\Phi(t)}(t)$ are bounded uniformly in time.
\\

\noindent Let us now state our assumptions:
\begin{itemize}
\item[A1] \emph{Interaction.} Let the interaction $\wb\in\mathcal{W}_{\beta,\eta}$ for some $\eta>0$.
\item[A2] \emph{Confining potential.} Let $V^\perp:\R^2\rightarrow\R$ such that $-\Delta_y+V^\perp$ is self-adjoint and has a non-degenerate ground state $\chi$ with energy $E_0<\inf\sigma_\mathrm{ess}(-\Delta_y+V^\perp)$. Assume that the negative part of $V^\perp-E_0$ is bounded, i.e.~that $(V^\perp-E_0)_-\in L^\infty(\R^2)$, and that $\chi\in\mathcal{C}^1_\mathrm{b}(\R^2)$, i.e.~$\chi$ is bounded and continuously differentiable with bounded derivative. We choose $\chi$ normalised and real. 
\item[A3] \emph{External field.} Let $\Vp:\R\times\R^3\rightarrow\R$ such that for fixed $z\in\R^3$, $\Vp(\cdot,z)\in\mathcal{C}^1(\R)$. 
Further, assume that for each fixed $t\in \R$, $\Vp(t,(\cdot,0))\in H^4(\R)$, $\Vp(t,\cdot),\dot{\Vp}(t,\cdot)\in L^\infty(\R^3)\cap \mathcal{C}^1(\R^3)$ and $\nabla_y\Vp(t,\cdot),\nabla_y\dot{\Vp}(t,\cdot)\in L^\infty(\R^3)$.
%
\item[A4] \emph{Initial data.} Assume that the family of initial data, $\psi^{N,\varepsilon}_0\in\mathcal{D}(H_\beta(0))\cap L^2_+(\R^{3N})$ with $\norm{\psi^{N,\varepsilon}_0}^2=1$, is close to a condensate with condensate wavefunction $\phe_0=\Phi_0\chie$ for some normalised $\Phi_0\in H^2(\R)$ in the following sense: for some admissible, moderately confining sequence $(N,\varepsilon)$, it holds that
\begin{equation}\label{A4:1}
\lim\limits_{(N,\varepsilon)\rightarrow(\infty,0)}\Tr_{L^2(\R^3)}\Big|\gamma^{(1)}_{\psi^{N,\varepsilon}_0}-\ket{\Phi_0\chie}\bra{\Phi_0\chie}\Big|=0
\end{equation}
and
\begin{equation}\label{A4:2}
\lim\limits_{(N,\varepsilon)\rightarrow(\infty,0)}\left|E^{\psi^{N,\varepsilon}_0}(0)-\mathcal{E}^{\Phi_0}(0)\right|=0.
\end{equation}
\end{itemize}
\begin{remark}
\remit{
	\item Assumption A1 includes the interaction
	$\wb(\z)=\left(\tfrac{N}{\varepsilon^2}\right)^{-1+3\beta}w\left((\tfrac{N}{\varepsilon^2})^\beta\z\right)$
	for $w:\R^3\rightarrow\R$ spherically symmetric, non-negative and with $\supp{w}\subseteq \overline{B_1(0)}$.
	\item Assumption A2 is, for instance, fulfilled by a harmonic potential or by any bounded smooth potential with a bound state below the essential spectrum. Note that it is not necessary that the potential increases as $|y|\rightarrow \infty$.
	The confining effect of the potential is due to the rescaling by $\varepsilon$ because the ground state of $-\Delta_y+V^\perp$ is exponentially localised \cite[Theorem 1]{griesemer2004}.
	\item The regularity condition on $\Vp(t,(\cdot,0))$ in A3  ensures the global existence of $H^2$-solutions of the NLS equation \eqref{NLS} (see Appendix \ref{appendix} and Lemma \ref{lem:Phi}). The further requirements for $\Vp,\dot{\Vp}, \nabla_y\Vp$ and $\nabla_y\dot{\Vp}$ are needed to control the one-particle energies and the interactions of bosons with the external field $\Vp$.
	\item Due to assumptions A1 -- A3, $H_\beta(t)$ is self-adjoint on $\mathcal{D}(H_\beta(t))=\mathcal{D}(H_\beta)$. As $t\mapsto \Vp(t)\in\mathcal{L}(L^2(\R^3))$ is continuous, $H_\beta(t)$ generates a strongly continuous unitary evolution on $\mathcal{D}(H_\beta)$. 
%
	\item We assume in A4 that the system is initially given by a Bose--Einstein condensate with factorised condensate wavefunction.
	Both parts \eqref{A4:1} and \eqref{A4:2} of the assumption are standard when deriving effective evolution equations.
	For the scaling parameter $\beta=1$, it is shown in \cite{lieb2004} that the ground state of the corresponding system satisfies assumption A4. For related results without strong confinement, we refer to the review \cite{LSSY} for $\beta=1$ and to \cite{lewin2016} for $\beta<1$.\label{rem:LSSY}
}	
\end{remark}
\begin{thm}\label{thm}
Let $\beta\in(0,1)$ and assume that $\wb$, $V^\perp$ and $\Vp$ satisfy A1 -- A3. Let $\psi^{N,\varepsilon}_0$ be a family of initial data satisfying A4, let $\psi^{N,\varepsilon}(t)$ denote the solution of the $N$-body Schrödinger equation \eqref{SE} with initial datum $\psi^{N,\varepsilon}(0)=\psi^{N,\varepsilon}_0$ and let $\gamma^{(k)}_{\psi^{N,\varepsilon}(t)}$ denote its $k$-particle reduced density matrix as in \eqref{eqn:k:particle:RDM}. Then for any $t\in\R$ and $k\in\mathbb{N}$,
\begin{equation}\label{T1}
\lim\limits_{(N,\varepsilon)\rightarrow(\infty,0)}\Tr_{L^2(\R^{3k})}\Big|\gamma^{(k)}_{\psi^{N,\varepsilon}(t)}-\ket{\Phi(t)\chie}\bra{\Phi(t)\chie}^{\otimes k}\Big|=0
\end{equation}
and
\begin{equation}\label{T2}
\lim\limits_{(N,\varepsilon)\rightarrow(\infty,0)}\left|E^{\psi^{N,\varepsilon}(t)}(t)-\mathcal{E}^{\Phi(t)}(t)\right|=0,
\end{equation}
where the limits are taken along the sequence from A4. $\Phi(t)$ is the solution of the NLS equation \eqref{NLS} with initial datum $\Phi(0)=\Phi_0$ from A4, where the strength of the nonlinearity in \eqref{NLS} is given by $\bb$ from Definition \ref{def:W}, namely
\begin{equation}\label{b}
\bb=\lim\limits_{(N,\varepsilon)\rightarrow(\infty,0)}b_{N,\varepsilon}=\lim\limits_{(N,\varepsilon)\rightarrow(\infty,0)}
\tfrac{N}{\varepsilon^2} \int\limits_{\R^3}\wb(\z)\d\z  \int\limits_{\R^2}|\chi(y)|^4\d y.
\end{equation}
\end{thm}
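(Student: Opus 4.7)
My proof will follow Pickl's counting functional method, adapted to handle the simultaneous limit $(N,\varepsilon)\to(\infty,0)$ and the associated dimensional reduction. The central object is a non-negative functional $\alpha(\psi,\Phi,t)$ that quantifies how many particles of $\psi$ have left the condensate state $\phe=\Phi\chie$; concretely, I would let $q_j=1-\ket{\phe(t)}\bra{\phe(t)}_j$ and combine a suitably weighted version of the fractional number $\tfrac{1}{N}\sum_j\lr{\psi,q_j\psi}$ with the auxiliary term $|E^\psi(t)-\mathcal{E}^\Phi(t)|$. Including the energy difference is unavoidable: for $\beta\geq\tfrac12$ the pointwise bound $\norm{\wb}_\infty\ls(N/\varepsilon^2)^{-1+3\beta}$ is too weak to close a Grönwall estimate directly, and the energy piece supplies the missing kinetic control on the ``bad'' particles that is needed to exploit a scattering-equation softening of $\wb$.

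The proof will rest on two propositions. The first relates $\alpha$ to the quantities in the theorem: if $\alpha(\psi^{N,\varepsilon}(t),\Phi(t),t)\to 0$, then both \eqref{T1} for $k=1$ and the energy statement \eqref{T2} follow, the former via the standard comparison of $\Tr|\gamma^{(1)}-\ket{\phe}\bra{\phe}|$ with $\lr{\psi,q_1\psi}^{1/2}$, the latter because the energy difference is built into $\alpha$ by construction. The upgrade from $k=1$ to arbitrary $k\in\mathbb{N}$ in \eqref{T1} is a standard consequence of trace-norm convergence of the one-particle reduced density matrix to a rank-one projection. The second proposition is the Grönwall-type differential inequality
\begin{equation*}
\tfrac{d}{dt}\alpha(t)\ls \mathfrak{e}^2(t)\,\alpha(t)+R(N,\varepsilon,t),
\end{equation*}
with $R(N,\varepsilon,t)\to 0$ along any admissible, moderately confining sequence. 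Together with $\alpha(0)\to 0$ from A4 this yields $\alpha(t)\to 0$ for every $t\in\R$, which by the first proposition closes the proof.

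The technical core is this differential inequality. Differentiating $\alpha$ along the flows \eqref{SE} and \eqref{NLS} and using that $\chie$ is an eigenfunction of the transverse Hamiltonian cancels the kinetic and transverse pieces and leaves commutators of the form $[q_1q_2,\wb(\z_1-\z_2)]$ and coupling terms comparing $\wb$ with $\bb|\Phi|^2|\chie|^2$. In the hard regime $\beta\geq\tfrac12$ one cannot estimate $\wb$ in sup-norm; instead one splits $\wb$ into a part whose expectation against $\ket{\phe}\bra{\phe}_j$ reproduces the effective mean-field interaction to leading order (here the Born-type identity $b_{N,\varepsilon}\to\bb$ in Definition \ref{def:W} enters) and a remainder that is traded against gradients of $q_j\psi$ whose $L^2$-norm is controlled by $|E^\psi-\mathcal{E}^\Phi|+\alpha$. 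The scattering-equation trick, replacing $\wb$ by a longer-ranged potential of equal scattering length, provides the required softening.

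A second layer of difficulty specific to this setting is the dimensional reduction. Since $\chie(y)=\varepsilon^{-1}\chi(y/\varepsilon)$, every transverse derivative produced in a commutator with $\wb$ carries a factor $\varepsilon^{-1}$ that must be compensated elsewhere, which is precisely what the admissibility condition $\varepsilon^2/\mu\to 0$ and the moderate-confinement condition $\mu/\varepsilon\to 0$ are calibrated to afford. The main obstacle I expect is therefore the bookkeeping in this commutator expansion: tracking the combined powers of $N$, $\varepsilon$ and $\mu$ in each term, together with the appropriate $L^p$-norms of $\wb$ in longitudinal versus transverse directions, and verifying that every error can be absorbed either into $R$ or into $\mathfrak{e}^2(t)\alpha(t)$, uniformly across $\beta\in(0,1)$. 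The global $H^2$-well-posedness of \eqref{NLS} needed for $\mathcal{E}^{\Phi(t)}(t)$ to make sense along the flow is standard under A3.
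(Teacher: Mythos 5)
Your overall architecture matches the paper's: a Pickl counting functional augmented by the energy difference $|E^{\psi}-\mathcal{E}^{\Phi}|$, closed by a Gr\"onwall inequality, with the upgrade from $k=1$ to general $k$ and from $\hat n$-expectations to trace norm via the standard equivalences (cf. Lemma~\ref{lem:equivalences} and Lemma~\ref{lem:comparison:rates}). Your observation that the sup-norm bound on $\wb$ is useless for $\beta\geq\tfrac12$, so that kinetic control via the energy piece is indispensable, is exactly the paper's reason for including the energy difference in $\alpha_\xi$ and for the a priori and refined kinetic estimates (Lemmas~\ref{lem:a_priori} and~\ref{lem:E_kin}).

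Where you go astray is in the specific softening mechanism. You invoke ``the scattering-equation trick, replacing $\wb$ by a longer-ranged potential of equal scattering length.'' That is the tool for the Gross--Pitaevskii scaling $\beta=1$ (and the subject of the companion work \cite{GP}); it is not what is used here, and it is not calibrated for $\beta<1$. In this regime the scattering length $a_\beta\sim(N/\varepsilon^2)^{-1}$ is much smaller than the range $\mu=(N/\varepsilon^2)^{-\beta}$, so the two-body correlation structure is negligible and the coupling $\bb$ is already the first-order Born approximation $\tfrac{N}{\varepsilon^2}\int\wb\int|\chi|^4$; there is nothing to gain by replacing $\wb$ with an equal-scattering-length potential, and such a replacement does not yield the $L^2/L^\infty$ bounds one actually needs. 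What the paper does instead is solve $\Delta\he=\wb$ on the ball $B_\varepsilon(0)$ with Dirichlet boundary conditions (Definition~\ref{def:he}, Lemma~\ref{lem:he}), multiply by a smooth cutoff $\te$ with $\te\equiv1$ on $\supp\wb$ and $\te\equiv0$ on $\partial B_\varepsilon(0)$ (Definition~\ref{def:theta}, Lemma~\ref{lem:theta}) to kill the boundary terms, and integrate by parts once, so that only $\nabla\he$ (with $\|\nabla\he\|_\infty\ls N^{-1}\mu^{-2}\varepsilon^2$, $\|\nabla\he\|_2\ls N^{-1}\mu^{-1/2}\varepsilon^2$) and gradients of $\psi$ appear; the latter are controlled by the energy. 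A parallel one-dimensional construction $\hbo$, $\tb$ handles the quasi-1D interaction $\overline{w}=\pc_1\pc_2\wb\pc_1\pc_2$ that appears after the dimensional reduction. The paper explicitly contrasts this with the two-step strategy of \cite{pickl2015} and notes that the on-the-ball integration by parts avoids a separate $\beta<\tfrac13$ estimate. If you tried to carry out your plan as written, you would stall precisely at the point where you need quantitative control on the ``softened'' interaction, because for $\beta<1$ the scattering-length replacement does not produce the softer object. The remedy is exactly the $\Delta\he=\wb$ integration-by-parts trick, used in tandem with the Born identity from Definition~\ref{def:W} to identify the coupling constant.

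Two smaller points you leave implicit that are load-bearing in the paper: first, the time-derivative of $\alpha_\xi$ is split into $\gamma_b^{(1)}$ (which produces $\bb|\Phi|^2$ via the quasi-1D $\overline{w}$, requiring $\mu/\varepsilon\to0$), $\gamma_b^{(2)}$ (a genuinely new term, with no analogue in the unconfined problem, collecting the error from replacing $\wb$ by $\overline{w}$ and requiring $\varepsilon^2/\mu\to0$ together with $\|\qc_1\psi\|=\mathcal{O}(\varepsilon)$), and $\gamma_b^{(3)}$ (requiring the refined kinetic Lemma~\ref{lem:E_kin}); you fold all of this into ``bookkeeping,'' which underestimates the fact that $\gamma_b^{(2)}$ has to be recognized as a separate structural piece. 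Second, the weight is not the bare $\hat n$ but the smoothed $\hat m$ (a cutoff at $k\sim N^{1-2\xi}$), without which the operator $\hat{l}\sim N\hat m^a_{-1}$ would not satisfy $\|\hat l\hat n\|_\mathrm{op}\ls1$, which is used pervasively.
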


\begin{remark}
\begin{enumerate}[label={(\alph*)}]
	\item For the specific choice $\wb(\z)=\left(\tfrac{N}{\varepsilon^2}\right)^{-1+3\beta} w\left((\tfrac{N}{\varepsilon^2})^\beta\z\right)$, we obtain the coupling parameter $\bb=\norm{w}_{L^1(\R^3)}\int_{\R^2}|\chi(y)|^4\d y$.
	\item For any fixed $t\in\R$, our proof yields an estimate of the rate of the convergence of \eqref{T1}, which is explicitly stated in Corollary \ref{cor:rate}. The rate is not uniform in time but depends on it in terms of a double exponential. Note, however, that times of order one already correspond to long times on the microscopic scale.
	\item Let us comment on the difference of our work to the result of von Keler and Teufel \cite{keler2016}, who consider $\beta\in(0,\frac13)$. The extension to $\beta\in(0,1)$  means a physically relevant improvement of the result: for $\beta<\frac13$, the problem can still be considered as a mean-field problem since the mean inter-particle distance $\varrho^{-\frac13}\sim(\frac{N}{\varepsilon^2})^{-\frac13}$ is small compared to the range of the interaction $\mu=(\frac{N}{\varepsilon^2})^{-\beta}$.
	For $\beta>\frac13$, the mean-field description breaks down and one must handle interactions which are too singular to be covered by the approach of \cite{keler2016}. We solve this by an integration by parts of the interaction, which comes at the price that one must control the kinetic energy of the $N$-particle wavefunction (Lemma \ref{lem:a_priori} and Lemma \ref{lem:E_kin}). Also, note that our admissibility condition is weaker 	than the respective condition $\varepsilon^{\frac43}/\mu\rightarrow0$ in \cite{keler2016}, which cannot be satisfied for $\beta>\frac23$. 
	\\
In \cite{keler2016}, the bosons are trapped within a quantum waveguide with non-trivial geometry. The confinement is realised by means of Dirichlet boundary conditions, which restrict the system to a tube of width $\varepsilon$ around some curve in $\R^3$.
In our model, the confinement is by potentials. However, our result can be easily modified to a confinement via Dirichlet boundary conditions, corresponding to a straight and untwisted quantum waveguide. The main difference in the proof is the estimate of $\gamma_b^{(1)}$ (Section \ref{subsec:gamma:b:1}): one divides the expression \eqref{eqn:Gamma(x_1)} into an integral over those ${y}$ sufficiently distant from the boundary that $\supp\wb((x,{y})-\cdot)$ is completely contained in the waveguide, and into an integral over the rest, which is easily estimated.\\
In addition to moderately confining sequences, the authors of \cite{keler2016} consider sequences $(N,\varepsilon)\rightarrow(\infty,0)$ with $\varepsilon/\mu\rightarrow0$. This is possible for $\beta\in(0,\frac12)$ and leads to $\bb=0$ in the effective equation because an essential part of the interaction is cut off such that the limiting effective equation becomes linear. We conjecture that the same effect occurs in our setup.
	\item In \cite{chen2017}, Chen and Holmer study attractive interactions, i.e.~$\int_{\R^3}\wb(\z)\d\z\leq 0$. In distinction from that work, we exclusively consider repulsive interactions with $\wb\geq0$. However, as the condition $\wb\geq0$ seems to be crucial only to the proofs of Lemma \ref{lem:a_priori} and Lemma \ref{lem:E_kin}, it is likely that our result can be extended to include repulsive interactions with a certain negative part.
\end{enumerate}
\end{remark}

\section{Proof of the main theorem}\label{sec:proof}
To prove Theorem \ref{thm}, we need to show that the expressions in \eqref{T1} and \eqref{T2} vanish in the limit $(N,\varepsilon)\rightarrow(\infty,0)$, given suitable initial data. Instead of estimating these differences directly, we adhere to the idea by Pickl \cite{pickl2008,pickl2015_arxiv,pickl2010,pickl2011,pickl2015} to define a functional $\alpha_\xi(\psi^{N,\varepsilon}(t),\phe(t))$ which provides a measure of the part of the $N$-particle wavefunction $\psi^{N,\varepsilon}$ that has not condensed into the single-particle orbital $\phe$.
The functional is chosen such that $\alpha_\xi(\psi^{N,\varepsilon}(t),\phe(t))\rightarrow0$ is equivalent to \eqref{T1} and \eqref{T2}.
We follow in general \cite{pickl2015}. However, the strongly asymmetric confinement requires a nontrivial modification of the formalism to treat the dimensional reduction and the more singular scaling of the interaction.
For the construction of $\alpha_\xi$, we need the following projections:

\begin{definition}\label{def:p}
Let $\phe(t)=\Phi(t)\chie$, where $\Phi(t)$ is the solution of the NLS equation \eqref{NLS} with initial datum $\Phi_0$ from A4 and with $\chie$ as in \eqref{chie}. Let
$$p:=\ket{\phe(t)}\bra{\phe(t)},$$
where we have dropped the time dependence of $p$ in the notation. For $i\in\{1,\dots,N\}$, define the projection operators on $L^2(\R^{3N})$
$$p_j:=\underbrace{\mathbbm{1}\otimes\cdots\otimes\mathbbm{1}}_{j-1}\otimes\, p\otimes \underbrace{\mathbbm{1}\otimes\cdots\otimes\cdots\mathbbm{1}}_{N-j} \quad\text{and}\quad q_j:=\mathbbm{1}-p_j.$$
Further, define the orthogonal projections on $L^2(\R^3)$ 
\begin{align*}
\pp&:=\ket{\Phi(t)}\bra{\Phi(t)}\otimes\mathbbm{1}_{L^2(\R^2)},  &  \qp&:=\mathbbm{1}_{L^2(\R^3)}-\pp,\\
\pc&:=\mathbbm{1}_{L^2(\R)}\otimes\ket{\chie}\bra{\chie}, &  \qc&:=\mathbbm{1}_{L^2(\R^3)}-\pc,
\end{align*}
and define $\pp_j$, $\qp_j$, $\pc_j$ and $\qc_j$ on $L^2(\R^{3N})$ analogously to $p_j$ and $q_j$.
Finally, for $0\leq k\leq N$, define the many-body projections
$$ P_k=\big(q_1\cdots q_k p_{k+1}\cdots p_N\big)_\mathrm{sym}:=\sum\limits_{\substack{J\subseteq\{1,\dots,N\}\\|J|=k}}\prod\limits_{j\in J}q_j\prod\limits_{l\notin J}p_l $$
and $P_k=0$ for $k<0$ and $k>N$.
\end{definition}

We will write $p_j=\ket{\phe(t,z_j)}\bra{\phe(t,z_j)}$, $\pp_j=\ket{\Phi(t,x_j)}\bra{\Phi(t,x_j)}$ and $\pc_j=\ket{\chie(y_j)}\bra{\chie(y_j)}$. Some useful identities of the projections are listed in the following corollary:

\begin{cor}\label{cor:p}
For $0\leq k\leq N$ and $1\leq j\leq N$, it holds that
\lemit{
\item $\sum\limits_{k=0}^N P_k=\mathbbm{1},$  \quad $\sum\limits_{j=1}^N q_j P_k=kP_k,$\label{cor:p:1}
\item $p_j=\pp_j\pc_j$, \quad $p_j^{\Phi/\chie} p_j= p_j$, \quad $q_j^{\Phi/\chie} q_j= q_j^{\Phi/\chie}$, \quad $p^{\Phi/\chie}_j q_j=p_j^{\Phi/\chie} q_j^{\chie/\Phi},$ \quad $q_j^{\Phi/\chie}p_j=0,$\label{cor:p:2}
\item $q_j=\qp_j \pc_j+\pp_j\qc_j+\qp_j\qc_j=\qc_j+\qp_j\pc_j=\qp_j+\pp_j \qc_j$.\label{cor:p:3}
}
\end{cor}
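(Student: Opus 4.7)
The statement collects algebraic identities for the orthogonal projections of Definition \ref{def:p}. Since $\phe(t) = \Phi(t)\chie$ factorises, we have $p = \ket{\Phi}\bra{\Phi}\otimes\ket{\chie}\bra{\chie} = \pp\pc = \pc\pp$ as operators on $L^2(\R^3)=L^2(\R)\otimes L^2(\R^2)$, and the two factors commute because they act on different tensor slots. This single observation, combined with the orthogonality relations $\pp\qp=\qp\pp=0$, $\pc\qc=\qc\pc=0$ and idempotency $\pp^2=\pp$, $\pc^2=\pc$, $p^2=p$, is essentially all that is needed; everything then lifts to the $j$-th tensor factor on $L^2(\R^{3N})$ without modification. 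The proof is therefore a matter of bookkeeping, and I would organise it one item at a time, starting with (b) since (c) is built from it.

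For (b), I would verify the five identities directly. The identity $p_j=\pp_j\pc_j$ is the lifted form of the factorisation of $p$ above. The relation $p_j^{\Phi/\chie}p_j=p_j$ follows from idempotency together with the commutation $[\pp_j,\pc_j]=0$: e.g.\ $\pp_j p_j=\pp_j\pp_j\pc_j=\pp_j\pc_j=p_j$. The identity $q_j^{\Phi/\chie}q_j=q_j^{\Phi/\chie}$ is equivalent to $q_j^{\Phi/\chie}p_j=0$ by writing $q_j=\mathbbm{1}-p_j$; and $q_j^{\Phi/\chie}p_j=0$ follows from $\qp_j\pp_j=0$ after splitting $p_j=\pp_j\pc_j$. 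Finally $p_j^{\Phi/\chie}q_j=p_j^{\Phi/\chie}q_j^{\chie/\Phi}$ reduces to $\pp_j(\mathbbm{1}-\pp_j\pc_j)=\pp_j-\pp_j\pc_j=\pp_j\qc_j$, using $\pp_j^2=\pp_j$ and the commutativity.

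For (c), I would start from the resolution of the identity on the $j$-th factor,
\begin{equation*}
\mathbbm{1} = (\pp_j+\qp_j)(\pc_j+\qc_j) = \pp_j\pc_j+\pp_j\qc_j+\qp_j\pc_j+\qp_j\qc_j,
\end{equation*}
which is valid because the two factors commute. Subtracting $p_j=\pp_j\pc_j$ yields the first equality of (c). The second and third equalities are obtained by regrouping: $\pp_j\qc_j+\qp_j\qc_j=(\pp_j+\qp_j)\qc_j=\qc_j$, and symmetrically $\qp_j\pc_j+\qp_j\qc_j=\qp_j$.

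For (a), the binomial expansion
\begin{equation*}
\mathbbm{1} = \prod_{j=1}^N(p_j+q_j) = \sum_{J\subseteq\{1,\dots,N\}}\prod_{j\in J}q_j\prod_{l\notin J}p_l = \sum_{k=0}^N P_k
\end{equation*}
is legitimate because operators with different indices commute. For $\sum_{j=1}^N q_j P_k=kP_k$, I would act termwise on each summand of $P_k$ indexed by $J$ with $|J|=k$: if $j\in J$ the factor $q_j$ meets $q_j$ and is absorbed by idempotency, while if $j\notin J$ it meets $p_j$ and annihilates the summand since $q_jp_j=0$. Summing over $j$ leaves exactly $|J|=k$ surviving contributions per summand, giving $kP_k$. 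There is no real obstacle here — the only care needed is to keep track of which factors commute and to use the tensor structure of $p=\pp\pc$ at the right moment.
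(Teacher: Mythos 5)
Your proof is correct and follows essentially the same route as the paper's: factorisation $p=\pp\pc$ with commutation of the two tensor slots, idempotency and orthogonality, the binomial expansion $\mathbbm{1}=\prod_j(p_j+q_j)$, and a termwise count for $\sum_j q_j P_k=kP_k$. The only cosmetic difference is in the last identity of (a): you compute $\sum_j q_j P_k$ directly term by term, whereas the paper first records $\sum_j q_j=\sum_k kP_k$ and then extracts the per-$k$ statement by applying $P_kP_{k'}=\delta_{k,k'}P_k$; both amount to the same combinatorial count.
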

\begin{proof}
The first identity in (a) is due to the relation $p_j+q_j=\mathbbm{1}$. The second identity follows from the fact that
$$\sum\limits_{j=1}^Nq_j=\sum\limits_{j=1}^Nq_j\sum\limits_{k=0}^N P_k=\sum\limits_{k=0}^N\sum\limits_{j=1}^Nq_jP_k=\sum\limits_{k=0}^NkP_k$$
together with $P_kP_{k'}=\delta_{k,k'}P_k$. While part (b) is an immediate consequence of Definition~\ref{def:p}, part (c) is implied by
$q=1-p=(\pp+\qp)(\pc+\qc)-\pp\pc=\pp\qc+\qp\pc+\qp\qc.$
\end{proof}

\begin{definition}
For any function $f: \mathbb{N}_0\rightarrow\R_0^+$, define the operator $\hat{f}\in\mathcal{L}\left(L^2(\R^{3N})\right)$ by $$\hat{f}:= \sum\limits_{k=0}^N f(k)P_k$$
and, for any $d\in\mathbb{Z}$, the shifted operator $\hat{f}_d\in\mathcal{L}\left(L^2(\R^{3N})\right)$ by
$$\hat{f}_d:=\sum\limits_{j=-d}^{N-d} f(j+d)P_j.$$
We will in particular need the weight function $n$ defined by $n(k):=\sqrt{\tfrac{k}{N}}.$
\end{definition}

We will exclusively use the symbol $\,\hat{\cdot}\,$ to denote such weighted many-body operators. Besides, we will in the following use the abbreviations 
$$\llr{\cdot,\cdot}:=\lr{\cdot,\cdot}_{L^2(\R^{3N})},\quad \norm{\cdot}:=\norm{\cdot}_{L^2(\R^{3N})}
\quad\text{and}\quad\onorm{\cdot}:=\norm{\cdot}_{\mathcal{L}(L^2(\R^{3N}))}.$$

\begin{definition}
Define the functional $\alpha_f: L^2(\R^{3N})\times L^2(\R^{3})\rightarrow \R$ by
$$\alpha_f(\psi,\phe(t)):=\llr{\psi,\hat{f}\psi}=\sum\limits_{k=0}^Nf(k)\llr{\psi,P_k\psi}.$$
\end{definition}

The $\phe$-dependence of $\alpha_f$ is due to the $\phe$-dependence of the projectors $P_k$. As the operators $P_k$ project onto states with exactly $k$ particles outside the condensate, $\alpha_f$ is a measure of the relative number of such particles in the state $\psi$. We choose the weight $f$ increasing and $f(0)\approx0$, hence those parts of $\psi$ with a larger ``distance'' to the condensate contribute more to $\alpha_f(\psi,\phe)$. On the other hand, $P_0\psi$ --- the state where all particles are condensed into $\phe$ --- contributes hardly anything. The weight $\hat{n}$ is in particular distinguished because for any symmetric wavefunction $\psi\in L^2_+(\R^{3N})$,
$$\alpha_{n^2}(\psi,\phe(t))=\sum\limits_{k=0}^N\tfrac{k}{N}\llr{\psi,P_k\psi}
=\sum\limits_{k=0}^N\sum\limits_{j=1}^N\tfrac1N\llr{\psi,q_jP_k\psi}=\norm{q_1\psi}^2$$
by Corollary \ref{cor:p:1}.

\begin{lem}\label{lem:equivalences}
Let $\psi^N\in L^2_+(\R^{3N})$ be a sequence of normalised $N$-particle wavefunctions and let $\gamma_N^{(k)}$ be the sequence of corresponding $k$-particle reduced density matrices for some fixed $k\in\mathbb{N}$. Let $t\in\R$. Then the following statements are equivalent:
\lemit{
	\item $\lim\limits_{N\rightarrow\infty}\alpha_{n^a}(\psi^N,\phe(t))=0$ for some $a>0$,
	\item $\lim\limits_{N\rightarrow\infty}\alpha_{n^a}(\psi^N,\phe(t))=0$ for any $a>0$,
	\item $\lim\limits_{N\rightarrow\infty}\norm{\gamma^{(k)}_N-\ket{\phe(t)}\bra{\phe(t)}^{\otimes k}}_{\mathcal{L}(L^2(\R^{3k}))}=0 $ for all $ k\in\mathbb{N}$,
	\item $\lim\limits_{N\rightarrow\infty}\Tr_{L^2(\R^{3k})}\left|\gamma^{(k)}_N-\ket{\phe(t)}\bra{\phe(t)}^{\otimes k}\right|=0 $ for all $ k\in\mathbb{N}$,
	\item $\lim\limits_{N\rightarrow\infty}\Tr_{L^2(\R^{3})}\left|\gamma^{(1)}_N-\ket{\phe(t)}\bra{\phe(t)}\right|=0.$
}
\end{lem}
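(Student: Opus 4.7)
The plan rests on the key identity
$$\alpha_{n^2}(\psi,\phe(t))=\norm{q_1\psi}^2=1-\lr{\phe(t),\gamma^{(1)}_\psi\phe(t)}_{L^2(\R^3)},$$
already noted after the definition of $\alpha_f$: it directly couples the functional $\alpha_{n^2}$ to the matrix element of $\gamma^{(1)}_\psi$ that drives all standard notions of BEC convergence. I would split the five statements into two groups and bridge them through $\alpha_{n^2}$.

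The equivalence $(a)\Leftrightarrow(b)$ is purely combinatorial: since $n(k)=\sqrt{k/N}\in[0,1]$, for $b\geq a$ the pointwise bound $n^b\leq n^a$ gives $\alpha_{n^b}\leq\alpha_{n^a}$ immediately; for $b<a$ I would interpret $\alpha_{n^a}=\sum_k n^a(k)\llr{\psi^N,P_k\psi^N}$ as an expectation under the probability measure $\mu_k:=\llr{\psi^N,P_k\psi^N}$ and apply Jensen's inequality to the concave map $x\mapsto x^{b/a}$ on $[0,\infty)$, obtaining $\alpha_{n^b}\leq\alpha_{n^a}^{b/a}$. Either way, convergence to zero along one weight forces it along every other.

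The link to the reduced density matrices proceeds via the sharp two-sided comparison
$$1-\Tr_{L^2(\R^3)}(p\,\rho)\;\leq\;\Tr_{L^2(\R^3)}|\rho-p|\;\leq\;2\sqrt{2\,(1-\Tr_{L^2(\R^3)}(p\,\rho))}$$
valid for any trace-one positive operator $\rho$ and rank-one orthogonal projection $p$. Applied to $\rho=\gamma^{(1)}_{\psi^N}$ and combined with the key identity, this gives $(b)\Leftrightarrow(e)$ at $a=2$. The implication $(d)\Rightarrow(c)$ is the trivial bound $\onorm{\cdot}\leq\Tr|\cdot|$, and $(c)\Rightarrow(e)$ follows from $1-\Tr(p\gamma^{(1)}_{\psi^N})=\Tr(p(p-\gamma^{(1)}_{\psi^N}))\leq\onorm{p-\gamma^{(1)}_{\psi^N}}$ together with the right-hand estimate above.

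What remains --- and is the one genuinely nontrivial step --- is $(b)\Rightarrow(d)$ at arbitrary $k\geq 2$. I would establish an inequality of the schematic form
$$\Tr_{L^2(\R^{3k})}\big|\gamma^{(k)}_{\psi^N}-\ket{\phe(t)}\bra{\phe(t)}^{\otimes k}\big|\;\ls\;\sqrt{k\,\alpha_{n^2}(\psi^N,\phe(t))}$$
by inserting the resolution $\mathbbm{1}=\sum_\ell P_\ell$ on both sides of the difference, exploiting the bosonic symmetry of $\psi^N$ to move all projections onto the first $k$ tensor factors, and controlling every resulting $q_j$-contribution against $\norm{q_1\psi^N}$. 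The main obstacle is purely combinatorial: keeping track of how many $q_j$'s appear at each level and avoiding the loss of factors of $N$ under the partial trace. This is delicate but standard within Pickl's framework, and crucially does not interact with the strong-confinement aspects of the present problem, so it can be carried out exactly as in the fully three-dimensional case.
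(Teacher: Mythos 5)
Your outline is correct and essentially reconstructs the argument in the sources the paper cites for this lemma (\cite[Lemma~3.1]{keler2016}, \cite[Lemma~2.3]{knowles2010}); the paper itself gives no proof but delegates to those references, and your two-sided bound $1-\Tr(p\rho)\leq\Tr|\rho-p|\leq\sqrt{8(1-\Tr(p\rho))}$ is exactly the Knowles--Pickl inequality the paper invokes again in Lemma~\ref{lem:comparison:rates}. One remark on the step you flag as the genuinely nontrivial one, $(b)\Rightarrow(d)$: it is actually shorter than your sketch suggests. Since $p^{\otimes k}=\ket{\phe^{\otimes k}}\bra{\phe^{\otimes k}}$ is itself a rank-one projection, the same two-sided comparison applies at level $k$, and the operator inequality $\mathbbm{1}-p_1\cdots p_k\leq q_1+\cdots+q_k$ (proved by a one-line induction) gives
\[
\Tr\big(p^{\otimes k}\gamma^{(k)}_{\psi^N}\big)=\llr{\psi^N,p_1\cdots p_k\psi^N}\geq 1-k\,\norm{q_1\psi^N}^2=1-k\,\alpha_{n^2}(\psi^N,\phe(t)),
\]
so $\Tr\big|\gamma^{(k)}_{\psi^N}-p^{\otimes k}\big|\leq\sqrt{8k\,\alpha_{n^2}(\psi^N,\phe(t))}$ without any insertion of the $P_\ell$-resolution or the bookkeeping you anticipate; this closes the cycle $(b)\Rightarrow(d)\Rightarrow(c)\Rightarrow(e)\Rightarrow(a)\Rightarrow(b)$ cleanly.
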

\noindent For the proof of this lemma, we refer to \cite[Lemma 3.1]{keler2016} and to corresponding results in \cite{knowles2010,pickl2011}.
We will in the following choose the weight function $m:\mathbb{N}_0\rightarrow\R^+_0$ with
$$ m(k):=\begin{cases}
	n(k) & \text{for } k\geq N^{1-2\xi},\\
	\tfrac12\left(N^{-1+\xi}k+N^{-\xi}\right) & \text{else}\end{cases}$$ 
for some $\xi\in(0,\frac12)$, i.e.~$m$ equals $n$ with a smooth cut-off to soften the singularity of $\frac{\d n}{\d k}$ for small $k$. Clearly, $n(k)\leq m(k)\leq n(k)+\tfrac12N^{-\xi}$ for all $k\geq 0$ and $\xi\in(0,\frac12)$, hence $\alpha_m(\psi,\phe(t))\rightarrow0$ is equivalent to $\alpha_n(\psi,\phe(t))\rightarrow0$ and thus to all cases in Lemma~\ref{lem:equivalences} for any choice of the parameter $\xi$. 
For the actual proof, we will consider a modified version of this functional, namely
\begin{equation}\label{alpha}
\alpha_\xi(t):=\alpha_m(\psi^{N,\varepsilon}(t),\phe(t))+\big|E^{\psi^{N,\varepsilon}(t)}(t)-\mathcal{E}^{\Phi(t)}(t)\big|.
\end{equation}
The convergence of $\alpha_\xi(t)$ to zero is equivalent to \eqref{T1} and \eqref{T2}. Conversely, \eqref{A4:1} and \eqref{A4:2} imply $\alpha_\xi(0)\rightarrow 0$ as $(N,\varepsilon)\rightarrow(\infty,0)$. The main idea of the proof is therefore to derive a bound for $|\tfrac{\d}{\d t}\alpha_\xi(t)|$ (Propositions \ref{prop:alpha} and \ref{prop:gamma}), from which one obtains an estimate for $\alpha_\xi(t)$ by Grönwall's inequality. The propositions will be proven in Sections \ref{subsec:prop:alpha} and \ref{subsec:prop:gamma}. 
The estimate of the rate of the convergence of $\alpha_\xi(t)$ gained from this procedure translates to a rate for the reduced density matrices:

\begin{lem}\label{lem:comparison:rates}
For $\alpha_\xi(t)$ as in \eqref{alpha}, it holds that
\begin{align*}
&\Tr\left|\gamma_{\psi^{N,\varepsilon}(t)}^{(1)}-\ket{\phe(t)}\bra{\phe(t)}\right|\leq \sqrt{8\alpha_\xi(t)},\\
&\alpha_\xi(t)\leq \left|E^{\psi^{N,\varepsilon}(t)}(t)-\mathcal{E}^{\Phi(t)}(t)\right|+\sqrt{\Tr\left|\gamma_{\psi^{N,\varepsilon}(t)}^{(1)}-\ket{\phe(t)}\bra{\phe(t)}\right|}+\tfrac12 N^{-\xi}.
\end{align*}
\end{lem}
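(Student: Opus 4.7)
\medskip
\noindent\textbf{Plan.} Both bounds reduce to three ingredients: (i) an elementary pointwise comparison of the weight functions $n$, $m$, $n^2$ on $\{0,\ldots,N\}$; (ii) the identity $\norm{q_1\psi}^2=\alpha_{n^2}(\psi,\phe(t))$ already observed in the paragraph preceding Lemma \ref{lem:equivalences} (an immediate consequence of Corollary \ref{cor:p:1}); and (iii) the quantitative two-sided comparison underlying Lemma \ref{lem:equivalences}, proven in \cite[Lemma 3.1]{keler2016}, in the form
$$\norm{q_1\psi}^2\;\leq\;\Tr\big|\gamma^{(1)}_\psi-\ket{\phe}\bra{\phe}\big|\;\leq\;\sqrt{8}\,\norm{q_1\psi}.$$

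\medskip
\noindent The first task is to check that $n(k)^2\leq m(k)\leq n(k)+\tfrac12 N^{-\xi}$ holds for every $k\in\{0,\ldots,N\}$. On $\{k\geq N^{1-2\xi}\}$ this is immediate since $m(k)=n(k)$ and $n\leq 1$. On $\{k< N^{1-2\xi}\}$, the linear piece $\tfrac12(N^{\xi-1}k+N^{-\xi})$ meets $\sqrt{k/N}$ tangentially at the glue-point $k=N^{1-2\xi}$, and a one-line monotonicity check gives both inequalities (the lower bound $n^2\leq m$ requiring $N^\xi\geq 2$, which is guaranteed in the asymptotic regime). Integrating against $\sum_k\norm{P_k\psi^{N,\varepsilon}(t)}^2=1$ yields the operator-level estimates $\alpha_{n^2}\leq \alpha_m\leq \alpha_n+\tfrac12 N^{-\xi}$.

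\medskip
\noindent For the first claimed inequality, combine $\norm{q_1\psi^{N,\varepsilon}(t)}^2=\alpha_{n^2}\leq\alpha_m\leq\alpha_\xi(t)$ from (ii) and the weight comparison with the upper bound in (iii):
$$\Tr\big|\gamma^{(1)}_{\psi^{N,\varepsilon}(t)}-\ket{\phe(t)}\bra{\phe(t)}\big|\;\leq\;\sqrt{8}\,\norm{q_1\psi^{N,\varepsilon}(t)}\;\leq\;\sqrt{8\alpha_\xi(t)}.$$
For the second, start from $\alpha_m\leq \alpha_n+\tfrac12 N^{-\xi}$. A Cauchy--Schwarz step gives $\alpha_n(\psi,\phe)=\llr{\psi,\hat n \psi}\leq \norm{\hat n\psi}=\sqrt{\alpha_{n^2}}=\norm{q_1\psi}$, and the lower bound in (iii) gives $\norm{q_1\psi^{N,\varepsilon}(t)}\leq\sqrt{\Tr|\gamma^{(1)}_{\psi^{N,\varepsilon}(t)}-\ket{\phe(t)}\bra{\phe(t)}|}$. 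Thus
$$\alpha_m(\psi^{N,\varepsilon}(t),\phe(t))\;\leq\;\sqrt{\Tr\big|\gamma^{(1)}_{\psi^{N,\varepsilon}(t)}-\ket{\phe(t)}\bra{\phe(t)}\big|}+\tfrac12 N^{-\xi},$$
and adding $|E^{\psi^{N,\varepsilon}(t)}(t)-\mathcal{E}^{\Phi(t)}(t)|$ to both sides recovers $\alpha_\xi(t)$ on the left and produces the claimed right-hand side.

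\medskip
\noindent\textbf{Expected difficulty.} There is no substantive obstacle; the statement is routine bookkeeping. The only mildly delicate point is the pointwise comparison across the cutoff $k=N^{1-2\xi}$ that defines $m$, and this is a short calculus exercise.
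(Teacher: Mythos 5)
Your proof is correct and follows essentially the same route as the paper: the two-sided comparison between $\llr{\psi,\hat n^2\psi}=\norm{q_1\psi}^2$ and the trace-norm distance is the same key inequality (the paper cites Lemma~2.3 of \cite{knowles2010} rather than \cite{keler2016}), and the weight comparison plus Cauchy--Schwarz then gives both bounds exactly as in the paper. One minor remark: the restriction $N^\xi\geq 2$ you invoke for $n(k)^2\leq m(k)$ is superfluous, as this already follows for all $N$ from the cleaner chain $n(k)^2\leq n(k)\leq m(k)$ (using $n\leq 1$), which is what the paper's proof uses.
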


\begin{proof}
Let us abbreviate $\psi^{N,\varepsilon}(t)\equiv\psi$ and drop all time dependencies. 
\cite[Lemma 2.3]{knowles2010} implies 
$$\llr{\psi,\hat{n}^2\psi}\leq \Tr\big|\gamma_\psi^{(1)}-\ket{\phe}\bra{\phe}\big|\leq \sqrt{8\llr{\psi,\hat{n}^2\psi}}.$$
The first inequality is thus immediately clear as $n(k)^2\leq n(k)\leq m(k)$. For the second inequality, recall that $m(k)\leq n(k)+\frac12 N^{-\xi}$, hence
$$\llr{\psi,\hat{m}\psi}\leq \norm{\psi}\norm{\hat{n}\psi}+\tfrac12 N^{-\xi}
\leq\sqrt{\llr{\psi,\hat{n}^2\psi}}+\tfrac12 N^{-\xi}\leq \sqrt{\Tr\big|\gamma_\psi^{(1)}-\ket{\phe}\bra{\phe}\big|}+\tfrac12 N^{-\xi}.$$
\end{proof}

\begin{prop}\label{prop:alpha}
Under assumptions A1 -- A4,
\begin{align*}
\left|\tfrac{\d}{\d t}\alpha_\xi(t) \right|&\leq \big|\gamma_a(t)\big|+|\gamma_b(t)|\\
&\leq\big|\gamma_a(t)\big|+\big|\gamma_b^{(1)}(t)\big|+\big|\gamma_b^{(2)}(t)\big|+\big|\gamma_b^{(3)}(t)\big|
\end{align*}
for almost every $t\in\R$,
where
\begin{align}
	\gamma_a(t):=&\Big|\llr{\psi^{N,\varepsilon}(t),\dot{\Vp}(t,\z_1)\psi^{N,\varepsilon}(t)}-\lr{\Phi(t),\dot{\Vp}(t,(x,0))\Phi(t)}_{L^2(\R)}\Big|\label{gamma_a:1}\\
	&-2N\Im\llr{\psi^{N,\varepsilon}(t),q_1\hat{m}^a_{-1}\big(\Vp(t,\z_1)-\Vp(t,(x_1,0))\big)p_1\psi^{N,\varepsilon}(t)}, \label{gamma_a:2}\\\nonumber\\
	\gamma_b(t):=&- N(N-1)\Im\llr{\psi^{N,\varepsilon}(t),Z_\beta^{(12)}\hat{m}\psi^{N,\varepsilon}(t)}
\label{gamma_b},\\\nonumber\\
	\gamma_b^{(1)}(t):=&-2N(N-1)\Im\llr{\psi^{N,\varepsilon}(t),\qp_1\hat{m}^a_{-1}\pc_1p_2Z_\beta^{(12)}p_1p_2\psi^{N,\varepsilon}(t)},\label{gamma_b_1}\\
	\nonumber\\\
	\gamma_b^{(2)}(t):=&-N(N-1)\Im\llr{\qc_1\psi^{N,\varepsilon}(t),\left(2p_2\hat{m}^a_{-1}+q_2(1+\pc_2)\hat{m}^b_{-2}\right)\wbot p_1p_2\psi^{N,\varepsilon}(t)} 
		\label{gamma_b_2:1}\\
		&-2N(N-1)\Im\llr{\psi^{N,\varepsilon}(t),(\qc_1q_2+\qp_1\pc_1\qc_2)\hat{m}^a_{-1}\wbot p_1q_2\psi^{N,\varepsilon}(t)}\label{gamma_b_2:2}\\
		&-2N(N-1)\Im\llr{\psi^{N,\varepsilon}(t),\qp_1\qp_2\hat{m}^a_{-1}\pc_1\pc_2\wbot p_1\qc_2\psi^{N,\varepsilon}(t)},\label{gamma_b_2:3}\\
	\nonumber\\\
	\gamma_b^{(3)}(t) := &-N(N-1)\Im\llr{\psi^{N,\varepsilon}(t),\qp_1\qp_2\hat{m}^b_{-2}\pc_1\pc_2\wbot p_1p_2\psi^{N,\varepsilon}(t)}  \label{gamma_b_3:1}\\
		&-2N(N-1)\Im\llr{\psi^{N,\varepsilon}(t),\qp_1\qp_2\hat{m}^a_{-1}\pc_1\pc_2\wbot p_1\pc_2\qp_2\psi^{N,\varepsilon}(t)}\label{gamma_b_3:2}\\
		&+2N\bb\Im\llr{\psi^{N,\varepsilon}(t),q_1q_2\hat{m}^a_{-1}|\Phi(t,x_1)|^2p_1q_2\psi^{N,\varepsilon}(t)}.\label{gamma_b_3:3}
\end{align}
Here,
$$\wbot :=\wb(\z_1-\z_2) \quad \text{ and }\quad Z_\beta^{(12)}:=\wbot -\tfrac{\bb}{N-1}\left(|\Phi(t,x_1)|^2+|\Phi(t,x_2)|^2\right)$$
and $\hat{m}^a$, $\hat{m}^b$ denote the many-body operators corresponding to the weight functions
$$m^a(k):=m(k)-m(k+1) \quad \text{ and }\quad m^b(k):=m(k)-m(k+2).$$
\end{prop}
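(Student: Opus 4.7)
My plan is to differentiate $\alpha_\xi(t) = \llr{\psi,\hat m\psi} + |E^\psi - \mathcal{E}^\Phi|$ in time and reorganise the result; I abbreviate $\psi \equiv \psi^{N,\varepsilon}(t)$ throughout and suppress further time-arguments. For the energy piece, the Schrödinger equation and bosonic symmetry give $\tfrac{\d}{\d t}E^\psi = \tfrac1N\llr{\psi,\dot H_\beta\psi} = \llr{\psi,\dot{\Vp}(t,\z_1)\psi}$, since $\Vp$ is the only time-dependent piece of $H_\beta$. Differentiating $\mathcal{E}^\Phi$ using $i\dot\Phi = h(t)\Phi$ together with self-adjointness of $h(t)$ (which makes $\lr{\dot\Phi,h(t)\Phi} = -i\|h(t)\Phi\|^2$ purely imaginary) yields $\tfrac{\d}{\d t}\mathcal{E}^\Phi = \lr{\Phi,\dot{\Vp}(t,(x,0))\Phi}$. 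Taking an absolute value of the difference reproduces line \eqref{gamma_a:1} of $\gamma_a$.

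For the projection piece, I introduce the one-body operator $h^\phe(t) := -\Delta + \tfrac{1}{\varepsilon^2}V^\perp(y/\varepsilon) + \Vp(t,(x,0)) + \bb|\Phi(t,x)|^2$ on $L^2(\R^3)$. Since $\phe = \Phi\chie$ satisfies $i\partial_t\phe = (h^\phe - E_0/\varepsilon^2)\phe$, I get $\partial_t p = -i[h^\phe,p]$, and a Leibniz-rule calculation on $P_k$ then yields $\partial_t\hat m = -i[\tilde H,\hat m]$ with $\tilde H := \sum_{j=1}^N h^\phe_j$. Combining with the Schrödinger equation for $\psi$, I obtain
\begin{equation*}
\tfrac{\d}{\d t}\llr{\psi,\hat m\psi} = -2\Im\llr{\psi,(H_\beta - \tilde H)\hat m\psi}.
\end{equation*}
A direct computation identifies $H_\beta - \tilde H = \sum_j W_j + \sum_{i<j}Z_\beta^{(ij)}$ with $W_j := \Vp(t,\z_j) - \Vp(t,(x_j,0))$, so bosonic symmetry reduces this to
\begin{equation*}
\tfrac{\d}{\d t}\alpha_m = -2N\Im\llr{\psi,W_1\hat m\psi} - N(N-1)\Im\llr{\psi,Z_\beta^{(12)}\hat m\psi}.
\end{equation*}
The second summand is by definition $\gamma_b$. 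For the first, I insert $1 = p_1 + q_1$ on both sides and use the shift rules $\hat m p_1 = p_1\hat m^{(1)}$ and $\hat m q_1 = q_1\hat m^{(1)}_{+1}$: the diagonal sandwiches $p_1(\cdot)p_1$ and $q_1(\cdot)q_1$ commute with the relevant shifts of $\hat m$, hence produce real inner products and drop from $\Im$, while the two surviving off-diagonal pieces combine via $\llr{q_1\psi,\cdot\,p_1\psi} = \overline{\llr{p_1\psi,\cdot\,q_1\psi}}$ and a re-indexing of the shift into $-2N\Im\llr{\psi,q_1\hat m^a_{-1}W_1 p_1\psi}$, matching line \eqref{gamma_a:2}.

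To decompose $\gamma_b$ into $\gamma_b^{(1)},\gamma_b^{(2)},\gamma_b^{(3)}$, I apply the two-particle analogue of the shift identity,
\begin{equation*}
\hat m = p_1 p_2\hat m^{(12)} + (p_1 q_2 + q_1 p_2)\hat m^{(12)}_{+1} + q_1 q_2\hat m^{(12)}_{+2},
\end{equation*}
and insert $1 = (p_1 + q_1)(p_2 + q_2)$ on the left, using the $1\leftrightarrow 2$ bosonic symmetry of $Z_\beta^{(12)}$ and $\psi$ to collapse equivalent contributions. Fully diagonal pairings yield real inner products and drop from $\Im$. Each surviving $q_j$ is then split using Corollary \ref{cor:p:3} as $q_j = \qp_j\pc_j + \pp_j\qc_j + \qp_j\qc_j$, which cleanly separates longitudinal ($\qp$) from transverse ($\qc$) excitations. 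Grouping the surviving off-diagonal contributions by excitation type produces $\gamma_b^{(1)}$ for a single pure longitudinal excitation \eqref{gamma_b_1}; $\gamma_b^{(2)}$ for terms containing at least one transverse excitation \eqref{gamma_b_2:1}--\eqref{gamma_b_2:3}; and $\gamma_b^{(3)}$ for two pure longitudinal excitations combined with the mean-field counter-term \eqref{gamma_b_3:1}--\eqref{gamma_b_3:3}.

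I expect the purely algebraic bookkeeping of this last step to be the main obstacle. After the two-particle expansion there are on the order of a dozen surviving off-diagonal contributions, and one must verify that the Born counter-term $-\bb(|\Phi(t,x_1)|^2 + |\Phi(t,x_2)|^2)/(N-1)$ inside $Z_\beta^{(12)}$ pairs with specific pieces of $\wbot$ so as to produce the two distinct weight-shifts $\hat m^a_{-1}$ and $\hat m^b_{-2}$ appearing in the decomposition, with only the single residual $\bb|\Phi(t,x_1)|^2$-contribution \eqref{gamma_b_3:3} surviving. The telescoping relation $m^b(k) = m^a(k) + m^a(k+1)$ and the interplay between the one-body and two-body shift identities will drive this rearrangement. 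Once the reorganisation is complete, the remaining work (deferred to the subsequent propositions) is to bound each of $|\gamma_a|,|\gamma_b^{(1)}|,|\gamma_b^{(2)}|,|\gamma_b^{(3)}|$ separately in a Grönwall-compatible fashion.
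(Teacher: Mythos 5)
Your proposal follows the same route as the paper's proof in Section~\ref{subsec:prop:alpha}: compute $\tfrac{\d}{\d t}\llr{\psi,\hat{m}\psi}$ from the Heisenberg-type equation for $\hat m$ together with the $N$-body Schr\"odinger equation, reduce by bosonic symmetry to the one- and two-body operators $W_1$ and $Z_\beta^{(12)}$, expand via projections and shift identities, drop the real diagonal sandwiches from $\Im$, and split $q=\qc+\qp\pc$ to separate longitudinal from transverse excitations. The paper writes this same step as the commutator $i\llr{\psi,[H_\beta-\tilde H,\hat m]\psi}$ (equivalent to your $-2\Im\llr{\psi,(H_\beta-\tilde H)\hat m\psi}$) and carries out explicitly the algebraic bookkeeping you sketch but defer, landing exactly on the grouping you predict, including the residual counter-term \eqref{gamma_b_3:3}.
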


\noindent The first term, $\gamma_a$, merely contains one-body contributions, i.e.~interactions between the bosons and the external field $\Vp$, and is therefore the easiest to estimate. Note that \eqref{gamma_a:1} is small only if the system is in a state $\psi^{N,\varepsilon}$ close to the condensate with condensate wavefunction $\phe=\Phi\chie$ (see Lemma \ref{lem:psi-Phi}). The term $\gamma_b$ handles the two-body contributions, i.e.~interactions among bosons.
The expressions $\gamma_b^{(1)}$ and $\gamma_b^{(3)}$ contain the quasi one-dimensional interaction $\overline{w}(x_1-x_2)$ defined by $\pc_1\pc_2\wb(\z_1-\z_2)\pc_1\pc_2=:\overline{w}(x_1-x_2)\pc_1\pc_2$ (see Definition \ref{def:1d:eff:int}), where the transverse degrees of freedom are integrated out. These terms are comparable to the corresponding three-dimensional terms in \cite{pickl2015}. $\gamma_b^{(2)}$ has no equivalent in the situation without strong confinement as it collects the remainders that arise upon approximating the three-dimensional interaction $\wb$ with the quasi one-dimensional interaction $\overline{w}$. 

$\gamma_b^{(1)}$ is physically most relevant because it depends on the difference between the quasi one-dimensional interaction $\overline{w}$ and the one-dimensional effective potential $\bb|\Phi(t)|^2$. In other words, this term is small if and only if \eqref{NLS} is the right effective equation, in particular with the correct coupling parameter $\bb$. Note that for this term it is crucial that the sequence $(N,\varepsilon)$ is moderately confining, i.e.~that $\mu/\varepsilon\rightarrow0$. 

For $\gamma_b^{(2)}$ to be small, we require in particular the admissibility of the sequence $(N,\varepsilon)$, i.e.~that $\varepsilon^2/\mu\rightarrow0$. The other key tool for the estimate is the observation that due to the strong confinement, it is unlikely that a particle is excited in the transverse directions. This implies in particular that $\norm{\qc_1\psi^{N,\varepsilon}(t)}=\mathcal{O}(\varepsilon)$ (Lemma \ref{lem:a_priori}).

The estimate of $\gamma_b^{(3)}$ relies on a bound for the kinetic energy of the part of $\psi^{N,\varepsilon}(t)$ with at least one particle orthogonal to $\Phi(t)$, i.e.~a bound for $\norm{\partial_{x_1}\qp_1\psi^{N,\varepsilon}(t)}$ (Lemma \ref{lem:E_kin}). The proof of this bound again involves the splitting of the interaction $\wb$ into a quasi one-dimensional part $\overline{w}$ and remainders. Hence for $\gamma_b^{(3)}$ to be small, we require both moderate confinement and the admissibility of the sequence $(N,\varepsilon)$.
The last line \eqref{gamma_b_3:3} is a remainder which is easily controlled.

\begin{prop}\label{prop:gamma}
Let $\mu$ be sufficiently small. Under assumptions A1 -- A4, $\gamma_a$ to $\gamma_b^{(3)}$ from Proposition~\ref{prop:alpha} are bounded by
\begin{align*}
	\big|\gamma_a(t)\big| & \ls \left(\alpha_\xi(t)+\varepsilon\right)\mathfrak{e}^3(t)\vphantom{\Bigg(},\\
	\big|\gamma_b^{(1)}(t)\big| & \ls \left(\tfrac{\mu}{\varepsilon}+N^{-1}+(\tfrac{N}{\varepsilon^2})^{-\eta}\right)\mathfrak{e}^2(t),\\
	\big|\gamma_b^{(2)}(t)\big| & \ls \left(\tfrac{\varepsilon^2}{\mu}\right)^\frac12\mathfrak{e}^3(t) ,\\
	\big|\gamma_b^{(3)}(t)\big| & \ls \Big(\alpha_\xi(t)+\tfrac{\mu}{\varepsilon}
	+\left(\tfrac{\varepsilon^2}{\mu}\right)^\frac12+N^{-\frac{\beta_1}{2}}+N^{-1+\beta_1+\xi}+(\tfrac{N}{\varepsilon^2})^{-\eta}\Big)\mathfrak{e}(t)\exp\left\{\mathfrak{e}^2(t)+\int_0^t\mathfrak{e}^2(s)\d s\right\}
\end{align*}
for any $\xi\in(0,\frac{\beta}{4}]$, any $\beta_1\in(0,\beta]$ and with $\eta$ from Definition \ref{def:W} and $\mathfrak{e}(t)$ as in \eqref{def:e}.
\end{prop}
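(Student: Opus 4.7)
The plan is to bound each of $\gamma_a$, $\gamma_b^{(1)}$, $\gamma_b^{(2)}$, $\gamma_b^{(3)}$ separately, exploiting throughout the dimensional reduction induced by the transverse ground state $\chie$ (localized on scale $\varepsilon$), the a priori estimate $\norm{\qc_1\psi^{N,\varepsilon}(t)}=O(\varepsilon)$ from Lemma \ref{lem:a_priori}, and, for the most delicate term, the kinetic energy bound on $\norm{\na_{x_1}\qp_1\psi^{N,\varepsilon}(t)}$ from Lemma \ref{lem:E_kin}. The operator norms of $\hat{m}^a_{-1}$ and $\hat{m}^b_{-2}$ are controlled by the smoothness of the cut-off in $m$, and $\norm{\Phi(t)}_\infty$, $\norm{\na\Phi(t)}_\infty$, $\norm{\Phi(t)}_{H^2}$ can be absorbed in powers of $\mathfrak{e}(t)$.

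For $\gamma_a$, I would handle the two lines \eqref{gamma_a:1} and \eqref{gamma_a:2} in parallel. In \eqref{gamma_a:1}, insert $1=p_1+q_1$ on both sides of $\llr{\psi,\dot{\Vp}(t,z_1)\psi}$. The mean-value bound $|\dot{\Vp}(t,z)-\dot{\Vp}(t,(x,0))|\leq|y|\norm{\na_y\dot{\Vp}}_\infty$ combined with $\norm{y\chie(y)}_{L^2(\R^2)}=O(\varepsilon)$ reduces each term to its value at $y=0$ with an $\varepsilon$-error; the $pp$ piece then cancels the second half of \eqref{gamma_a:1}, while the $pq,qp,qq$ pieces are bounded by $\alpha_\xi^{1/2}\cdot\mathfrak{e}(t)$ via $\norm{q_1\psi}^2\leq\alpha_\xi$. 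Line \eqref{gamma_a:2} is treated identically: the factor $p_1$ on the right introduces $\chie(y_1)$, and $|\Vp(t,z_1)-\Vp(t,(x_1,0))|\lesssim|y_1|\cdot\mathfrak{e}(t)$ contributes the $\varepsilon$ gain.

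For $\gamma_b^{(1)}$, which actually produces the NLS nonlinearity: use $p_j=\pp_j\pc_j$ to write the matrix element as $\llr{\psi,\qp_1\hat{m}^a_{-1}\pc_1p_2\wbot p_1p_2\psi}$ and integrate the transverse variable $y_2$ against $|\chie(y_2)|^2$, replacing $\wbot$ by the effective one-dimensional potential $\overline{w}(x_1-x_2)$ with $\pc_2\wbot\pc_2=\overline{w}(x_1-x_2)\pc_2$. By Definition~\ref{def:W}(d), $(N-1)\int\overline{w}(x)\,\d x$ equals $b_{N,\varepsilon}+o(1)$ and converges to $\bb$ at rate $(N/\varepsilon^2)^{-\eta}$, while $\supp\overline{w}\subseteq[-C\mu,C\mu]$. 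Expanding $|\Phi(t,x_2)|^2=|\Phi(t,x_1)|^2+O(\mu\,\norm{\Phi'(t)}_\infty)$ inside $\supp\overline{w}$ is where moderate confinement enters, producing the $\mu/\varepsilon$ relative error and reducing $\overline{w}(x_1-x_2)$ effectively to $\tfrac{\bb}{N-1}|\Phi(t,x_1)|^2$. The symmetric $|\Phi(t,x_2)|^2$-contribution of $Z_\beta^{(12)}$ is treated identically.

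For $\gamma_b^{(2)}$, each of \eqref{gamma_b_2:1}--\eqref{gamma_b_2:3} contains at least one $\qc_j$, which lets me apply $\norm{\qc_j\psi^{N,\varepsilon}}=O(\varepsilon)$. The remaining obstacle is the singularity of $\wbot$, resolved by an integration-by-parts trick: write $\wbot=\Delta U$ for an auxiliary function $U$ with $\norm{U}_\infty\sim\mu^2\norm{\wbot}_\infty$, moving the Laplacian onto $\psi^{N,\varepsilon}$ so that a kinetic-energy norm is traded for a much smoother potential. Combining $\norm{\qc\psi}\cdot\norm{\na\psi}\cdot\norm{U}_\infty$ with the combinatorial factor $N^2$ and the energy bounds \eqref{bounds:E} produces the stated $(\varepsilon^2/\mu)^{1/2}\mathfrak{e}^3(t)$.

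For $\gamma_b^{(3)}$, the hardest case and the main obstacle of the proof: line \eqref{gamma_b_3:3} is controlled directly by $\alpha_\xi\cdot\mathfrak{e}^2(t)$ using $\norm{\Phi}_\infty\lesssim\mathfrak{e}(t)$. For \eqref{gamma_b_3:1} and \eqref{gamma_b_3:2}, I would first pass from $\wbot$ to its quasi one-dimensional effective form $\overline{w}(x_1-x_2)$, the remainder being estimable by the $\gamma_b^{(2)}$-type bound (contributing the $\mu/\varepsilon$ and $(\varepsilon^2/\mu)^{1/2}$ terms). What remains is a one-dimensional matrix element with the projector $\qp_1\qp_2$ in front; here I would integrate by parts once in $x_1$ (using $\overline{w}=\partial_x G$ for an antiderivative with $\norm{G}_\infty\lesssim\mu^{\beta_1/\beta}$ up to optimization) and apply Cauchy--Schwarz to produce $\norm{\na_{x_1}\qp_1\psi^{N,\varepsilon}(t)}$, which is exactly what Lemma \ref{lem:E_kin} bounds. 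The double exponential $\exp\{\mathfrak{e}^2(t)+\int_0^t\mathfrak{e}^2(s)\d s\}$ appears because that kinetic-energy bound is itself obtained through a Grönwall argument. The exponents $\beta_1$ and $\xi$ arise from balancing the power of $\mu$ gained by integration by parts against the $N$-power lost by invoking the kinetic-energy estimate, and from the smoothing parameter in $m$.

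The main obstacle is $\gamma_b^{(3)}$: because $\wbot$ is too singular for a direct expansion, one is forced into an integration by parts that introduces the kinetic energy, which must then be propagated in time via a separate Grönwall estimate, producing the double exponential. Carefully balancing all of the small parameters $\mu/\varepsilon$, $\varepsilon^2/\mu$, $N^{-\beta_1/2}$, $N^{-1+\beta_1+\xi}$ and $(N/\varepsilon^2)^{-\eta}$ against one another is the main bookkeeping challenge.
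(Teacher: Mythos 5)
Your overall architecture — treating $\gamma_a,\gamma_b^{(1)},\gamma_b^{(2)},\gamma_b^{(3)}$ separately, relying on $\norm{\qc_1\psi}\ls\varepsilon\mathfrak{e}(t)$, resolving the singularity of $\wbot$ by writing it as a Laplacian of a less singular potential and integrating by parts, and invoking the kinetic-energy estimate of Lemma~\ref{lem:E_kin} for $\gamma_b^{(3)}$ — matches the paper's strategy. However, there are several concrete gaps in the argument as written.

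\emph{Cross terms in $\gamma_a$.} You bound the $pq$ and $qp$ pieces by $\norm{p_1\psi}\norm{q_1\psi}\ls\sqrt{\alpha_\xi(t)}$, but the claimed estimate has $\alpha_\xi(t)$ and a square root would not close the Grönwall argument in the form used in the proof of Theorem~\ref{thm}. The paper recovers the full power by inserting $\hat n^{1/2}\hat n^{-1/2}$ and commuting with Lemma~\ref{lem:commutators:2} (this is the content of Lemma~\ref{lem:psi-Phi}): $|\llr{\hat n_1^{1/2}p_1\psi,f(x_1)\hat n^{-1/2}q_1\psi}|\ls\norm{f}_{L^\infty}\llr{\psi,\hat n\psi}$. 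This device is central to the Pickl scheme and is needed here.

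\emph{Source of $\mu/\varepsilon$ in $\gamma_b^{(1)}$.} You attribute the $\mu/\varepsilon$ factor to the expansion $|\Phi(x_2)|^2=|\Phi(x_1)|^2+\mathcal{O}(\mu\norm{\Phi'}_\infty)$, but that gives only $\mu\,\mathfrak{e}(t)$ with no $\varepsilon^{-1}$. You also assert that Definition~\ref{def:W}(d) gives $(N-1)\int\overline{w}\approx b_{N,\varepsilon}$; it does not — (d) concerns $|b_{N,\varepsilon}-\bb|$, not $\int\overline{w}$. In fact $N\int\overline{w}$ and $b_{N,\varepsilon}=N\norm{\wb}_{L^1}\int|\chi|^4$ differ because $\overline{w}$ involves the convolution $\int|\chie(y_1)|^2|\chie(y_1-y)|^2\d y_1$ rather than $\int|\chie(y_1)|^4\d y_1$, and the discrepancy is of order $\mu\norm{\nabla|\chie|^2}/\norm{\chie}^2\sim\mu/\varepsilon$. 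The paper packages this as the Taylor expansion of $|\phe|^2$ (not $|\Phi|^2$) with $\norm{\nabla|\phe|^2}_{L^2}\ls\mathfrak{e}(t)\varepsilon^{-2}$, yielding precisely the $\mu/\varepsilon$ in \eqref{eqn:norm:Gamma}. This is exactly where moderate confinement enters, but through the transverse profile, not through $\Phi$.

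\emph{Boundary terms in the integration by parts.} You write $\wbot=\Delta U$ (and $\overline{w}=\partial_x G$ in the 1D case) and move a derivative onto $\psi$. The auxiliary function $\he$ is defined on the ball $B_\varepsilon$, so a naive integration by parts produces boundary terms on $\partial B_\varepsilon$. The paper avoids these by multiplying by a smooth cut-off $\te$ with $\te\equiv1$ on $B_\mu\supseteq\supp\wb$ and $\te\equiv0$ outside $B_\varepsilon$, and controlling $\nabla\te$ via Lemma~\ref{lem:theta}; the analogous $\tb$ is used in the 1D case. This device is non-trivial and without it your estimate is incomplete. Similarly, the quantitative claim $\norm{G}_\infty\ls\mu^{\beta_1/\beta}$ is off: the paper has $\norm{\tfrac{\d}{\d x}\hbo}_{L^\infty}\ls N^{-1}$ (Lemma~\ref{lem:hbar}), coming from $\norm{\overline{w}}_{L^1}\ls N^{-1}$. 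The gain that makes the $\gamma_b^{(2)}$ estimate close is really in $\norm{\nabla\he}_{L^2}\ls N^{-1}\mu^{-1/2}\varepsilon^2$ paired with $\norm{\chie}_{L^\infty}\sim\varepsilon^{-1}$ and $\norm{\nabla\psi}\ls\varepsilon^{-1}$, producing $(\varepsilon^2/\mu)^{1/2}$ only after also invoking Lemma~\ref{lem:GammaLambda:and:O_12} to gain a factor $N^{-1/2}$ from symmetry; this Cauchy--Schwarz-with-three-particles step is missing from your sketch.

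\emph{Structure of $\gamma_b^{(3)}$.} You describe passing from $\wbot$ to $\overline{w}$ and estimating a remainder by $\gamma_b^{(2)}$-type bounds; but that separation is already built into Proposition~\ref{prop:alpha} — lines \eqref{gamma_b_3:1}--\eqref{gamma_b_3:2} contain $\pc_1\pc_2\wbot p_1\cdots$ which is exactly $\overline{w}\pc_1\pc_2$ already, and there is no leftover remainder to estimate at this stage. Also, the double exponential enters primarily through $\norm{\Phi(t)}_{H^2}$ from Lemma~\ref{lem:Phi}, which appears both in $\onorm{(\tfrac{\d}{\d x_1}\hboot)\partial_{x_1}\pp_1}$ and in Lemma~\ref{lem:E_kin}; attributing it entirely to a ``Grönwall argument for the kinetic energy'' blurs this dependence.

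Because the overall route is correct, these are repairable, but each of the first three is a genuine hole that a careful reader would flag.
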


\noindent 
The estimate of $\gamma_b^{(1)}$ is essentially the same as in the case $\beta\in(0,\frac13)$ in \cite{keler2016}. $\gamma_a$ must be treated in a different way because the confinement is by a potential and not via Dirichlet boundary conditions. For the terms $\gamma_b^{(2)}$ and $\gamma_b^{(3)}$, the argument from \cite{keler2016} does not work because the interaction becomes too singular for  $\beta>\frac13$. To cope with this, we follow an idea from \cite{pickl2015}: we write the interaction as $\wb=\Delta\he$ for some function $\he$ and integrate by parts. $\na\he$ is less singular, and the expressions resulting from $\na$ acting on $\psi^{N,\varepsilon}(t)$ can be controlled with Lemma \ref{lem:a_priori} (or the refined version, Lemma \ref{lem:E_kin}). 

Our strategy differs from \cite{pickl2015} in a relevant point: in \cite{pickl2015}, the interaction $\wb$ is approximated by a potential $U_\bo$ with a softer scaling behaviour ($\bo<\frac13$). The author first proves bounds for $\beta<\frac13$, the second step it to estimate the contribution from the difference $\wb-U_\bo$ using integration by parts.
Instead of these two steps, we define $\he$ as the solution of $\Delta\he=\wb$ on a ball with Dirichlet boundary conditions and integrate by parts on the ball. To prevent the emergence of boundary terms, we use smooth step functions whose derivatives can be controlled. This mathematical trick enables us to avoid the separate estimate for $\beta<\frac13$. 

The control of the kinetic energy (Lemma \ref{lem:E_kin}) required for the integration by parts in $\gamma_b^{(3)}$ is also different from the corresponding Lemma 5.2 in \cite{pickl2015}. 
Instead of following that path, we extend ideas from \cite[Lemma 4.7]{keler2016} and \cite[Lemma 4.6]{pickl2015_arxiv} and estimate the part of the kinetic energy in the free direction.
Besides, we use with Lemma \ref{lem:GammaLambda} a slightly sharpened version of \cite[Lemma 4.3]{pickl2015}. 
\\

\noindent\emph{Proof of Theorem \ref{thm}.}
From Propositions \ref{prop:alpha} and \ref{prop:gamma}, we gather that for sufficiently small $\mu$, 
$$\left|\tfrac{\d}{\d t}\alpha_\xi(t)\right|\ls C(t)
\big(\alpha_\xi(t)+R_{\xi,\bo,\eta}(N,\varepsilon)\big)$$
for almost every $t\in\R$, where
\begin{align}
C(t)&:=\mathfrak{e}(t)\exp\left\{\mathfrak{e}^2(t)+\int_0^t\mathfrak{e}^2(s)\d s\right\},\label{C}\\
R_{\xi,\bo,\eta}(N,\varepsilon)&:=\tfrac{\mu}{\varepsilon}+\left(\tfrac{\varepsilon^2}{\mu}\right)^\frac12+N^{-\frac{\bo}{2}}+N^{-1+\bo+\xi}+(\tfrac{N}{\varepsilon^2})^{-\eta}.\nonumber
\end{align}
The differential version of Grönwall's inequality yields
$$ \alpha_\xi(t)\leq\big(\alpha_\xi(0)+R_{\xi,\bo,\eta}(N,\varepsilon)\big)\exp\left\{2\int_0^tC(s)\d s\right\}$$
for all $t\in\R$.
Due to assumption A4 and by Lemma \ref{lem:equivalences}, $\lim_{(N,\varepsilon)\rightarrow(\infty,0)}\alpha_\xi(0)=0$ and
$R_{\xi,\bo,\eta}(N,\varepsilon)$ vanishes in the limit $(N,\varepsilon)\rightarrow(\infty,0)$ for $\bo\in(0,\beta]$ and $\xi\in(0,\frac{\beta}{4}]$, $\xi<1-\bo$, because the sequence $(N,\varepsilon)$ is by assumption A4 admissible and moderately confining.  Again by Lemma \ref{lem:equivalences}, this implies \eqref{T1} and \eqref{T2} for any $t\in\R$. 
\qed

\begin{cor}\label{cor:rate}
Let $t\in\R$. Then
\begin{align*}
\Tr\left|\gamma^{(1)}_{\psi^{N,\varepsilon}(t)}-\ket{\phe(t)}\bra{\phe(t)}\right|\ls
\left(A(0)+\tfrac{\mu}{\varepsilon}+\left(\tfrac{\varepsilon^2}{\mu}\right)^{\frac12}
+N^{-\frac{\beta}{4}}+(\tfrac{N}{\varepsilon^2})^{-\eta}\right)^\frac12 
\exp\left\{\int_0^tC(s)\d s\right\}
\end{align*}
for $C(t)$ as in \eqref{C} and where
$$A(0):=\left|E^{\psi^{N,\varepsilon}_0}(0)-\mathcal{E}^{\Phi_0}(0)\right|+\sqrt{\Tr\Big|\gamma^{(1)}_{\psi_0^{N,\varepsilon}}-\ket{\phe_0}\bra{\phe_0}\Big|}.$$
\end{cor}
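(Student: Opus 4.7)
The plan is to chain together three ingredients that have already been laid out in the excerpt: the Grönwall bound on $\alpha_\xi(t)$ obtained in the proof of Theorem \ref{thm}, the two comparison estimates of Lemma \ref{lem:comparison:rates}, and a suitable choice of the free parameters $\bo$ and $\xi$ so that the abstract rate $R_{\xi,\bo,\eta}(N,\varepsilon)$ collapses into the explicit quantity stated in the corollary. No new analytic input is needed; everything reduces to bookkeeping.

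First, I would copy the inequality obtained in the proof of Theorem \ref{thm}: for any $\bo\in(0,\beta]$ and $\xi\in(0,\tfrac{\beta}{4}]$ subject to $\xi<1-\bo$, and for all sufficiently small $\mu$,
$$
\alpha_\xi(t)\le\bigl(\alpha_\xi(0)+R_{\xi,\bo,\eta}(N,\varepsilon)\bigr)\exp\!\left\{2\int_0^tC(s)\,\d s\right\},
$$
with $R_{\xi,\bo,\eta}(N,\varepsilon)=\tfrac{\mu}{\varepsilon}+(\tfrac{\varepsilon^2}{\mu})^{1/2}+N^{-\bo/2}+N^{-1+\bo+\xi}+(N/\varepsilon^2)^{-\eta}$. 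The first inequality of Lemma \ref{lem:comparison:rates} applied at time $t$ then gives
$$
\Tr\bigl|\gamma^{(1)}_{\psi^{N,\varepsilon}(t)}-\ket{\phe(t)}\bra{\phe(t)}\bigr|\le\sqrt{8\alpha_\xi(t)}\ls\sqrt{\alpha_\xi(0)+R_{\xi,\bo,\eta}(N,\varepsilon)}\;\exp\!\left\{\int_0^tC(s)\,\d s\right\},
$$
while the second inequality of the same lemma, applied at $t=0$, translates the initial $\alpha_\xi$ into trace-norm data:
$$
\alpha_\xi(0)\le\bigl|E^{\psi^{N,\varepsilon}_0}(0)-\mathcal{E}^{\Phi_0}(0)\bigr|+\sqrt{\Tr\bigl|\gamma^{(1)}_{\psi_0^{N,\varepsilon}}-\ket{\phe_0}\bra{\phe_0}\bigr|}+\tfrac12 N^{-\xi}=A(0)+\tfrac12 N^{-\xi}.
$$

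The last and only delicate step is the parameter optimisation, which is also where I expect the main (mild) obstacle to lie: one has to verify that a single pair $(\bo,\xi)$ simultaneously satisfies all constraints from Proposition \ref{prop:gamma} while making every $N$-dependent contribution bounded by $N^{-\beta/4}$. I would choose $\bo=\beta/2$ and $\xi=\beta/4$. Then $\bo\in(0,\beta]$ and $\xi\in(0,\beta/4]$ are immediate, and the constraint $\xi<1-\bo$ becomes $3\beta/4<1$, which holds since $\beta<1$. With this choice $N^{-\bo/2}=N^{-\xi}=N^{-\beta/4}$, and $N^{-1+\bo+\xi}=N^{-1+3\beta/4}\le N^{-\beta/4}$ because $\beta\le1$. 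Consequently
$$
\alpha_\xi(0)+R_{\xi,\bo,\eta}(N,\varepsilon)\ls A(0)+\tfrac{\mu}{\varepsilon}+\Bigl(\tfrac{\varepsilon^2}{\mu}\Bigr)^{1/2}+N^{-\beta/4}+\Bigl(\tfrac{N}{\varepsilon^2}\Bigr)^{-\eta},
$$
and inserting this into the trace-norm bound above yields the claim. Note that the exponent in the corollary is $\int_0^tC(s)\d s$ rather than $2\int_0^tC(s)\d s$ because taking the square root halves the exponent in the Grönwall factor.
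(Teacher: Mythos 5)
Your proof is correct and follows the same route the paper itself takes: the paper's proof consists of the single sentence ``This follows from Lemma~\ref{lem:comparison:rates} after optimisation over $\xi$ and $\bo$,'' and you have simply written out the chain explicitly, with the right choice $\bo=\beta/2$, $\xi=\beta/4$ verifying all constraints ($\bo\in(0,\beta]$, $\xi\in(0,\beta/4]$, $\xi<1-\bo$) and collapsing $N^{-\bo/2}$, $N^{-\xi}$, and $N^{-1+\bo+\xi}$ into $N^{-\beta/4}$.
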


\begin{proof}
This follows from Lemma \ref{lem:comparison:rates} after optimisation over $\xi$ and $\bo$.
\end{proof}

\section{Proofs of the propositions}\label{sec:prop} 
\subsection{Preliminaries}
In this section, we prove several lemmata which are needed for the proofs of the propositions.
The first ones establish several properties of the weighted operators $\hat{f}$, Lemma \ref{lem:psi-Phi} 
and Lemma \ref{lem:GammaLambda:and:O_12} contain some useful estimates for scalar products, and the remainder of the section covers properties of the condensate wavefunction $\phe(t)$.
In the following, we will always assume that assumptions A1 -- A4 are satisfied.

\begin{lem}\label{lem:l}
Let $f:\mathbb{N}_0\rightarrow\mathbb{R}_0^+$, $d\in\mathbb{Z}$ and define $$\hat{l}:=N \max \{\hat{m}^a_{-1},\hat{m}^b_{-2}\},$$ where the $\max$ is to be understood in the sense of inequalities between operators, i.e.~$\hat{l}=N\hat{m}^a_{-1}$ if $\hat{m}^a_{-1}-\hat{m}^b_{-2}$ is a positive operator and vice versa. Then
\lemit{
 	\item $\onorm{\hat{f}}=\onorm{\hat{f}_d}=\onorm{\hat{f}^\frac{1}{2}}^2=\sup\limits_{0\leq k\leq N}f(k)$,\label{lem:l:1} 
	\item $\onorm{\hat{l}\,\hat{n}}\ls 1$, \quad $\onorm{\hat{l}}\leq N^{\xi}.$ \label{lem:l:2} 
}
\end{lem}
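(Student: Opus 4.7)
For part (a), I would exploit that $\{P_k\}_{k=0}^N$ is a pairwise orthogonal family of projections on $L^2(\R^{3N})$ summing to the identity (Corollary \ref{cor:p:1}). Any operator of the form $\sum_k g(k)P_k$ is thus block-diagonal in this decomposition, with $g(k)$ acting as multiplication by a scalar on the range of $P_k$, so its operator norm equals $\sup_k|g(k)|$ over indices for which $P_k\neq 0$. Applying this to $\hat{f}$, to $\hat{f}_d$ (after relabelling the summation index as $k=j+d$), and to $\hat{f}^{1/2}=\sum_k\sqrt{f(k)}P_k$ yields the three equalities in one stroke, the last using $(\hat{f}^{1/2})^2=\hat{f}$ via $P_kP_{k'}=\delta_{k k'}P_k$.

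For part (b), I would analyse the scalar weights $m^a(k)=m(k)-m(k+1)$ and $m^b(k)=m(k)-m(k+2)$ (both non-positive, since $m$ is monotone increasing) by splitting at the cutoff $k=N^{1-2\xi}$ where the definition of $m$ changes. In the upper regime $k\geq N^{1-2\xi}$, the identity $\sqrt{k+1}-\sqrt{k}=1/(\sqrt{k+1}+\sqrt{k})$ produces $|m^a(k)|\ls (Nk)^{-1/2}$ and $|m^b(k)|\ls (Nk)^{-1/2}$; in the lower regime $m$ is affine, so $|m^a(k)|=\tfrac{1}{2}N^{-1+\xi}$ and $|m^b(k)|=N^{-1+\xi}$ are constant. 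Multiplying by $N$ and taking the supremum gives $N|m^a(k-1)|,N|m^b(k-2)|\ls N^\xi$ uniformly in $k$, with the maximum attained near $k\sim N^{1-2\xi}$. Since $\hat{l}$, $\hat{m}^a_{-1}$ and $\hat{m}^b_{-2}$ are all simultaneously diagonal in $\{P_k\}$, the operator-``$\max$'' reduces to a pointwise selection of the scalar weights, and part (a) delivers $\onorm{\hat{l}}\leq N^\xi$. For $\onorm{\hat{l}\,\hat{n}}$, I note that $\hat{l}\,\hat{n}$ is again a weighted operator, with weight $N|m^a(k-1)|\,n(k)$ or the analogous $m^b$-expression; in the upper regime this is $\sim\sqrt{N/k}\cdot\sqrt{k/N}=O(1)$, while in the lower regime $n(k)\leq N^{-\xi}$ exactly cancels the $N^\xi$ growth of $Nm^{a,b}$, again yielding $O(1)$. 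One more invocation of part (a) concludes.

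The only genuine subtlety I anticipate is the transition region, where $k$ sits on one side of $N^{1-2\xi}$ while $k+1$ or $k+2$ sits on the other, so that one term in the difference $m(k)-m(k+j)$ is given by the affine formula and the other by $\sqrt{\cdot/N}$. One then verifies by direct computation that the linear piece $\tfrac{1}{2}(N^{-1+\xi}k+N^{-\xi})$ was designed precisely so that its value and its first-difference size match those of the $\sqrt{k/N}$-branch at the cutoff (up to constants), so the mixed cases obey the same bounds as the pure ones and no additional estimate is needed. A secondary point of care is merely bookkeeping: for $\hat{f}_d$ one must observe that the relabelled sum runs over $k\in\{\max(0,d),\dots,\min(N,N+d)\}$, which is a subset of $\{0,\dots,N\}$, so the supremum in part (a) is taken over this subset -- this does not affect the bound since the right-hand side is the supremum over the full range.
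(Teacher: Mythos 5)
Your proposal is correct and takes essentially the same route as the paper: part (a) via simultaneous diagonalisation in the $\{P_k\}$ basis, part (b) via explicit estimates of the first differences of $m$, split at the cutoff $k=N^{1-2\xi}$. The one technical variation is in part (b): the paper invokes the mean value theorem, writing $|m(k)-m(k-j)|=j|m'(\kappa)|$ for some $\kappa\in(k-j,k)$, and then observes that $|m'(\kappa)|\leq\tfrac12 N^{-1}n(\kappa)^{-1}$ holds uniformly across both regimes. This single unified bound handles the transition region with no separate verification, since $m$ was built to be $\mathcal{C}^1$ across the cutoff (matching value and slope, as you correctly note). Your regime-by-regime algebra using $\sqrt{k+1}-\sqrt{k}=1/(\sqrt{k+1}+\sqrt{k})$ is equally valid but requires the extra check of the mixed case that you flag; the MVT approach absorbs that check into the smoothness of $m$. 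Both routes are sound, and your observation about the restricted index range for $\hat f_d$ is a careful point the paper does not spell out.
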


\begin{proof} 
Part (a) is obvious. For part (b), note that
$$\hat{m}^a_{-1}\hat{n}=\sum\limits_{k=1}^N\left(m(k-1)-m(k)\right)n(k)P_k, \qquad \hat{m}^b_{-2}=\sum\limits_{k=2}^N\left(m(k-2)-m(k)\right)n(k)P_k.
$$
The derivative of $m$ with respect to $k$ is given by
\begin{equation*}
m'(k)\equiv\tfrac{\d}{\d k}m(k)= \left\{
\begin{array}{ll}
      \frac{1}{2\sqrt{kN}} =\frac12N^{-1}n(k)^{-1}& \text{for}\;\; k\geq N^{1-2\xi},\vphantom{\bigg(}\\
      \frac12N^{-1+\xi} & \text{else}.\\
\end{array} 
\right.
\end{equation*}
By the mean value theorem, $|m(k)-m(k-j)|=j|m'(\kappa)| $ for $j\in\{1,2\}$ and $\kappa\in (k-j,k)$. For $\kappa\geq N^{1-2\xi}$, $|m'(\kappa)|=\frac12 N^{-1}n(\kappa)^{-1}$.
For $\kappa<N^{1-2\xi}$, we obtain $|m'(\kappa)|=\frac12 N^{-1+\xi}<\frac12\frac{1}{\sqrt{\kappa N}}=\frac12N^{-1}n(\kappa)^{-1}$. Consequently,
$$\sum\limits_{k=j}^N\big|m(k-j)-m(k)\big|n(k)P_k\leq \tfrac{1}{2}N^{-1}j\sum\limits_{k=j}^N\sqrt{\tfrac{k}{\kappa}}P_k \ls N^{-1}\mathbbm{1}$$
in the sense of operators. This proves the first part of (b). 
For the second identity, observe that $|m'(k)|\leq \frac12 N^{-1+\xi}$ uniformly in $k\geq 0$.
\end{proof}


\begin{lem}\label{lem:commutators}
Let $f,g:\mathbb{N}_0\rightarrow\mathbb{R}_0^+$ be any weights and $i,j\in\{1\mydots N\}$. 
\lemit{
	\item	\label{lem:commutators:1} For $k\in\{0\mydots N\}$,
			\begin{equation*}
			\hat{f}\,\hat{g}=\hat{fg}=\hat{g}\hat{f},\qquad\hat{f}p_j=p_j\hat{f},\qquad\hat{f}q_j=q_j\hat{f}, \qquad\hat{f}P_k=P_k\hat{f}.
			\end{equation*} 	
	\item	\label{lem:commutators:2}
			Define $Q_0:=p_j$, $Q_1:=q_j$, $\tilde{Q}_0:=p_ip_j$, $\tilde{Q}_1\in\{p_iq_j,q_ip_j\}$ and $\tilde{Q}_2:=q_iq_j$. 
			Let $S_j$ be an operator acting only on factor $j$ in the tensor product and $T_{ij}$ acting only on $i$ and $j$.
			Then for $\mu,\nu\in\{0,1,2\}$	
			$$ Q_\mu\hat{f}S_jQ_\nu=Q_\mu S_j\hat{f}_{\mu-\nu}Q_\nu \quad \text{ and } \quad
			\tilde{Q}_\mu\hat{f}T_{ij}\tilde{Q}_\nu=\tilde{Q}_\mu T_{ij}\hat{f}_{\mu-\nu}\tilde{Q}_\nu.$$
	\item	\label{lem:commutators:4} 
			Let $S_{x_j}$ be an operator acting only on the $x$-component of factor $j$. Then
			$$\qp_j\hat{f}S_{x_j}\pp_j=\qp_jS_{x_j}(\hat{f}\qc_j+\hat{f}_1\pc_j)\pp_j 
			\quad \text{ and }\quad
			\qp_j\hat{f}S_{x_j}\qp_j=\qp_jS_{x_j}\hat{f}\qp_j.$$
	\item	\label{lem:commutators:5}
			\begin{equation*}
			[T_{12},\hat{f}]=[T_{12},p_1p_2(\hat{f}-\hat{f}_2)+(p_1q_2+q_1p_2)(\hat{f}-\hat{f}_1)].
			\end{equation*}
}
\end{lem}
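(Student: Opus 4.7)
The plan is to reduce all four identities to elementary statements about how the projectors $P_k$ decompose relative to one or two distinguished factors, exploiting the orthogonality $P_kP_{k'}=\delta_{k,k'}P_k$ and the invariance of each $P_k$ under permutations of the $N$ factors.

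Part (a) is immediate: $\hat{f}\hat{g}=\sum_{k,k'}f(k)g(k')P_kP_{k'}=\sum_kf(k)g(k)P_k=\hat{fg}$ by orthogonality, and $[\hat{f},p_j]=[\hat{f},q_j]=[\hat{f},P_k]=0$ follows because $p_j$ (and hence $q_j=\mathbbm{1}-p_j$) commutes with every $P_k$ thanks to the permutation symmetry of the latter.

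For (b), I would use the one-factor decomposition $P_k=p_j\tilde{P}_k^{(j)}+q_j\tilde{P}_{k-1}^{(j)}$, where $\tilde{P}_k^{(j)}$ projects onto the subspace with exactly $k$ excited factors among the $N-1$ factors other than $j$. Summing against $f(k)$ gives $\hat{f}=p_j\,\hat{g}_0^{(j)}+q_j\,\hat{g}_1^{(j)}$, where the two operators $\hat{g}_{0,1}^{(j)}$ act only on the remaining $N-1$ factors and hence commute with $S_j$. Sandwiching by $Q_\mu$ on the left and $Q_\nu$ on the right selects exactly one of these two pieces on each side, and after commuting $S_j$ past the surviving $\hat{g}^{(j)}$ the remaining weight is precisely $\hat{f}_{\mu-\nu}$, because passing from $q_j$ to $p_j$ on either end produces the expected index shift. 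The two-body version uses the analogous decomposition $P_k=p_ip_j\tilde{P}_k^{(ij)}+(p_iq_j+q_ip_j)\tilde{P}_{k-1}^{(ij)}+q_iq_j\tilde{P}_{k-2}^{(ij)}$.

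Part (c) refines (b) via the tensor splitting $p_j=\pp_j\pc_j$, $q_j=\qp_j+\pp_j\qc_j$ from Corollary~\ref{cor:p:3}. The key observation is $\qp_j\hat{f}=\qp_j\hat{g}_1^{(j)}$, since $\qp_j$ annihilates $\pp_j\pc_j$ and $\pp_j\qc_j$; only the $q_j$-channel survives, carrying the $+1$ shift. Inserting $\mathbbm{1}=\pc_j+\qc_j$ to the right of $S_{x_j}$ and using the analogous identities $\hat{f}\qc_j=\qc_j\hat{g}_1^{(j)}$ and $\pc_j\hat{f}_1\pp_j=p_j\,\hat{g}_1^{(j)}$ reassembles the right-hand side. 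The second identity is easier: two $\qp_j$'s force the pure $q_j\cdots q_j$ channel, in which the weight is not shifted. For (d), the two-factor decomposition $\hat{f}=p_1p_2A_0+(p_1q_2+q_1p_2)A_1+q_1q_2A_2$ with $A_\nu$ supported on the remaining $N-2$ factors yields $p_1p_2(\hat{f}-\hat{f}_2)=p_1p_2(A_0-A_2)$ and $(p_1q_2+q_1p_2)(\hat{f}-\hat{f}_1)=(p_1q_2+q_1p_2)(A_1-A_2)$. Since $T_{12}$ commutes with every $A_\nu$, $[T_{12},\hat{f}]=[T_{12},p_1p_2]A_0+[T_{12},p_1q_2+q_1p_2]A_1+[T_{12},q_1q_2]A_2$, and the triviality $[T_{12},\mathbbm{1}]=0$ lets one replace the last commutator by $-[T_{12},p_1p_2+p_1q_2+q_1p_2]$, collapsing everything into the claimed single commutator. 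The only genuine obstacle throughout is bookkeeping: the index shifts by $+1$ and $+2$ must be tracked carefully at each step, but once the notations for $\hat{g}_0^{(j)},\hat{g}_1^{(j)},A_0,A_1,A_2$ are fixed each identity is a short calculation.
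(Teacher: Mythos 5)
Your proof is correct, and the underlying idea — decompose $P_k$ relative to the distinguished factor(s) and exploit that the complementary piece acts only on the remaining factors — is the same as in the paper. The main stylistic difference is that you name the complementary operators ($\hat{g}_\mu^{(j)}$, $A_\nu$) and track the index shifts through them, whereas the paper in part (b) manipulates the $P_k$'s directly, writing out $Q_\mu P_k S_1 Q_\nu = Q_\mu S_1 P_{k-\mu+\nu}Q_\nu$ explicitly; both are short. For part (d), the paper instead forms the difference $[T_{12},\hat{f}]-[T_{12},B]$ with $B:=q_1q_2\hat{f}+(p_1q_2+q_1p_2)\hat{f}_1+p_1p_2\hat{f}_2$ and then shows $[T_{12},B]=0$ by inserting $\mathbbm{1}=p_1p_2+(p_1q_2+q_1p_2)+q_1q_2$ and sandwiching with part (b); you argue instead via the explicit $A_\nu$-decomposition and the observation $[T_{12},q_1q_2]=-[T_{12},p_1p_2+(p_1q_2+q_1p_2)]$, which is arguably a slightly more transparent variant of the same cancellation. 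Two small quibbles: in part (a), the commutation $[p_j,P_k]=0$ is really a consequence of each summand in $P_k$ being a product of mutually commuting projections, one of which is $p_j$ or $q_j$ on factor $j$ — "permutation symmetry" of $P_k$ is not quite the reason; and in part (c) the identity $\pc_j\hat{f}_1\pp_j=p_j\hat{g}_1^{(j)}$ deserves a line of verification (it holds because $\pc_j p_j=p_j$ and $\pc_j q_j\pp_j=0$), but as stated your plan reassembles the claimed right-hand side correctly.
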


We will apply parts (b) and (c) to unbounded operators, for instance to $S_j\equiv \nabla_j$ and $S_{x_j}\equiv \partial_{x_j}$. In this case, the respective equality holds on the intersection of the domains of the operators on both sides of the equation.

\begin{proof}
Part (a) follows immediately from $P_kP_l=\delta_{k,l}P_k$. For assertion (b), note that for $j=1$,
\begin{equation*}\begin{split}
Q_\mu P_k S_1Q_\nu&
= Q_\mu\Bigg(\sum\limits_{\substack{J\subseteq\{2,\dots,N\}\\|J|=k-\mu}}\prod\limits_{j\in J}q_j\prod\limits_{l\notin J}p_l\Bigg) S_1Q_\nu\\
&=Q_\mu S_1\Bigg(\sum\limits_{\substack{J\subseteq\{2,\dots,N\}\\|J|=k-\mu}}\prod\limits_{j\in J}q_j\prod\limits_{l\notin J}p_l\Bigg) Q_\nu
=Q_\mu S_1 P_{k-\mu+\nu}Q_\nu,
\end{split}\end{equation*}
hence
\begin{equation*}\begin{split}
Q_\mu \hat{f}S_1Q_\nu
&=Q_\mu S_1 \left(\sum\limits_{k=-(\mu-\nu)}^{N-(\mu-\nu)}f(k+\mu-\nu)P_{k}\right) Q_\nu =Q_\mu S_1\hat{f}_{\mu-\nu}Q_\nu.
\end{split}\end{equation*}
Assertion (c) is a consequence of part (b) and Corollary \ref{cor:p:2}, for example
$$\qp_j\hat{f}S_{x_j}\pp_j=\qp_j\left(q_j\hat{f}S_{x_j}(p_j+q_j)\right)\pp_j=\qp_j S_{x_j}(\hat{f}_1\pc_j+\hat{f}\qc_j)\pp_j.$$
Finally, observe that
\begin{equation*}\begin{split}
[T_{12},p_1p_2(\hat{f}-\hat{f}_2)&+(p_1q_2+q_1p_2)(\hat{f}-\hat{f}_1)]=[T_{12},\hat{f}]-[T_{12},q_1q_2\hat{f}+(p_1q_2+q_1p_2)\hat{f}_1+p_1p_2\hat{f}_2].
\end{split}\end{equation*}
The second commutator equals zero, which can be seen by inserting $1=p_1p_2+(p_1q_2+q_1p_2)+q_1q_2$ in front of the commutator and applying part (d).
\end{proof}


\begin{lem}\label{lem:derivative_m}
Let $f:\mathbb{N}_0\rightarrow\R^+_0$. Then
\lemit{
	\item	\label{lem:derivative_m:1}
			$P_k,\;\hat{f}\in \mathcal{C}^1\big(\R,\mathcal{L}\left(L^2(\R^{3N})\right)\big)$ for $0\leq k\leq N$,
	\item	\label{lem:derivative_m:2}
			$\left[-\Delta_{y_j}+\frac{1}{\varepsilon^2}V^\perp(\tfrac{y_j}{\varepsilon}),\hat{f}\right]=0$ for $1\leq j \leq N$,
	\item	\label{lem:derivative_m:3}
			$\tfrac{\d}{\d t}\hat{f}=i\Big[\hat{f},\sum\limits_{j=1}^N h_j(t)\Big],$\\
			where $h_j(t)$ denotes the one-particle operator corresponding to $h(t)$ from \eqref{NLS} acting on the $j$\textsuperscript{th} factor in $L^2(\R^{3N})$.
}
\end{lem}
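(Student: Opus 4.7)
The plan is to prove the three parts in the stated order, since each rests on essentially the same regularity input. For part (a), the only time-dependent ingredient in $P_k$ and $\hat f$ is the one-body projector $p=\ket{\phe(t)}\bra{\phe(t)}$, which depends on $t$ only through $\Phi(t)$ because $\chie$ is time-independent. From the NLS equation \eqref{NLS} together with the $H^2$-regularity of $\Phi(t)$ supplied by Lemma \ref{lem:Phi}, one has $\partial_t\Phi=-ih(t)\Phi\in L^2(\R)$, so $t\mapsto\Phi(t)$ lies in $\mathcal{C}^1(\R,L^2(\R))$. It follows that $t\mapsto p$ is $\mathcal{C}^1$ in operator norm, hence so are $p_j$, $q_j=\mathbbm{1}-p_j$, the polynomial $P_k$, and the finite linear combination $\hat f=\sum_{k=0}^N f(k)P_k$.

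For part (b), I would use the factorisation $p_j=\pp_j\pc_j$ from Corollary \ref{cor:p:2}. The operator $-\Delta_{y_j}+\tfrac{1}{\varepsilon^2}V^\perp(y_j/\varepsilon)$ acts only on the $y_j$-coordinate of the $j$\textsuperscript{th} tensor factor, so it commutes with $\pp_j$ for trivial reasons (the two act on disjoint variables), and it commutes with $\pc_j$ because $\chie$ is its normalised ground state (so $\ket{\chie}\bra{\chie}$ commutes with the self-adjoint operator via the spectral theorem). Hence it commutes with $p_j$, with $q_j$, and with every finite product and linear combination of these, i.e.~with $P_k$ and $\hat f$.

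The main work is in part (c). I would first check the one-body identity $\dot p=i[p,\,h(t)\otimes\mathbbm{1}_{L^2(\R^2)}]$. Writing $\dot p=\ket{\partial_t\phe}\bra{\phe}+\ket{\phe}\bra{\partial_t\phe}$, using that $\partial_t\phe=(\partial_t\Phi)\chie=-i(h\otimes\mathbbm{1})\phe$, and exploiting self-adjointness of $h\otimes\mathbbm{1}$ on the right slot yields the commutator. Promoting this identity to the $j$\textsuperscript{th} factor gives $\dot p_j=i[p_j,h_j]$, and hence $\dot q_j=i[q_j,h_j]$. The key observation is then that $h_m$ acts on factor $m$ only and therefore commutes with every $p_j, q_j$ for $j\neq m$. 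Applying the Leibniz rule to a product $\prod_{j\in J}q_j\prod_{l\notin J}p_l$ and collecting terms gives
\begin{equation*}
\tfrac{\d}{\d t}\prod_{j\in J}q_j\prod_{l\notin J}p_l \;=\; i\sum_{m=1}^N\Bigl[\prod_{j\in J}q_j\prod_{l\notin J}p_l,\;h_m\Bigr],
\end{equation*}
and summing over $|J|=k$ followed by summation over $k$ with weight $f(k)$ produces the stated formula.

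The only subtle point is that $h_j(t)$ contains the unbounded operator $-\partial_{x_j}^2$, so the product rule and the rearrangement into commutators must be justified on an appropriate dense domain. Since $p_j$ is a rank-one projector onto the smooth vector $\phe\in H^2(\R^3)$, each commutator $[p_j,h_j]$ extends to a bounded operator, so the final identity $\dot{\hat f}=i[\hat f,\sum_jh_j]$ holds as an equality of bounded operators on $L^2(\R^{3N})$; this is a technical point rather than a real obstacle.
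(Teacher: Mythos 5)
Your argument is correct and takes essentially the same route as the paper's (very terse) proof: (a) and (b) are exactly the paper's reasoning, and for (c) you make explicit the Leibniz-rule computation that the paper leaves implicit after recording $\dot p=i[p,h(t)]$ and $\dot q=i[q,h(t)]$. One small imprecision in your closing technical remark: you do not actually know $\phe\in H^2(\R^3)$, since A2 only gives $\chi\in\mathcal{C}^1_{\mathrm b}(\R^2)$; but this is immaterial, because what boundedness of $[p_j,h_j]=\ket{\phe_j}\bra{h_j\phe_j}-\ket{h_j\phe_j}\bra{\phe_j}$ really requires is that $(h(t)\otimes\mathbbm{1})\phe=(h(t)\Phi)\,\chie\in L^2(\R^3)$, which holds because $h(t)$ differentiates only in $x$ and $\Phi(t)\in H^2(\R)$ by Lemma~\ref{lem:Phi}.
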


\begin{proof}
Part (a) is clear as $\phe\in \mathcal{C}^1\left(\R,L^2(\R^3)\right)$. Assertion (b) is due to the fact that $-\Delta_{y_j}+\frac{1}{\varepsilon^2}V^\perp(\tfrac{y_j}{\varepsilon})$ commutes with its spectral projection $\pc_j$. For the last part, note that
\begin{equation*}\begin{split}
\tfrac{\d}{\d t}p=\tfrac{\d}{\d t}\ket{\Phi(t)\chie}\bra{\Phi(t)\chie}=i\left[\ket{\Phi(t)\chie}\bra{\Phi(t)\chie},h(t)\right]=i[p,h(t)] \quad \text{and}\quad
\tfrac{\d}{\d t}q=i[q,h(t)]
\end{split}\end{equation*}
as $\Phi(t)$ is a solution of~\eqref{NLS}.
\end{proof}

We will consider functions which are symmetric only in the variables of a subset of $\{1\mydots N\}$, for instance the expressions $q_1\psi$ and $\wbot \psi$ for $\psi\in L^2_+(\R^{3N})$. 

\begin{definition}\label{def:H_M}
Let $\mathcal{M}\subseteq\{1,\dots,N\}$. Define $\mathcal{H}_\mathcal{M}\subseteq L^2(\R^{3N})$ as the space of functions which are symmetric in all variables in $\mathcal{M}$, i.e.~for $\psi\in\mathcal{H}_\mathcal{M}$,
$$\psi(z_1\mydots z_j\mydots z_k\mydots z_N)=\psi(z_1\mydots z_k\mydots z_j\mydots z_N) \qquad\forall\,j,k\in\mathcal{M}.$$
\end{definition}

\begin{lem}\label{lem:fqq}
Let $f:\mathbb{N}_0\rightarrow\mathbb{R}_0^+$ and $\mathcal{M}_1,\mathcal{M}_{1,2}\subseteq\{1,2 \mydots N\}$ with $1\in\mathcal{M}_1$ and $1,2\in\mathcal{M}_{1,2}$. 
\lemit{
	\item 	\label{lem:fqq:1}
			$\hat{n}^2=\frac{1}{N}\sum\limits_{j=1}^N q_j,$
	\item	\label{lem:fqq:2}
			$\norm{\hat{f}q_1\psi}^2\leq\frac{N}{|\mathcal{M}_1|}\norm{\hat{f}\hat{n}\psi}^2\qquad \forall \psi\in\mathcal{H}_{\mathcal{M}_1},$
	\item	\label{lem:fqq:3}
			$\norm{\hat{f}q_1q_2\psi}^2\leq\frac{N^2}{|\mathcal{M}_{1,2}|\left(|\mathcal{M}_{1,2}|-1\right)}\norm{\hat{f}\,\hat{n}^2\psi}^2 
			\qquad\forall \psi\in\mathcal{H}_{\mathcal{M}_{1,2}}.$
}
\end{lem}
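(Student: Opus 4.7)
Part (a) is a direct computation. I expand $\hat{n}^2=\sum_{k=0}^N n(k)^2 P_k=\frac{1}{N}\sum_{k=0}^N kP_k$ and invoke the identity $\sum_{j=1}^N q_jP_k=kP_k$ from Corollary \ref{cor:p:1}, combined with $\sum_{k=0}^N P_k=\mathbbm{1}$, to rewrite the right-hand side as $\frac{1}{N}\sum_{j=1}^N q_j$.

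For parts (b) and (c), the key observation is that, although $\psi$ is only partially symmetric, the operator $\hat{f}$ is a linear combination of the fully symmetric projections $P_k$ and therefore commutes with every permutation of particle labels. Consequently, for any two indices $j,k\in\mathcal{M}_1$, conjugation by the transposition $U_{jk}$ leaves both $\hat{f}$ and $\psi$ invariant while sending $q_j$ to $q_k$, so that $\norm{\hat{f}q_j\psi}=\norm{\hat{f}q_1\psi}$ for every $j\in\mathcal{M}_1$. Summing this identity over $\mathcal{M}_1$, enlarging the sum to all $j\in\{1,\dots,N\}$ (the added terms are nonnegative), and using $[\hat{f},q_j]=0$ from Lemma \ref{lem:commutators:1} together with $q_j^2=q_j$ yields
\begin{equation*}
|\mathcal{M}_1|\,\norm{\hat{f}q_1\psi}^2\leq\sum_{j=1}^N\llr{\psi,\hat{f}^2 q_j\psi}=\llr{\psi,\hat{f}^2\,N\hat{n}^2\psi}=N\norm{\hat{f}\hat{n}\psi}^2,
\end{equation*}
where part (a) has been inserted in the middle equality. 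Dividing by $|\mathcal{M}_1|$ gives (b).

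Part (c) proceeds in complete analogy. The invariance of $\psi$ under every permutation supported in $\mathcal{M}_{1,2}$ produces $\norm{\hat{f}q_iq_j\psi}=\norm{\hat{f}q_1q_2\psi}$ for every ordered pair of distinct indices $i,j\in\mathcal{M}_{1,2}$; summing and enlarging to ordered pairs with $i\neq j\in\{1,\dots,N\}$ gives
\begin{equation*}
|\mathcal{M}_{1,2}|(|\mathcal{M}_{1,2}|-1)\,\norm{\hat{f}q_1q_2\psi}^2\leq\sum_{\substack{i,j=1\\i\neq j}}^N\llr{\psi,\hat{f}^2 q_iq_j\psi}.
\end{equation*}
Squaring the identity of part (a) gives $\sum_{i,j=1}^N q_iq_j=N^2\hat{n}^4$, and subtracting the diagonal $\sum_j q_j^2=\sum_j q_j=N\hat{n}^2\geq0$ yields $\sum_{i\neq j}q_iq_j\leq N^2\hat{n}^4$ in the sense of operators. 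Inserting this bound and using $[\hat{f},\hat{n}]=0$ produces $N^2\norm{\hat{f}\hat{n}^2\psi}^2$ on the right-hand side, as required.

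I do not expect any serious obstacle: the entire argument is a careful bookkeeping exercise around the partial symmetry of $\psi$ and the commutativity of $\hat{f}$ with the projectors $q_j$ and $q_iq_j$. The only mild point of care is verifying that the transpositions used for the averaging actually lie within $\mathcal{M}_1$ or $\mathcal{M}_{1,2}$, which is precisely what the hypotheses $1\in\mathcal{M}_1$ and $1,2\in\mathcal{M}_{1,2}$ guarantee.
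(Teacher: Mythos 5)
Your proof is correct and follows essentially the same route as the paper: part (a) is the same direct computation via Corollary \ref{cor:p:1}, and parts (b) and (c) use the same chain of observations — permutation invariance of $\psi$ and $\hat f$ to identify $\norm{\hat f q_j\psi}$ (resp. $\norm{\hat f q_iq_j\psi}$) across $\mathcal{M}_1$ (resp. $\mathcal{M}_{1,2}$), followed by enlarging the sum to all indices and inserting $\hat n^2=\frac1N\sum_j q_j$. The paper states the symmetry step implicitly by writing $\sum_{j\in\mathcal{M}_1}\llr{\psi,\hat f^2 q_j\psi}=|\mathcal{M}_1|\norm{\hat f q_1\psi}^2$, while you make the transposition argument explicit; the content is identical.
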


\begin{proof}
Part (a) follows immediately from Corollary \ref{cor:p:1}. Consequently, for $\psi\in\mathcal{H}_{\mathcal{M}_1}$,
\begin{equation*}\begin{split}
\norm{\hat{f}\,\hat{n}\psi}^2&=\frac{1}{N}\sum\limits_{j=1}^N\llr{\psi,\hat{f}^2 q_j\psi}
\geq \frac{1}{N}\sum\limits_{j\in\mathcal{M}_1}\llr{\psi,\hat{f}^2 q_j\psi}
=\frac{|\mathcal{M}_1|}{N}\norm{\hat{f}q_1\psi}^2
\end{split}\end{equation*}
and analogously for $\psi\in\mathcal{H}_{\mathcal{M}_{1,2}}$,
$$\norm{\hat{f}\hat{n}^2\psi}\geq \frac{1}{N^2}\sum\limits_{j,k\in\mathcal{M}_{1,2}}\llr{\psi,\hat{f}^2q_jq_k\psi}
\geq \frac{|\mathcal{M}_{1,2}|(|\mathcal{M}_{1,2}|-1)}{N^2}\norm{\hat{f}q_1q_2\psi}^2.$$
\end{proof}

\begin{cor} \label{cor:fqq}
Let $f:\mathbb{N}_0\rightarrow\mathbb{R}_0^+$ and $\mathcal{H}_{\mathcal{M}_1}$, $\mathcal{H}_{\mathcal{M}_{1,2}}$ as in Lemma~\ref{lem:fqq}. 
\lemit{
	\item	\label{cor:fqq:1}For $\psi\in\mathcal{H}_{\mathcal{M}_1}$,
			$$\norm{\nabla_1\hat{f}q_1\psi}\ls\onorm{\hat{f}}\norm{\nabla_1q_1\psi}\quad\text{and}\quad
			\norm{\partial_{x_1}\hat{f}\qp_1\psi}\ls\onorm{\hat{f}}\norm{\partial_{x_1}\qp_1\psi}.$$
	\item	\label{cor:fqq:2}For $\psi\in\mathcal{H}_{\mathcal{M}_{1,2}}$,
			$$\norm{\nabla_1\hat{f}q_1q_2\psi}\ls\onorm{\hat{f}\,\hat{n}}\norm{\nabla_1q_1\psi}\quad\text{and}\quad
			\norm{\partial_{x_1}\hat{f}\qp_1\qp_2\psi}\ls\onorm{\hat{f}\,\hat{n}}\norm{\partial_{x_1}\qp_1\psi}.$$
}
\end{cor}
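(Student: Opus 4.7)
My plan is to treat all four inequalities by the same two-step scheme. First, I commute the differential operator past $\hat{f}$ (or $\hat{f}\hat{n}$) using the shift identities of Lemma \ref{lem:commutators}, which by Lemma \ref{lem:l:1} introduces only shifted weights of unchanged operator norm. Second, in part (b), I convert one of the two $q$'s into a factor of $\hat{n}$ via a symmetrisation argument of the same type used for Lemma \ref{lem:fqq}.

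For the first bound in (a) I would insert $\mathbbm{1}=p_1+q_1$ on the left of $\nabla_1\hat{f}q_1$ and apply Lemma \ref{lem:commutators:2}(b) (with shifts $\nu-\mu=1$ on the $p_1$-piece and $0$ on the $q_1$-piece), together with Lemma \ref{lem:commutators:1}(a), obtaining
\[
\nabla_1\hat{f}q_1 \;=\; \hat{f}_1\, p_1\nabla_1 q_1 \;+\; \hat{f}\, q_1\nabla_1 q_1.
\]
The claim then follows from the triangle inequality, the identity $\onorm{\hat{f}_1}=\onorm{\hat{f}}$, and the fact that $p_1,q_1$ are norm-reducing. The $\partial_{x_1}$--$\qp_1$ version proceeds analogously: splitting $\mathbbm{1}=\pp_1+\qp_1$, the second identity of Lemma \ref{lem:commutators:4} handles the $\qp_1$-piece directly, while the adjoint of its first identity gives
\[
\pp_1\partial_{x_1}\hat{f}\qp_1 \;=\; \pp_1\bigl(\hat{f}\qc_1+\hat{f}_1\pc_1\bigr)\partial_{x_1}\qp_1;
\]
the squared norm of the right-hand side has no cross-term because $\qc_1\pc_1=0$, and the bound by $\onorm{\hat{f}}\,\norm{\partial_{x_1}\qp_1\psi}$ follows.

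For the first bound in (b), I use that $q_2$ commutes with $\nabla_1$, $\hat{f}$ and $q_1$ to write $\nabla_1\hat{f}q_1 q_2\psi = q_2\chi$ with $\chi:=\nabla_1\hat{f}q_1\psi$. Since $\hat{f}$ preserves the symmetry of $\psi\in\mathcal{H}_{\mathcal{M}_{1,2}}$ and $\nabla_1,q_1$ act only on particle~$1$, $\chi$ is symmetric in $\mathcal{M}':=\mathcal{M}_{1,2}\setminus\{1\}\ni 2$; copying the proof of Lemma \ref{lem:fqq:2} verbatim gives $\norm{q_2\chi}^2\le \tfrac{N}{|\mathcal{M}'|}\norm{\hat{n}\chi}^2$. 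To estimate $\norm{\hat{n}\chi}$, I apply the decomposition of (a) and use that $\hat{n}$ commutes with $p_1,q_1,\hat{f}$:
\[
\hat{n}\nabla_1\hat{f}q_1 \;=\; \hat{f}_1\hat{n}\, p_1\nabla_1 q_1 \;+\; \hat{f}\hat{n}\, q_1\nabla_1 q_1,
\]
whence $\norm{\hat{n}\chi}\ls\onorm{\hat{f}\hat{n}}\,\norm{\nabla_1 q_1\psi}$ using the monotonicity of $n$ to write $\onorm{\hat{f}_1\hat{n}}\le\onorm{\hat{f}\hat{n}}$. The $\partial_{x_1}$--$\qp$ version of (b) reduces to this case via the elementary operator inequality $\qp_2\le q_2$, which yields $\norm{\qp_2\chi'}\le\norm{q_2\chi'}$ for $\chi':=\partial_{x_1}\hat{f}\qp_1\psi$; from there the argument is identical.

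The corollary is bookkeeping rather than hard analysis: the only things to watch are the shift index in each application of Lemma \ref{lem:commutators:2}(b) and the need to route the $\partial_{x_1}$ commutations through Lemma \ref{lem:commutators:4} instead, where the adjoint manoeuvre and the orthogonality $\qc_1\pc_1=0$ do the work that Lemma \ref{lem:commutators:2}(b) does for $\nabla_1$.
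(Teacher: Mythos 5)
Your proposal is correct, and the overall mechanism is the same as the paper's: insert a resolution of identity, commute the differential past $\hat{f}$ via the shift identities, use that shifted weights have the same operator norm, and in (b) trade one $q$ for a factor of $\hat{n}$ by the symmetrisation argument underlying Lemma \ref{lem:fqq}. Your treatment of (b) differs only cosmetically from the paper's in that you strip off $q_2$ first and apply Lemma \ref{lem:fqq:2} with trivial weight to $\chi$, whereas the paper applies Lemma \ref{lem:fqq:2} directly to $\hat{f}q_2(\nabla_1 q_1\psi)$; these are equivalent bookkeepings.

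The one place where you take a genuinely different subroute is the $\partial_{x_1}$--$\qp_1$ identities. The paper exploits $\qp_1 = q_1\qp_1$ (Corollary \ref{cor:p:2}) to funnel these through the same $p_1+q_1$ splitting and Lemma \ref{lem:commutators:2}(b) that handled the $\nabla_1$--$q_1$ case, noting that $\partial_{x_1}q_1\qp_1\psi = \partial_{x_1}\qp_1\psi$. You instead split $\mathbbm{1}=\pp_1+\qp_1$ and invoke Lemma \ref{lem:commutators:4}: its second identity handles the $\qp_1$-block, and the adjoint of its first identity plus the orthogonality $\qc_1\pc_1=0$ handle the $\pp_1$-block. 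Both arguments are sound; the paper's reduction to the $q_1$-splitting is a bit more economical (no adjoint manoeuvre, no cross-term to check), while yours stays closer to the natural $\Phi$-decomposition and makes the role of Lemma \ref{lem:commutators:4} explicit. Either way the constants are controlled by $\onorm{\hat{f}}$ respectively $\onorm{\hat{f}\hat{n}}$ via Lemma \ref{lem:l:1} and the monotonicity $n(k)\leq n(k+1)$, exactly as in the paper.
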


\begin{proof}
Insertion of $1=p_1+q_1$ in front of $\nabla_1$ yields with Lemma~\ref{lem:commutators:2}
\begin{equation*}
\norm{\nabla_1\hat{f}q_1\psi}\leq(\onorm{\hat{f}}+\onorm{\hat{f}_1})\norm{\nabla_1q_1\psi}
\overset{\text{\ref{lem:l}}}{\ls}\onorm{\hat{f}}\norm{\nabla_1q_1\psi}
\end{equation*}
and
\begin{equation*}
\norm{\nabla_1\hat{f}q_1q_2\psi}\leq \norm{\hat{f}_1q_2\nabla_1q_1\psi}+\norm{\hat{f}q_2\nabla_1q_1\psi}
\ls \left(\onorm{\hat{f}_1\hat{n}}+\onorm{\hat{f}\hat{n}}\right)\norm{\nabla_1q_1\psi}
\end{equation*}
by Lemma~\ref{lem:fqq:2} as $\nabla_1q_1\psi\in\mathcal{H}_{\{2,\dots,N\}}$. As $n(k)\leq n(k+1)$,
$\onorm{\hat{f}_1\hat{n}}\leq \onorm{\hat{fn}_1}=\onorm{\hat{f}\,\hat{n}}$ by Lemma~\ref{lem:l:1}.
The respective second identities are shown analogously, using that $\qp q=\qp$ and that $\partial_{x_1}\qp_1\psi\in\mathcal{H}_{\{2,\dots,N\}}$.
\end{proof}

The next lemma provides an estimate of the difference between expectation values with respect to a symmetric $N$-body wavefunction $\psi$ and with respect to $\Phi(t)$.

\begin{lem}\label{lem:psi-Phi}
Let $\psi\in L_+^2(\R^{3N})$ be normalised and $f\in L^\infty(\R)$. Then
\begin{equation*}
\left|\llr{\psi,f(x_1)\psi}-\lr{\Phi(t),f\Phi(t)}_{L^2(\R)}\right|\ls\norm{f}_{L^\infty(\R)}\llr{\psi,\hat{n}\psi}.
\end{equation*}
\end{lem}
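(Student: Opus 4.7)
The plan is to perform a standard Pickl-type expansion of $\llr{\psi,f(x_1)\psi}$ around the factorised condensate orbital $\phe(t)=\Phi(t)\chie$. The central algebraic observation is that since $f$ depends only on $x$ and $\|\chie\|_{L^2(\R^2)}=1$, the projector $p_1=\ket{\phe(t)}\bra{\phe(t)}_1$ satisfies
\begin{equation*}
p_1 f(x_1) p_1 \;=\; \lr{\Phi(t), f\Phi(t)}_{L^2(\R)}\, p_1.
\end{equation*}

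First I would insert $\mathbbm{1}=p_1+q_1$ on each side of $f(x_1)$ inside the inner product. Combining the identity above with $\|p_1\psi\|^2=1-\|q_1\psi\|^2$ and the self-adjointness of $f(x_1)$ gives the decomposition
\begin{equation*}
\llr{\psi, f(x_1)\psi} - \lr{\Phi, f\Phi} \;=\; -\lr{\Phi, f\Phi}\,\|q_1\psi\|^2 + 2\,\mathrm{Re}\,\llr{\psi, p_1 f(x_1) q_1\psi} + \llr{\psi, q_1 f(x_1) q_1\psi}.
\end{equation*}
Each of the two diagonal contributions is bounded in absolute value by $\|f\|_{L^\infty}\|q_1\psi\|^2$. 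Using the symmetry of $\psi$ together with Lemma \ref{lem:fqq:1}, $\|q_1\psi\|^2=\llr{\psi,\hat{n}^2\psi}$, and since $n^2(k)=k/N\leq \sqrt{k/N}=n(k)$ on $\{0,\dots,N\}$, the operator inequality $\hat{n}^2\leq\hat{n}$ yields $\|q_1\psi\|^2\leq\llr{\psi,\hat{n}\psi}$, which already has the required form.

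The delicate step is the cross term $\llr{\psi, p_1 f(x_1) q_1\psi}$: a direct Cauchy--Schwarz only gives $\|f\|_{L^\infty}\|q_1\psi\|$, i.e.~a square root of the desired bound. To recover the full linear dependence on $\llr{\psi,\hat{n}\psi}$, I would redistribute a half power of $\hat{n}$ symmetrically across $f(x_1)$ via the shift identity of Lemma \ref{lem:commutators:2}(b). Since $q_1 P_0=0$, the operator $\hat{n}^{-1/2}$ is well defined on the range of $q_1$; inserting $\hat{n}^{1/2}\hat{n}^{-1/2}$ between $f(x_1)$ and $q_1$ and transferring $\hat{n}^{1/2}$ through $f(x_1)$ to the $p_1$-side as the shifted weight $\hat{n}_1^{1/2}$ produces
\begin{equation*}
\llr{\psi, p_1 f(x_1) q_1\psi} \;=\; \llr{\hat{n}_1^{1/2}p_1\psi,\, f(x_1)\, \hat{n}^{-1/2} q_1\psi}.
\end{equation*}
Cauchy--Schwarz then bounds this by $\|f\|_{L^\infty}\|\hat{n}_1^{1/2}p_1\psi\|\cdot\|\hat{n}^{-1/2}q_1\psi\|$. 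The identity $\tfrac{1}{N}\sum_j q_j=\hat{n}^2$ of Lemma \ref{lem:fqq:1} together with commutativity of $\hat{n}$ with $q_1$ evaluates $\|\hat{n}^{-1/2}q_1\psi\|^2=\llr{\psi,\hat{n}\psi}$, and comparison of the shifted and unshifted weights controls $\|\hat{n}_1^{1/2}p_1\psi\|^2$ by $\llr{\psi,\hat{n}\psi}$; collecting the three contributions completes the proof. The main obstacle throughout is precisely this weight-redistribution in the cross term, for which the shift lemma is indispensable.
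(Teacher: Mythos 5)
Your proposal reproduces the paper's proof essentially verbatim: the same decomposition via $\mathbbm{1}=p_1+q_1$, the identification $p_1 f(x_1)p_1 = \lr{\Phi(t),f\Phi(t)}_{L^2(\R)}\,p_1$ together with $\hat{n}^2\leq\hat{n}$ for the two diagonal terms, and the same $\hat{n}_1^{1/2}/\hat{n}^{-1/2}$ weight redistribution across $f(x_1)$ via the shift identity of Lemma~\ref{lem:commutators} combined with the symmetric bound on $\|\hat{n}^{-1/2}q_1\psi\|$ from Lemma~\ref{lem:fqq} for the cross term. Correct, and it is essentially the paper's own argument.
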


\begin{proof}
We drop the time dependence of $\Phi$. Inserting $1=p_1+q_1$ on both sides of $f(x_1)$ yields 
\begin{equation*}\begin{split}
\left|\llr{\psi,f(x_1)\psi}-\lr{\Phi,f\Phi}_{L^2(\R)}\right|\leq&
\left|\llr{\psi,p_1f(x_1)p_1\psi}-\lr{\Phi,f\Phi}_{L^2(\R)}\right|\\
&+|\llr{q_1\psi,f(x_1)q_1\psi}|+2|\llr{\psi,p_1f(x_1)q_1\psi}|.
\end{split}\end{equation*}
We estimate the first term as
\begin{equation*}\begin{split}
\left|\llr{\psi,\pc_1\ket{\Phi(x_1)}\bra{\Phi(x_1)}f(x_1)\ket{\Phi(x_1)}\bra{\Phi(x_1)}\pc_1\psi}-\lr{\Phi,f\Phi}_{L^2(\R)}\right|
&\leq |\lr{\Phi,f\Phi}_{L^2(\R)}\llr{\psi,q_1\psi}|\\
&\leq \norm{f}_{L^\infty(\R)}\llr{\psi,\hat{n}\psi}
\end{split}\end{equation*}
by Lemma~\ref{lem:fqq:1} and as $\hat{n}^2\leq\hat{n}$. The second term is bounded by
\begin{equation*}
|\llr{q_1\psi,f(x_1)q_1\psi}|\leq \norm{f}_{L^\infty(\R)} \norm{q_1\psi}^2\leq \norm{f}_{L^\infty(\R)}\llr{\psi,\hat{n}\psi}.
\end{equation*}
For the third term, we compute
\begin{equation*}\begin{split}
\left|\llr{\psi,p_1f(x_1)\hat{n}^\frac{1}{2}q_1\hat{n}^{-\frac{1}{2}}\psi}\right|&\overset{\text{\ref{lem:commutators:2}}}{=}
\left|\llr{\hat{n}_1^\frac{1}{2}p_1\psi,f(x_1)\hat{n}^{-\frac{1}{2}}q_1\psi}\right|\\
&\overset{\hphantom{\ref{lem:fqq:2}}}{\leq}\norm{f}_{L^\infty(\R)}\norm{\hat{n}_1^\frac{1}{2}\psi}\norm{\hat{n}^{-\frac{1}{2}}q_1\psi}\overset{\text{\ref{lem:fqq:2}}}{\ls}\norm{f}_{L^\infty(\R)}\llr{\psi,\hat{n}\psi},
\end{split}\end{equation*}
where we have used that $\sqrt{k+1}\leq \sqrt{k}+1$, hence $n_1(k)\leq n(k)+N^{-\frac{1}{2}}\leq 2n(k)\ls n(k)$. 
\end{proof}

In the following lemma, we estimate two particular scalar products.

\begin{lem}\label{lem:GammaLambda:and:O_12}
Let $O_{j,k}$ be an operator that acts nontrivially only on the $j$\textsuperscript{th} and $k$\textsuperscript{th} coordinate and let $F:\R^3\times\R^3\rightarrow\R^d$ for $d\in\mathbb{N}$. 
\lemit{
\item	\label{lem:GammaLambda}
		Let $\Gamma,\Lambda\in\mathcal{H}_\mathcal{M}$ for some $\mathcal{M}$ such that $j\notin\mathcal{M}$ and $k,l\in\mathcal{M}$. Then
		\begin{equation*}
		|\llr{\Gamma,O_{j,k}\Lambda}|\leq\norm{\Gamma}
		\Big(|\llr{O_{j,k}\Lambda,O_{j,l}\Lambda}|+|\mathcal{M}|^{-1}\norm{O_{j,k}\Lambda}^2\Big)^\frac12.
		\end{equation*}	
\item	\label{lem:O_12}
		Let $r_k$, $s_k$ and $t_j$ denote operators acting only on the factors $j$ and $k$ of the tensor product, respectively. Then for $j\neq k\neq l\neq j$,
		\begin{equation*}
		|\llr{r_k F(z_j,z_k)s_k t_j\Gamma,r_lF(z_j,z_l) s_l t_j \Gamma }|\leq\norm{s_kF(z_j,z_k)r_k t_j\Gamma}^2.
		\end{equation*}
}
\end{lem}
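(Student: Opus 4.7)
I treat the two assertions separately.

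For part (a), I exploit the symmetry of $\Gamma$ and $\Lambda$ in the indices $\mathcal{M}$. Since $k\in\mathcal{M}$ and both functions are symmetric under permutations of $\mathcal{M}$, relabelling the dummy integration variables $z_k\leftrightarrow z_{k'}$ shows that $\llr{\Gamma,O_{j,k'}\Lambda}=\llr{\Gamma,O_{j,k}\Lambda}$ is independent of the choice of $k'\in\mathcal{M}$. Averaging this identity over $\mathcal{M}$ yields
\begin{equation*}
|\mathcal{M}|\,\llr{\Gamma,O_{j,k}\Lambda}=\Big\llangle\Gamma,\,\sum_{k'\in\mathcal{M}}O_{j,k'}\Lambda\Big\rrangle.
\end{equation*}
A Cauchy--Schwarz on the right, followed by expanding $\norm{\sum_{k'}O_{j,k'}\Lambda}^2$ as a double sum, produces $|\mathcal{M}|$ diagonal contributions -- each equal to $\norm{O_{j,k}\Lambda}^2$ by symmetry -- and $|\mathcal{M}|(|\mathcal{M}|-1)$ off-diagonal contributions, each equal to $\llr{O_{j,k}\Lambda,O_{j,l}\Lambda}$ for any fixed representatives $k\neq l$ in $\mathcal{M}$. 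Dividing by $|\mathcal{M}|^2$, bounding $(|\mathcal{M}|-1)/|\mathcal{M}|\leq 1$, and taking the square root gives the claim.

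For part (b), the key observation is that operators supported on disjoint tensor factors commute. Setting $\Psi:=t_j\Gamma$, $A:=r_kF(z_j,z_k)s_k$ and $B:=r_lF(z_j,z_l)s_l$, the operator $A^*=s_k^*F(z_j,z_k)r_k^*$ only touches factor $k$ while $B$ only touches factor $l$, with $z_j$ serving as a common multiplicative parameter that commutes through. Hence $[A^*,B]=0$ and one may rewrite
\begin{equation*}
\llr{A\Psi,B\Psi}=\llr{\Psi,A^*B\Psi}=\llr{\Psi,BA^*\Psi}=\llr{B^*\Psi,A^*\Psi}.
\end{equation*}
Cauchy--Schwarz then estimates this by $\norm{B^*\Psi}\,\norm{A^*\Psi}$, and the relabelling symmetry $z_k\leftrightarrow z_l$ in the integration -- together with the symmetry of $\Gamma$ in its $k$- and $l$-slots inherited from the symmetric $N$-body wavefunction in the applications -- equates the two factors. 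Finally, for the self- or skew-adjoint operators $r_k,s_k$ that actually appear (projections and gradients), $\norm{A^*\Psi}$ coincides with $\norm{s_kF(z_j,z_k)r_kt_j\Gamma}$, which delivers the stated bound.

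The main obstacle lies in part (b): while pieces supported on disjoint factors commute freely, the three operators $r_k$, $F(z_j,z_k)$ and $s_k$ all act on the same factor $k$ and do not commute, so their relative ordering on that factor must be preserved throughout the adjoint and commutation manoeuvres. Tracking this bookkeeping, and in particular justifying the final identification of $\norm{A^*\Psi}$ with the right-hand side norm in which $r_k$ and $s_k$ appear swapped, is where the argument genuinely requires care.
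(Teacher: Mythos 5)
Your proof is correct and follows essentially the same route as the paper: part (a) is the same symmetrisation--plus--Cauchy--Schwarz computation, and part (b) packages the paper's hand-over-hand commutation moves as the single observation that $A^*$ and $B$ commute, relying on the same implicit hypotheses (self- or skew-adjointness of $r,s$ and symmetry of $\Gamma$ in the $k,l$ slots) that the paper also leaves unstated.
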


\begin{proof}
Using the symmetry of $\Gamma,\Lambda$ in all coordinates contained in $\mathcal{M}$, we find
\begin{equation*}\begin{split}
|\llr{\Gamma,O_{j,k}\Lambda}|&\leq \norm{\Gamma}\frac{1}{|\mathcal{M}|}\bigg\lVert\sum\limits_{m\in\mathcal{M}}O_{j,m}\Lambda\bigg\rVert\\
&\leq \norm{\Gamma}\left(\frac{1}{|\mathcal{M}|^2}\Bigg(\sum\limits_{\substack{n,m\in\mathcal{M}\\n\neq m}}\llr{O_{j,m}\Lambda,O_{j,n}\Lambda} + \sum\limits_{m\in\mathcal{M}}\norm{O_{j,m}\Lambda}^2\Bigg)\right)^\frac12\\
&=\norm{\Gamma}^2\left(\frac{|\mathcal{M}|-1}{|\mathcal{M}|}\llr{O_{j,k}\Lambda,O_{j,l}\Lambda}+\frac{1}{|\mathcal{M}|}\norm{O_{j,k}\Lambda}^2\right)^\frac{1}{2}.
\end{split}\end{equation*}
For part (b), we use that, for instance, $r_l$ and $F(z_j,z_k)$ commute, hence
\begin{align*}
|\llr{t_j\Gamma, s_kF(z_j,z_k)r_kr_lF(z_j,z_l)s_l t_j \Gamma}|
&=|\llr{r_l t_j\Gamma, s_k F(z_j,z_l) F(z_j,z_k) s_l r_k t_j\Gamma}|\\
&=|\llr{r_l t_j\Gamma, F(z_j,z_l) s_l s_k F(z_j,z_k) r_k t_j\Gamma}|\\
&\leq\norm{s_kF(z_j,z_k)r_k t_j\Gamma}^2.
\end{align*}
\end{proof}

The next lemma collects estimates for the time evolved condensate wavefunction.

\begin{lem}\label{lem:Phi}
$H^2(\R)$ solutions of the NLS equation \eqref{NLS} exist globally. 
\lemit{
	\item 	For any fixed time $t\in\R$, 
			\begin{align*}\begin{array}{ll}
			\norm{\Phi(t)}_{L^2(\R)}=1, \qquad &	\norm{\Phi(t)}_{H^1(\R)}\leq\mathfrak{e}(t),\\
			\norm{\Phi(t)}_{L^\infty(\R)}\leq\mathfrak{e}(t), \qquad & \norm{\Phi(t)}_{H^2(\R)}\ls \exp\left\{\mathfrak{e}^2(t)+\int_0^t\mathfrak{e}^2(s)\d s\right\}.\end{array}
			\end{align*}	
	\item	For sufficiently small $\varepsilon$ and fixed $t\in\R$,
			\label{cor:varphi:1} 
			\begin{align*}\begin{array}{lll}
			\norm{\chie}_{L^\infty(\R^2)}\ls \varepsilon^{-1}, \qquad & \norm{\nabla\chie}_{L^\infty(\R^2)}\ls\varepsilon^{-2},&\vphantom{\bigg(}\\
			\norm{\phe(t)}_{L^\infty(\R^3)}\ls\mathfrak{e}(t)\varepsilon^{-1}, \qquad &
			\norm{\na\phe(t)}_{L^\infty(\R^3)}\ls\mathfrak{e}(t)\varepsilon^{-2},\qquad &
			\norm{\nabla|\phe(t)|^2}_{L^2(\R^3)}\ls \mathfrak{e}(t)\varepsilon^{-2}.
			\end{array}
			\end{align*}
}
\end{lem}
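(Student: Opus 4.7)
The plan is to split the statement into (i) global well-posedness in $H^2$, (ii) the four bounds on $\Phi(t)$ alone, and (iii) the bounds involving $\chie$ and $\phe=\Phi\chie$, which follow from (ii) together with direct rescaling. For (i), I would invoke standard theory for a defocusing cubic one-dimensional NLS with a time-dependent external potential: assumption A3 supplies $\Vp(t,(\cdot,0))\in H^4(\R)$ with $\mathcal{C}^1$ time dependence, comfortably above what Cazenave-type fixed-point arguments need for a local $H^2$ solution, and the $H^1$ a priori bound derived below, combined with the defocusing sign $\bb\geq 0$, prevents blow-up. The $L^2$ conservation $\norm{\Phi(t)}_{L^2}=1$ is immediate from self-adjointness of $h(t)$. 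For the $H^1$ bound I use the almost-conservation of the energy: differentiating $\mathcal{E}^{\Phi(t)}(t)$ in time and using the NLS leaves only the $\dot\Vp$-contribution, so the fundamental theorem of calculus yields $|\mathcal{E}^{\Phi(t)}(t)|\leq\mathfrak{e}^2(t)-1$ (as already announced in \eqref{bounds:E}); dropping the non-negative quartic term gives
\begin{equation*}
\norm{\partial_x\Phi(t)}_{L^2}^2\leq \mathcal{E}^{\Phi(t)}(t)+\norm{\Vp(t)}_{L^\infty}\ls\mathfrak{e}^2(t),
\end{equation*}
so $\norm{\Phi(t)}_{H^1}\ls\mathfrak{e}(t)$. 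The $L^\infty$ bound is the one-dimensional Sobolev inequality $\norm{\Phi}_{L^\infty}^2\leq 2\norm{\Phi}_{L^2}\norm{\partial_x\Phi}_{L^2}$, using $\mathfrak{e}(t)\geq 1$.

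The main work is the $H^2$ bound. My plan is to apply $\partial_x^2$ to \eqref{NLS} and compute $\tfrac{\d}{\d t}\norm{\partial_x^2\Phi(t)}_{L^2}^2=2\mathrm{Im}\lr{\partial_x^2\Phi,\partial_x^2(h(t)\Phi)}$. The kinetic part $-\partial_x^4\Phi$ vanishes by integration by parts (its pairing with $\partial_x^2\Phi$ is real). The potential part $\partial_x^2(\Vp\,\Phi)=\Vp\,\partial_x^2\Phi+2\,\partial_x\Vp\,\partial_x\Phi+\partial_x^2\Vp\,\Phi$ contributes a bound of order $\mathfrak{e}^2(t)\norm{\Phi(t)}_{H^2}^2$, where I use that A3 provides $\Vp(t,(\cdot,0))\in H^4(\R)$ hence $\partial_x^j\Vp(t,(\cdot,0))\in L^\infty$ for $j\leq 2$. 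The cubic part $\partial_x^2(|\Phi|^2\Phi)=2\Phi|\partial_x\Phi|^2+\Phi^2\,\partial_x^2\bar\Phi+|\Phi|^2\partial_x^2\Phi+\text{similar}$ is estimated via $\norm{\Phi}_{L^\infty}^2\leq\mathfrak{e}^2(t)$ together with the Gagliardo--Nirenberg interpolation $\norm{\partial_x\Phi}_{L^\infty}^2\ls\norm{\partial_x\Phi}_{L^2}\norm{\partial_x^2\Phi}_{L^2}\ls\mathfrak{e}(t)\norm{\Phi(t)}_{H^2}$, again producing a bound of order $\mathfrak{e}^2(t)\norm{\Phi(t)}_{H^2}^2$ after Cauchy--Schwarz. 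Collecting,
\begin{equation*}
\tfrac{\d}{\d t}\norm{\Phi(t)}_{H^2}^2\ls\mathfrak{e}^2(t)\,\norm{\Phi(t)}_{H^2}^2,
\end{equation*}
and Gronwall, together with $\norm{\Phi_0}_{H^2}<\infty$ from A4, gives the stated double-exponential bound.

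For part (b), the $\chie$ bounds follow directly from the rescaling $\chie(y)=\varepsilon^{-1}\chi(y/\varepsilon)$ and $\chi\in\mathcal{C}^1_b$ (A2): $\norm{\chie}_{L^\infty}=\varepsilon^{-1}\norm{\chi}_{L^\infty}\ls\varepsilon^{-1}$ and $\norm{\nabla\chie}_{L^\infty}=\varepsilon^{-2}\norm{\nabla\chi}_{L^\infty}\ls\varepsilon^{-2}$. The product rule then gives $\norm{\phe(t)}_{L^\infty}\leq\norm{\Phi(t)}_{L^\infty}\norm{\chie}_{L^\infty}\ls\mathfrak{e}(t)\varepsilon^{-1}$ and $\norm{\na\phe(t)}_{L^\infty}\leq\norm{\partial_x\Phi(t)}_{L^\infty}\norm{\chie}_{L^\infty}+\norm{\Phi(t)}_{L^\infty}\norm{\nabla\chie}_{L^\infty}$, the second term being leading in $\varepsilon$ and yielding $\mathfrak{e}(t)\varepsilon^{-2}$. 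For $\norm{\nabla|\phe(t)|^2}_{L^2(\R^3)}$ I exploit the tensor structure $|\phe|^2=|\Phi|^2|\chie|^2$ and Fubini to split the norm as
\begin{equation*}
\norm{\partial_x|\Phi|^2}_{L^2(\R)}\norm{|\chie|^2}_{L^2(\R^2)}+\norm{|\Phi|^2}_{L^2(\R)}\norm{\nabla|\chie|^2}_{L^2(\R^2)},
\end{equation*}
and estimate each factor via the already-proven bounds on $\Phi$, the rescaling identities $\norm{|\chie|^2}_{L^2(\R^2)}\ls\varepsilon^{-1}$ and $\norm{\nabla|\chie|^2}_{L^2(\R^2)}\ls\varepsilon^{-2}$, and $\norm{\partial_x|\Phi|^2}_{L^2}\leq 2\norm{\Phi}_{L^\infty}\norm{\partial_x\Phi}_{L^2}\ls\mathfrak{e}^2(t)$. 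The $y$-component provides the leading $\mathfrak{e}(t)\varepsilon^{-2}$.

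The principal technical obstacle is the $H^2$ estimate: once the Gronwall inequality is set up it is essentially routine, but the multi-term expansion of $\partial_x^2(|\Phi|^2\Phi)$ and the Gagliardo--Nirenberg interpolation for $\partial_x\Phi$ require some care to close the bound with the correct time-dependent weight. Everything else reduces to elementary Sobolev embedding and rescaling.
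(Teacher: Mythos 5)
Your overall plan (energy estimates for parts (a) and scaling for (b)) is the right one, and your treatment of the $L^2$, $H^1$, $L^\infty$ bounds and of part (b) matches the paper. However, there is a genuine gap in your $H^2$ estimate, and it is precisely where the paper takes a different route.

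You propose differentiating $\norm{\partial_x^2\Phi(t)}^2_{L^2}$ in time. The resulting potential term $\Im\lr{\partial_x^2(\Vp\Phi),\partial_x^2\Phi}$ generates contributions proportional to $\norm{\partial_x\Vp(t,(\cdot,0))}_{L^\infty(\R)}$ and $\norm{\partial_x^2\Vp(t,(\cdot,0))}_{L^\infty(\R)}$, and you claim these can be absorbed into $\mathfrak{e}^2(t)$. They cannot: inspect the definition \eqref{def:e} --- the supremum there runs only over $\partial_t^i\partial_{y_k}^j\Vp$ with $k\in\{1,2\}$, i.e.\ $y$-derivatives, and assumption A3 grants $\Vp(t,(\cdot,0))\in H^4(\R)$ only pointwise in $t$, with no uniform-in-$t$ control. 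So the Grönwall coefficient produced by your computation is not $\mathfrak{e}^2(t)$ but something involving $\norm{\Vp(t,(\cdot,0))}_{H^3}$, which the hypotheses do not bound uniformly in $t$; the stated conclusion $\norm{\Phi(t)}_{H^2}\ls\exp\{\mathfrak{e}^2(t)+\int_0^t\mathfrak{e}^2(s)\d s\}$ does not follow. The paper circumvents this by never differentiating $\Vp$ in $x$: it instead computes
\[
\tfrac{\d}{\d t}\left(1+\norm{\dot{\Phi}(t)}^2_{L^2(\R)}\right)
=-2\Im\lr{\dot{\Vp}(t,(\cdot,0))\Phi,\dot{\Phi}}_{L^2(\R)}-2\bb\Im\lr{\Phi^2,\dot{\Phi}^2}_{L^2(\R)},
\]
which only involves $\norm{\dot{\Vp}(t)}_{L^\infty}$ and $\norm{\Phi(t)}_{L^\infty}$ --- both controlled by $\mathfrak{e}(t)$ --- and then recovers $\norm{\Phi''(t)}_{L^2}$ algebraically from the equation $i\dot{\Phi}=-\Phi''+\Vp\Phi+\bb|\Phi|^2\Phi$ via $\norm{\Phi''}_{L^2}\le\norm{\dot{\Phi}}_{L^2}+\norm{\Vp(t)}_{L^\infty}+\bb\norm{\Phi}_{L^\infty}^2$. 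This is the step you would need to replace. (A secondary, minor omission: these computations require $\Phi(t)\in H^4$, and the bound is then extended to $H^2$ data by continuity of the solution map, which the paper notes explicitly.)
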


\begin{proof}
For $\frac12<r\leq 4$ and $\Phi_0\in H^r(\R)$, \eqref{NLS} has a unique strong $H^r(\R)$-solution $\Phi\in\mathcal{C}(\R; H^r(\R))$ depending continuously on the initial data. The proof of this is sketched in Appendix \ref{appendix}. By assumption A4, $\Phi_0\in H^2(\R)$ and consequently $\Phi(t)\in H^2(\R)$. This implies $\frac{\d}{\d t}\norm{\Phi(t)}^2_{L^2(\R)}=0$ and by definition of $\mathcal{E}^{\Phi(t)}$ and $\mathfrak{e}(t)$,
\begin{align}
\norm{\Phi(t)}^2_{H^1(\R)}&\leq \mathcal{E}^{\Phi(t)}(t)+\norm{\Vp(t,\cdot)}_{L^\infty(\R^3)}\leq \mathfrak{e}^2(t).\label{H^1-bound}
\end{align}
Besides, $\Phi(t)\in H^2(\R)\subset\mathcal{C}^1(\R)$, hence
\begin{align*}
|\Phi(t,x)|^2
&=\int\limits_{-\infty}^x\left(\overline{\Phi'(t,\zeta})\Phi(t,\zeta)+\overline{\Phi(t,\zeta)}\Phi'(t,\zeta)\right)\d\zeta\\
&\leq \int\limits_{-\infty}^x\left(|\Phi'(t,\zeta)|^2+|\Phi(t,\zeta)|^2\right)\d\zeta=\norm{\Phi(t)}^2_{H^1(\R)}\leq\mathfrak{e}^2(t),\\
\norm{\tfrac{\partial}{\partial x}|\Phi(t)|^2}^2_{L^2(\R)}&\leq 4\int\limits_{\R}|\Phi'(t,x)|^2|\Phi(t,x)|^2\d x\leq 4\norm{\Phi(t)}_{L^\infty(\R)}^2\norm{\Phi(t)}^2_{H^1(\R)}\ls\mathfrak{e}^4(t).
\end{align*}
For $\Phi(t)\in H^4(\R)$, we obtain
\begin{align*}
\tfrac{\d}{\d t}\left(1+\norm{\dot{\Phi}(t)}^2_{L^2(\R)}\right)&=-2\Im\lr{\dot{\Vp}(t,(\cdot,0))\Phi(t),\dot{\Phi}(t)}_{L^2(\R)}-2\bb\Im\lr{\Phi(t)^2,\dot{\Phi}(t)^2}_{L^2(\R)}\\
&\leq 2\norm{\dot{\Vp}(t,\cdot)}_{L^\infty(\R^3)}(1+\norm{\dot{\Phi}(t)}^2_{L^2(\R)})+2\bb\norm{\Phi(t)}^2_{L^\infty(\R)}\norm{\dot{\Phi}(t)}^2_{L^2(\R)},
\end{align*}
hence by Grönwall's inequality and as $\norm{\Phi(t)}_{L^\infty(\R)}\leq\mathfrak{e}(t)$,
\begin{align*}
\norm{\dot{\Phi}(t)}^2_{L^2(\R)}
&\leq \left(1+\norm{\dot{\Phi}(0)}^2_{L^2(\R)}\right)\exp\left\{2\int_0^t\left(\norm{\dot{\Vp}(s,\cdot)}_{L^\infty(\R^3)}+\bb\mathfrak{e}^2(s)\right)\d s\right\}\\
&\ls\exp\left\{2\mathfrak{e}^2(t)+2\int_0^t\mathfrak{e}^2(s)\d s\right\}.
\end{align*}
This implies a bound for $\norm{\Phi(t)}_{H^2(\R)}$ because
\begin{align*}
\norm{\dot{\Phi}(t)}_{L^2(\R)}\geq \norm{\Phi''(t)}_{L^2(\R)}-\bb\norm{\Phi(t)}^2_{L^\infty(\R)}-\norm{\Vp(t,\cdot)}_{L^\infty(\R^3)}\gs\norm{\Phi''(t)}_{L^2(\R)}-\mathfrak{e}^2(t)
\end{align*}
and consequently
\begin{align*}
\norm{\Phi(t)}_{H^2(\R)}&\leq \norm{\Phi''(t)}_{L^2(\R)}+2\norm{\Phi(t)}_{H^1(\R)}\\
&\ls \mathfrak{e}^2(t)+\exp\left\{\mathfrak{e}^2(t)+\int_0^t\mathfrak{e}^2(s)\d s\right\}
\ls\exp\left\{\mathfrak{e}^2(t)+\int_0^t\mathfrak{e}^2(s)\d s\right\}.
\end{align*}
By continuity of the solution map, this bound extends to $\Phi(t)\in H^2(\R)$.
If the solution $\Phi(t)\in H^3(\R)\subset\mathcal{C}^2(\R)$, we find further
$$|\Phi'(x)|^2=
\int\limits_{-\infty}^x\left(\overline{\Phi'(\zeta})\Phi''(\zeta)+\overline{\Phi''(\zeta)}\Phi'(\zeta)\right)\d\zeta
\leq \norm{\Phi}_{H^2(\R)}^2,$$
which extends to $\Phi(t)\in H^2(\R)$ by continuity of the solution map.
For part (b), recall that $\chie(y)=\tfrac{1}{\varepsilon}\chi(\tfrac{y}{\varepsilon})$, hence $\norm{\chie}_{L^\infty(\R^2)}=\tfrac{1}{\varepsilon}\norm{\chi}_{L^\infty(\R^2)}\ls\tfrac{1}{\varepsilon}$ and analogously $\norm{\nabla\chie}_{L^\infty(\R^2)}\ls\tfrac{1}{\varepsilon^2}$. Together with (a), this implies the bounds for $\norm{\phe(t)}_{L^\infty(\R^3)}$ and $\norm{\nabla\phe(t)}_{L^\infty(\R^3)}$ as
\begin{align*}
|\na\phe(t,z)|^2
\leq|\Phi'(t,x)|^2|\chie(y)|^2+|\Phi(t,x)|^2|\nabla\chie(y)|^2
\ls\norm{\Phi(t)}^2_{H^2(\R)}\varepsilon^{-2}+\mathfrak{e}^2(t)\varepsilon^{-4}\ls\mathfrak{e}^2(t)\varepsilon^{-4}
\end{align*}
for any fixed time $t$ and $\varepsilon$ small enough. Finally,
\begin{align*}
\norm{\nabla|\phe(t)|^2}^2_{L^2(\R^3)}
=&\norm{\tfrac{\partial}{\partial x}|\Phi(t)|^2}^2_{L^2(\R)}\int\limits_{\R^2}|\chie(y)|^4\d y+\int\limits_{\R}|\Phi(t,x)|^4\d x\int\limits_{\R^2}|\nabla_y|\chie(y)|^2|^2\d y\\
\ls&\mathfrak{e}^4(t)\varepsilon^{-2}+4\mathfrak{e}^2(t)\int\limits_{\R^2}|\nabla_y\chie(y)|^2|\chie(y)|^2\d y
\ls\mathfrak{e}^2(t)\varepsilon^{-4}.
\end{align*}
\end{proof}

Now we prove some elementary facts enabling us to estimate one- and two-body potentials.

\begin{lem}\label{lem:pfp}
Let $t\in\R$ be fixed and let $j,k\in\{1\mydots N\}$. Let $g:\R^3\times\R^3\rightarrow\R$ be a measurable function such that $|g(z_j,z_k)|\leq G(z_k-z_j)$ almost everywhere for some $G:\R^3\rightarrow\R $. 

\lemit{
	\item	\label{lem:pfp:1}
			For $G\in L^1(\R^3)$,
			$$
			\onorm{p_jg(z_j,z_k)p_j}\ls\mathfrak{e}^2(t)\varepsilon^{-2}\norm{G}_{L^1(\R^3)}.
			$$
	\item	\label{lem:pfp:2}
			For $G\in L^2(\R^3)\cap L^\infty(\R^3)$,
			$$\onorm{g(z_j,z_k)p_j}=\onorm{p_jg(z_j,z_k)}\ls \mathfrak{e}(t)\varepsilon^{-1}\norm{G}_{L^2(\R^3)}.$$
	\item	\label{lem:pfp:3}
			For $G\in L^2(\R^3)\cap L^\infty(\R^3)$,
			\begin{equation*}\begin{split}
			\onorm{g(z_j,z_k)\na_jp_j}&=\onorm{\ket{\phe(t,z_j)}\bra{\na\phe(t,z_j)}g(z_j,z_k)}
			\ls\mathfrak{e}(t)\varepsilon^{-2}\norm{G}_{L^2(\R^3)}.
			\end{split}\end{equation*}
	\item	\label{lem:pfp:4}
			Now let $g:\R\times\R\rightarrow\R$ be a measurable function such that $|g(x_j,x_k)|\leq G(x_k-x_j)$ almost everywhere for some 
			$G\in L^2(\R)\cap L^\infty(\R)$. Then
			\begin{align*}
			\onorm{g(x_j,x_k)\pp_j}&=\onorm{\pp_jg(x_j,x_k)}\leq\mathfrak{e}(t)\norm{G}_{L^2(\R)},\\
			\onorm{g(x_j,x_k)\partial_{x_j}\pp_j}&=\onorm{\ket{\Phi(t,x_j)}\bra{\partial_{x_j}\Phi(t,x_j)}g(x_j,x_k)}
			\leq \norm{\Phi}_{H^2(\R)}\norm{G}_{L^2(\R)}.
			\end{align*}
}

\end{lem}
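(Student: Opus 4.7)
The plan is to reduce each of the four estimates to a straightforward operator-norm calculation that exploits the rank-one structure of $p_j = \ket{\phe(t,z_j)}\bra{\phe(t,z_j)}$ (respectively $\pp_j = \ket{\Phi(t,x_j)}\bra{\Phi(t,x_j)}\otimes\mathbbm{1}_{L^2(\R^2)}$), and then to invoke the $L^\infty$- and $H^2$-bounds on $\phe,\nabla\phe,\Phi,\Phi'$ established in Lemma~\ref{lem:Phi}. Since all operators act trivially on the remaining $N-2$ factors of the tensor product, it suffices to work on $L^2(\R^3_{z_j}\times\R^3_{z_k})$ throughout.

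For part (a), I would write out $p_j g(z_j,z_k)p_j\psi$ explicitly as $\phe(z_j)\,M(z_k)\,\lr{\phe,\psi(\cdot,z_k)}_{L^2(\R^3)}$ with $M(z_k):=\int|\phe(z'_j)|^2 g(z'_j,z_k)\d z'_j$, and read off $\onorm{p_jgp_j}\leq\norm{M}_{L^\infty}\leq\norm{\phe}_{L^\infty}^2\norm{G}_{L^1}$ by pulling $|\phe|^2$ out in $L^\infty$; Lemma~\ref{lem:Phi}(b) then produces the factor $\mathfrak{e}^2(t)\varepsilon^{-2}$. For part (b), the identity $\onorm{gp_j}=\onorm{p_jg^*}$ is immediate from taking adjoints, and a direct computation of $\norm{gp_j\psi}^2=\iint|g|^2|\phe(z_j)|^2|\lr{\phe,\psi(\cdot,z_k)}|^2\d z_j\d z_k$ yields $\onorm{gp_j}\leq\norm{\phe}_{L^\infty}\norm{G}_{L^2}$ after pulling $\norm{\phe}_{L^\infty}^2$ outside the inner integral. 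Part (c) is identical in structure, with $\phe$ replaced by $\nabla\phe$: the claimed identity follows from integration by parts, $p_j\nabla_j=-\ket{\phe}\bra{\nabla\phe}$, which gives $(g\nabla_j p_j)^*=-\ket{\phe}\bra{\nabla\phe}g^*$ and hence equal operator norms, and the resulting bound uses $\norm{\nabla\phe}_{L^\infty}\ls\mathfrak{e}(t)\varepsilon^{-2}$ from Lemma~\ref{lem:Phi}(b). Part (d) is the one-dimensional analogue of (b) and (c), carried out with $\pp_j$ and integration in $x_j$ only; the required bounds $\norm{\Phi}_{L^\infty(\R)}\leq\mathfrak{e}(t)$ and $\norm{\Phi'}_{L^\infty(\R)}\leq\norm{\Phi}_{H^2(\R)}$ both come from Lemma~\ref{lem:Phi}(a), the second via the one-dimensional Sobolev embedding already exploited in the proof of that lemma.

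No genuine obstacle arises: everything boils down to pulling the uniform bounds on $\phe$ (or $\Phi$) and its derivative outside the relevant integral and then using $\norm{G}_{L^1}$ or $\norm{G}_{L^2}$ to absorb the remaining dependence on $g$. The only minor care is in bookkeeping the adjoint identities in (b), (c), (d) — in particular verifying that complex conjugation of $g$ is harmless because only the envelope $|g|\leq G$ enters the final estimates — and in handling the vector-valued nature of $\nabla_j p_j$ componentwise in~(c).
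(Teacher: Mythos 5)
Your proof follows essentially the same route as the paper's: exploit the rank-one structure of $p_j$ (resp.\ $\pp_j$), pull the $L^\infty$-bound on $\phe$, $\na\phe$ (resp.\ $\Phi$, $\Phi'$) out of the $z_j$-integral, bound the remaining integral by $\norm{G}_{L^1}$ or $\norm{G}_{L^2}$, and invoke Lemma~\ref{lem:Phi}; the paper routes parts (b)--(d) through part (a) via $\onorm{A}^2=\onorm{A^*A}$, whereas you compute the norm directly, but that is a cosmetic difference. The only slip is a stray minus sign in your adjoint identity --- since $(\na_jp_j)^*=(\ket{\na\phe}\bra{\phe})^*=\ket{\phe}\bra{\na\phe}$ one gets $(g\na_jp_j)^*=\ket{\phe}\bra{\na\phe}g$ with no minus --- which is harmless here because only operator norms enter.
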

\begin{proof}
Let $\psi\in L^2(\R^{3N})$ and drop the time dependence of $\phe$ and $\Phi$ in the notation. Then
\begin{align*}
\norm{p_j g(z_j,z_k)p_j\psi}&=\norm{\ket{\phe(z_j)}\bra{\phe(z_j)}g(z_j,z_k)\ket{\phe(z_j)}\bra{\phe(z_j)}\psi}\\
&\leq \int_{\R^3}|\phe(z_j)|^2|g(z_j,z_k)|\d z_j \;\norm{p_j\psi}\\
&\leq \norm{\phe}^2_{L^\infty(\R^3)}\int_{\R^3}|G(z_j-z_k)|\d z_j\;\norm{\psi}.
\end{align*}
The multiplication operators corresponding to $G$ and $g$ as well as $p_j$, $\na_jp_j$ and $\partial_{x_j}\pp_j$ are bounded. This implies the first equalities in (b) to (d).
The second equalities follow from
\begin{align*}
\onorm{g(z_j,z_k)p_j}^2&=\sup\limits_{\substack{\psi\in L^2(\R^{3N})\\\norm{\psi}=1}}\llr{\psi,p_j |g(z_j,z_k)|^2 p_j\psi}\leq \onorm{p_j|g(z_j,z_k)|^2p_j}\overset{\text{(a)}}{\ls} \norm{G}^2_{L^2(\R^3)}\mathfrak{e}^2(t)\varepsilon^{-2},\\
\onorm{G(x_j)\pp_j}^2&\leq \onorm{\pp_j|G(x_j)|^2\pp_j}\leq \norm{G}^2_{L^2(\R)}\norm{\Phi}_{L^\infty(\R)}^2,\\
\onorm{g(z_j,z_k)\na_jp_j}^2 &=
\sup\limits_{\substack{\psi\in L^2(\R^{3N})\\\norm{\psi}=1}}
\llr{\psi,\ket{\phe(z_j)}\bra{\na_j\phe(z_j)}g(z_j,z_k)|^2\ket{\na_j\phe(z_j)}\bra{\phe(z_j)}\psi}\\
&\leq \int\limits_{\R^3}|\na\phe(z_j)|^2G(z_k-z_j)^2\d z_j\;\onorm{p_j}^2
\leq \norm{\na\phe}^2_{L^\infty(\R^3)}\norm{G}^2_{L^2(\R^3)}
\end{align*}
and analogously for the second part of (d).
\end{proof}

\subsection{A priori estimate of the kinetic energy}
In this section, we prove estimates for the kinetic energy $\norm{\nabla_j\psi^{N,\varepsilon}(t)}$ and related quantities, which follow from the fact that the renormalised energy per particle $E^{\psi^{N,\varepsilon}(t)}(t)$ is bounded by $\mathfrak{e}(t)$. Particularly meaningful is assertion (a) of the following lemma: it states that the part of the wavefunction with one particle excited in the confined directions is of order $\varepsilon$. The lemma provides a sufficient estimate for most of the terms in Proposition \ref{prop:alpha}. To bound \eqref{gamma_b_3:2}, we require a better estimate (see Section \ref{subsec:E_kin}).

\begin{lem}\label{lem:a_priori}
Let $\varepsilon$ be small enough and $t\in\R$ be fixed. Then
\lemit{
	\item	\label{lem:a_priori:1}
			$\norm{\qc_1\psi^{N,\varepsilon}(t)}\leq \mathfrak{e}(t)\varepsilon, 
			\qquad \norm{\hat{l}\qc_1\psi^{N,\varepsilon}(t)}\leq\mathfrak{e}(t) N^\xi\varepsilon,$
	\item	\label{lem:a_priori:2}
			$
			\onorm{\partial_{x_1}\pp_1}\leq \mathfrak{e}(t), \quad
			\onorm{\partial_{x_1}^2\pp_1}\leq \norm{\Phi(t)}_{H^2(\R)}, \quad
			\onorm{\nabla_{y_1}\pc_1}\ls \varepsilon^{-1}, \quad
			\onorm{\na_1p_1}\ls\varepsilon^{-1},
		$	
	\item	\label{lem:a_priori:3}
			$\norm{\partial_{x_1}\qp_1\psi^{N,\varepsilon}(t)}\ls \mathfrak{e}(t),\qquad
			\norm{\na_1\qc_1\psi^{N,\varepsilon}(t)}\ls\mathfrak{e}(t), 
			\qquad \norm{\na_1 \hat{l}\qc_1\psi^{N,\varepsilon}(t)}\ls N^\xi\mathfrak{e}(t),
			$
	\item	\label{lem:a_priori:4}
			$\norm{\partial_{x_1}\psi^{N,\varepsilon}(t)}\leq\mathfrak{e}(t),   \qquad \norm{\nabla_{y_1}\psi^{N,\varepsilon}(t)}\ls \varepsilon^{-1},
			\qquad \norm{\na_1\psi^{N,\varepsilon}(t)}\ls\varepsilon^{-1},$
	\item 	\label{lem:a_priori:5}
		$
			\norm{\na_1\hat{l}\pc_1\qp_1\qp_2\psi^{N,\varepsilon}(t)}\ls\varepsilon^{-1},\qquad 
			\norm{\na_1\pc_1\qp_1\qc_2\psi^{N,\varepsilon}(t)}\ls \mathfrak{e}(t). 
			$
}
\end{lem}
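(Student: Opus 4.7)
\emph{Plan.} All the bounds are consequences of the a priori energy inequality $E^{\psi^{N,\varepsilon}(t)}(t)\leq\mathfrak{e}^2(t)-1$ from \eqref{bounds:E}. Since $w_\beta\geq 0$ (A1) and $\Vp$ is bounded (A3), dropping the interaction and using the symmetry of $\psi^{N,\varepsilon}(t)$ reduces the energy bound to the one-particle statement
$$\langle\psi^{N,\varepsilon}(t),(-\partial_{x_1}^2+H_1^\perp-E_0/\varepsilon^2)\psi^{N,\varepsilon}(t)\rangle\leq\mathfrak{e}^2(t),\qquad H_1^\perp:=-\Delta_{y_1}+\tfrac{1}{\varepsilon^2}V^\perp(y_1/\varepsilon).$$
Both summands are non-negative, which gives $\norm{\partial_{x_1}\psi^{N,\varepsilon}(t)}\leq\mathfrak{e}(t)$ directly. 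The gap assumption in A2 translates via rescaling into $H_1^\perp-E_0/\varepsilon^2\geq (g/\varepsilon^2)\qc_1$ for some $g>0$, which produces $\norm{\qc_1\psi^{N,\varepsilon}(t)}\lesssim\varepsilon\,\mathfrak{e}(t)$ and establishes (a). Writing $-\Delta_{y_1}=(H_1^\perp-E_0/\varepsilon^2)-\varepsilon^{-2}(V^\perp(y_1/\varepsilon)-E_0)$ and invoking $(V^\perp-E_0)_-\in L^\infty$ (A2) then yields $\norm{\nabla_{y_1}\psi^{N,\varepsilon}(t)}\lesssim\varepsilon^{-1}$, completing (d).

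Part (b) I would handle directly: each operator has rank-one or product rank-one structure, so the operator norm is controlled by $L^2$-norms of $\Phi$, $\Phi'$, $\chie$, $\nabla\chie$ via Lemma~\ref{lem:Phi} and the scaling $\chie(y)=\varepsilon^{-1}\chi(y/\varepsilon)$. Part (c) then follows by inserting $\mathbbm{1}=\pp_1+\qp_1$ or $\mathbbm{1}=\pc_1+\qc_1$ in front of the gradient and combining (b) with (d). The only nontrivial case is $\norm{\nabla_{y_1}\qc_1\psi^{N,\varepsilon}(t)}\lesssim\mathfrak{e}(t)$, for which I would use the quadratic form identity $\langle\qc_1\psi,-\Delta_{y_1}\qc_1\psi\rangle=\langle\psi,(H_1^\perp-E_0/\varepsilon^2)\psi\rangle-\varepsilon^{-2}\langle\qc_1\psi,(V^\perp(y_1/\varepsilon)-E_0)\qc_1\psi\rangle$ (using that $\pc_1(H_1^\perp-E_0/\varepsilon^2)=0$), and absorb the singular $\varepsilon^{-2}$ via the $\norm{\qc_1\psi^{N,\varepsilon}(t)}^2\lesssim\varepsilon^2\,\mathfrak{e}^2(t)$ bound from (a).

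\textbf{Main obstacle.} The delicate step is combining the many-body weight $\hat l$ with the gradient $\na_1$, because $\hat l$ involves $p_1,q_1$ and therefore fails to commute with $\na_1$ through the $z_1$-dependence of $\phe(z_1)$. My plan is an absorption trick. Split $P_k=p_1\Pi_k+q_1\Pi_{k-1}$ where $\Pi_j$ acts only on particles $2,\dots,N$; since $\qc_1p_1=0$ and $\qc_1q_1=\qc_1$, this yields
$$\qc_1\hat l=\qc_1\hat L^{(1)},\qquad \hat L^{(1)}:=\sum_{k}l(k)\Pi_{k-1},$$
with $\hat L^{(1)}$ acting on particles $2,\dots,N$ and $\onorm{\hat L^{(1)}}\leq\onorm{\hat l}\leq N^{\xi}$ by Lemma~\ref{lem:l:2}. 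Because $\hat L^{(1)}$ then commutes with both $\na_1$ and $\qc_1$, the second estimate in (a) and the third estimate in (c) follow from their ``ungauged'' counterparts multiplied by $N^\xi$. For (e), the analogous two-body decomposition $P_k=p_1p_2\Pi_k^{(12)}+(p_1q_2+q_1p_2)\Pi_{k-1}^{(12)}+q_1q_2\Pi_{k-2}^{(12)}$ together with $\qp_jp_j=0$ and $\qp_jq_j=\qp_j$ gives $\hat l\,\pc_1\qp_1\qp_2=\pc_1\qp_1\qp_2\hat L^{(1,2)}$ with $\hat L^{(1,2)}$ acting on particles $3,\dots,N$, and likewise for the $\qc_2$ variant. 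One then reduces (e) to the estimates $\norm{\na_1\qp_1\pc_1\psi^{N,\varepsilon}(t)}\lesssim\varepsilon^{-1}$ and $\norm{\na_1\pc_1\qp_1\qc_2\psi^{N,\varepsilon}(t)}\lesssim\mathfrak{e}(t)$, both of which follow by combining part (b) with the energy-based bounds from (a), (c) and (d) after splitting $\na_1=(\partial_{x_1},\nabla_{y_1})$ and exploiting that $\partial_{x_1}$ commutes with $\pc_1$ and $\nabla_{y_1}$ with $\pp_1$.
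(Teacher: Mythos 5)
Your proof of parts (a)--(d) is correct and follows essentially the same route as the paper: reduce to the one-body quadratic form via non-negativity of $\wb$ and symmetry, use the spectral gap of the rescaled transverse Hamiltonian for the $\qc_1$-bound, handle (b) through rank-one structure and Lemma~\ref{lem:Phi}, and split $\na_1=(\partial_{x_1},\nabla_{y_1})$ in (c). The ``absorption trick'' $\qc_1\hat l=\qc_1\hat L^{(1)}$ is a clean reformulation of what the paper does through Lemma~\ref{lem:commutators:2}, and $\onorm{\hat L^{(1)}}=\sup_k l(k)\le N^\xi$ correctly delivers the $N^\xi$ factor that appears in the targets in (a) and (c).

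However, there is a genuine gap in part (e), first estimate. Your absorption identity $\hat l\,\pc_1\qp_1\qp_2=\pc_1\qp_1\qp_2\hat L^{(1,2)}$ is algebraically correct, but the bound you extract from it is
$$\norm{\na_1\hat l\pc_1\qp_1\qp_2\psi^{N,\varepsilon}(t)}\le\onorm{\hat L^{(1,2)}}\,\norm{\na_1\pc_1\qp_1\qp_2\psi^{N,\varepsilon}(t)}\le N^\xi\,\norm{\na_1\pc_1\qp_1\psi^{N,\varepsilon}(t)}\lesssim N^\xi\varepsilon^{-1},$$
while the lemma claims $\lesssim\varepsilon^{-1}$ with no $N^\xi$. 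The point is that $\onorm{\hat L^{(1,2)}}$ is simply $\onorm{\hat l}\le N^\xi$, whereas the paper eliminates this factor by \emph{not} pulling $\hat L^{(1,2)}$ out as an operator norm. Instead, the two projections $\qp_1,\qp_2$, together with the symmetry of $\psi^{N,\varepsilon}(t)$ in the coordinates $\{2,\dots,N\}$, produce an extra factor $\hat n$ through Lemma~\ref{lem:fqq}/Corollary~\ref{cor:fqq}, which then combines with the other half of Lemma~\ref{lem:l:2}, namely $\onorm{\hat l\,\hat n}\lesssim 1$, to cancel the $N^\xi$. Concretely, the paper bounds $\norm{\partial_{x_1}\hat l\qp_1\qp_2\psi}\lesssim\onorm{\hat l\hat n}\norm{\partial_{x_1}\qp_1\psi}\lesssim\mathfrak{e}(t)$ and $\norm{\hat l\qp_1\qp_2\psi}\lesssim\norm{\hat l\hat n^2\psi}\lesssim\norm{\hat n\psi}\lesssim 1$. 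This sharper bound is actually needed downstream (in the estimate of $\gamma_b^{(2)}$), where the extra $N^\xi$ would spoil the rate $\left(\tfrac{\varepsilon^2}{\mu}\right)^{1/2}$, since admissibility of $(N,\varepsilon)$ alone does not guarantee $N^\xi\left(\tfrac{\varepsilon^2}{\mu}\right)^{1/2}\to 0$. To repair your proof, keep the decomposition but observe that $\qp_1\qp_2\hat L^{(1,2)}\psi=\hat l\qp_1\qp_2\psi$, and then invoke Lemma~\ref{lem:fqq:3} (respectively Corollary~\ref{cor:fqq:2}) together with $\onorm{\hat l\hat n}\lesssim 1$ rather than the crude operator-norm bound on $\hat L^{(1,2)}$.
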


\begin{proof} 
Abbreviating $\psi^{N,\varepsilon}(t)\equiv \psi$, we compute
\begin{align*}
E^{\psi}(t)&=\tfrac{1}{N}\llr{\psi,H_\beta(t)\psi}-\tfrac{E_0}{\varepsilon^2}\\
&=\llr{\psi,\tfrac{1}{N}\bigg(\sum\limits_{j=1}^N\left(-\partial_{x_j}^2+\left(-\Delta_{y_j}+\tfrac{1}{\varepsilon^2}V^\perp(\tfrac{y_j}{\varepsilon})-\tfrac{E_0}{\varepsilon^2}\right)+\Vp(t,\z_j)\right)+\sum\limits_{i<j}\wb(\z_i-\z_j)\bigg)\psi}\\
&\geq \norm{\partial_{x_1}\psi}^2+\llr{\qc_1\psi,\left(-\Delta_{y_1}+\tfrac{1}{\varepsilon^2}V^\perp(\tfrac{y_1}{\varepsilon})-\tfrac{E_0}{\varepsilon^2}\right)\qc_1\psi} -\norm{\Vp(t)}_{L^\infty(\R^3)}
\end{align*}
since $\wb\in\mathcal{W}_{\beta,\eta}$ is non-negative and $\left(-\Delta_{y_1}+\tfrac{1}{\varepsilon^2}V^\perp(\tfrac{y_1}{\varepsilon})-\tfrac{E_0}{\varepsilon^2}\right)\chie(y_1)=0 $.
$\tfrac{E_0}{\varepsilon^2}$ is the smallest eigenvalue of $-\Delta_{y_1}+\tfrac{1}{\varepsilon^2}V^\perp(\tfrac{y_1}{\varepsilon})$ and as a consequence of the rescaling by $\varepsilon$, the spectral gap to the next eigenvalue is of order $\varepsilon^{-2}$. Hence
$$\llr{\qc_1\psi,\left(-\Delta_{y_1}+\tfrac{1}{\varepsilon^2}V^\perp(\tfrac{y_1}{\varepsilon})-\tfrac{E_0}{\varepsilon^2}\right)\qc_1\psi}\gs\tfrac{1}{\varepsilon^2}\llr{\psi,\qc_1\psi},$$
which implies
\begin{equation}
\norm{\partial_{x_1}\psi}^2+\tfrac{1}{\varepsilon^2}\norm{\qc_1\psi}^2
\leq \norm{\Vp(t)}_{L^\infty(\R^3)}+|E^{\psi}(t)|\leq\mathfrak{e}^2(t).\label{eqn:a_priori:2}
\end{equation}
Besides, by assumption A2, $\norm{(V^\perp-E_0)_-}_{L^\infty(\R^2)}\ls 1$, hence
\begin{align*}
\mathfrak{e}^2(t)\geq&\llr{\qc_1\psi,\left(-\Delta_{y_1}+\tfrac{1}{\varepsilon^2}V^\perp(\tfrac{y_1}{\varepsilon})-\tfrac{E_0}{\varepsilon^2}\right)\qc_1\psi}\\
=&\norm{\nabla_{y_1}\qc_1\psi}^2+\tfrac{1}{\varepsilon^2}\llr{\qc_1\psi,\left(V^\perp(\tfrac{y_1}{\varepsilon})-E_0\right)_+\qc_1\psi}-\tfrac{1}{\varepsilon^2}\llr{\qc_1\psi,\left(V^\perp(\tfrac{y_1}{\varepsilon})-E_0\right)_-\qc_1\psi}\\
\geq &\norm{\nabla_{y_1}\qc_1\psi}^2-\tfrac{1}{\varepsilon^2}\norm{(V^\perp-E_0)_-}_{L^\infty(\R^2)}\norm{\qc_1\psi}^2
\gs\norm{\nabla_{y_1}\qc_1\psi}^2-\mathfrak{e}^2(t)
\end{align*}
and consequently $\norm{\nabla_{y_1}\qc_1\psi}^2\ls \mathfrak{e}^2(t).$
The remaining inequalities of (a) to (d) follow by Lemma~\ref{lem:l:2}, Lemma \ref{lem:commutators:2}, by using that $q_1^{(\Phi)}=1-p_1^{(\Phi)}$ and from
$\onorm{\partial_{x_1}p_1}\leq\onorm{\partial_{x_1}\pp_1}\leq\norm{\Phi'(t)}_{L^2(\R)}$ and $\onorm{\nabla_{y_1}\pc_1} \leq\norm{\nabla\chie}_{L^2(\R^2)}  \ls\varepsilon^{-1}$. 
For the second part of (d), note that
$$\norm{\nabla_{y_1}\psi}\leq\norm{\nabla_{y_1}\qc_1\psi}+\norm{\nabla_{y_1}\pc_1\psi}\ls\mathfrak{e}(t)+\varepsilon^{-1}\ls \varepsilon^{-1}$$
for sufficiently small $\varepsilon$ and fixed $t\in\R$.
Assertion (e) is a consequence of parts (a) to (d) and Corollary \ref{cor:fqq}, Lemma \ref{lem:l} and Lemma \ref{lem:fqq}:
\begin{align*}
\norm{\na_1\hat{l}\pc_1\qp_1\qp_2\psi}^2&\leq\norm{\partial_{x_1}\hat{l}\qp_1\qp_2\psi}^2+\onorm{\nabla_{y_1}\pc_1}^2\norm{\hat{l}\qp_1\qp_2\psi}^2\ls \mathfrak{e}^2(t)+\varepsilon^{-2}\norm{\hat{n}\psi}^2,\\
\norm{\na_1\pc_1\qp_1\qc_2\psi}^2&\leq \norm{\partial_{x_1}\qp_1\psi}^2+\onorm{\nabla_{y_1}\pc_1}^2\norm{\qc_2\psi}^2\ls\mathfrak{e}^2(t).
\end{align*}
\end{proof}

For the last lemma in this section, we make use of Lemma \ref{lem:a_priori:1} to prove an estimate which is crucial for the control of $\gamma_a(t)$.
\begin{lem}\label{lem:taylor}
Let $f:\R\times\R^3\rightarrow\R$ such that $f(t)\in\mathcal{C}^1(\R^3)$ and $\nabla_y f(t)\in L^\infty(\R^3)$ for any $t\in\R$. Then 
\lemit{
	\item $\norm{(f(t,z_1)-f(t,(x_1,0))\pc_1\psi^{N,\varepsilon}(t)}\leq\varepsilon\norm{\nabla_yf(t)}_{L^\infty(\R^3)},$
	\item $\norm{(f(t,z_1)-f(t,(x_1,0))\psi^{N,\varepsilon}(t)}\leq\varepsilon\left(\mathfrak{e}(t)\norm{f(t)}_{L^\infty(\R^3)}+\norm{\nabla_yf(t)}_{L^\infty(\R^3)}\right).$
}
\end{lem}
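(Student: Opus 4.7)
The plan is to prove both assertions by elementary Taylor-type arguments in the $y$-variable, combined with the fact that the transverse ground state $\chie$ is concentrated on a length scale $\varepsilon$.

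For part (a), I would first use the fundamental theorem of calculus to write
\begin{equation*}
f(t,(x_1,y_1))-f(t,(x_1,0))=\int_0^1 y_1\cdot\nabla_yf(t,(x_1,sy_1))\,\d s,
\end{equation*}
which gives the pointwise estimate $|f(t,z_1)-f(t,(x_1,0))|\leq |y_1|\,\norm{\nabla_yf(t)}_{L^\infty(\R^3)}$. Using that $\pc_1=\ket{\chie(y_1)}\bra{\chie(y_1)}$ factorises the $y_1$-integral, this yields
\begin{equation*}
\norm{(f(t,z_1)-f(t,(x_1,0)))\pc_1\psi^{N,\varepsilon}(t)}^2\leq \norm{\nabla_yf(t)}_{L^\infty(\R^3)}^2\int_{\R^2}|y|^2|\chie(y)|^2\d y\,\norm{\pc_1\psi^{N,\varepsilon}(t)}^2.
\end{equation*}
By the rescaling $\chie(y)=\varepsilon^{-1}\chi(y/\varepsilon)$, the transverse second moment equals $\varepsilon^2\int_{\R^2}|y|^2|\chi(y)|^2\d y$, which is finite because assumption A2 places $E_0$ strictly below $\inf\sigma_\mathrm{ess}(-\Delta_y+V^\perp)$, so that $\chi$ decays exponentially (see \cite{griesemer2004}). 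Together with $\norm{\pc_1\psi^{N,\varepsilon}(t)}\leq 1$, this gives (a).

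For part (b), I would split $\psi^{N,\varepsilon}(t)=\pc_1\psi^{N,\varepsilon}(t)+\qc_1\psi^{N,\varepsilon}(t)$ and use the triangle inequality. The first piece is estimated by part (a), giving a contribution $\leq\varepsilon\norm{\nabla_yf(t)}_{L^\infty(\R^3)}$. For the second piece I would use the crude pointwise bound $|f(t,z_1)-f(t,(x_1,0))|\leq 2\norm{f(t)}_{L^\infty(\R^3)}$ combined with the a priori estimate $\norm{\qc_1\psi^{N,\varepsilon}(t)}\leq\mathfrak{e}(t)\varepsilon$ from Lemma \ref{lem:a_priori:1}. Summing both contributions produces the claimed bound (the factor of $2$ being absorbed into the $\mathfrak{e}(t)\,\norm{f(t)}_{L^\infty}$ term in the stated inequality).

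There is no real obstacle here; the argument is short. The only conceptual point worth highlighting is that the $\varepsilon$-smallness has two different origins depending on whether the first particle lies in the transverse ground state (where it comes from the finite second moment of $\chi$ after $\varepsilon$-rescaling) or in its orthogonal complement (where it comes from the energy-based a priori bound in Lemma \ref{lem:a_priori:1}). Both mechanisms produce the same order in $\varepsilon$, which is exactly why combining them keeps the overall estimate linear in $\varepsilon$ and allows the lemma to later control the one-body contribution $\gamma_a$ to $\alpha_\xi'(t)$.
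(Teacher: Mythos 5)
Your proof is correct and follows essentially the same route as the paper: fundamental theorem of calculus in the $y$-variable, $\varepsilon$-rescaling of $\chie$ together with the finite second moment of $\chi$ (via exponential decay from \cite{griesemer2004}) for part (a), and the decomposition $1=\pc_1+\qc_1$ together with Lemma~\ref{lem:a_priori:1} for part (b). The only cosmetic point is that, as you note, the absolute constants (the transverse second moment and the factor $2$) mean the stated $\leq$ should really be read as $\ls$, which is consistent with the paper's convention.
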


\begin{proof}
For the first part, we expand $f(t,(x_1,\cdot))$ around $y=0$, which yields
\begin{align*}
\norm{(f(t,z_1)-f(t,(x_1,0))\pc_1\psi^{N,\varepsilon}(t)}^2
&=\norm{\pc_1\psi^{N,\varepsilon}(t)}^2\int\limits_{\R^2}\d y_1|\chie(y_1)|^2\left(f(t,z_1)-f(t,(x_1,0))\right)^2\\
&\leq \tfrac{1}{\varepsilon^2}\int\limits_{\R^2}\d y_1|\chi(\tfrac{y_1}{\varepsilon})|^2\left(\int\limits_0^1\d s\nabla_yf(x_1,sy_1)\cdot y_1\right)^2\\
&\leq\varepsilon^2\int\limits_{\R^2}\d y|y|^2|\chi(y)|^2\norm{\nabla_yf(t)}_{L^\infty(\R^3)}^2
\ls\varepsilon^2\norm{\nabla_yf(t)}_{L^\infty(\R^3)}^2.
\end{align*}
The last step follows because $\chi$ decays exponentially by \cite[Theorem 1]{griesemer2004}
since $E_0<\sigma_\mathrm{ess}(\Delta_y+V^\perp)$ (A2).
To prove the second part, we insert $1=\qc_1+\pc_1$ and apply Lemma \ref{lem:a_priori:1}.
\end{proof}

\subsection{Proof of Proposition \ref{prop:alpha}}\label{subsec:prop:alpha}
Let us from now on drop the time dependence of $\Phi$, $\phe$ and $\psi^{N,\varepsilon}$ in the notation and further abbreviate $\psi^{N,\varepsilon}\equiv\psi$. The time derivative of $\alpha_\xi(t)$ is bounded by
\begin{equation}\label{eqn:dt_alpha:1}
\left|\tfrac{\d}{\d t}\alpha_\xi(t)\right|\leq\left|\tfrac{\d}{\d t}\llr{\psi,\hat{m}\psi}\right| +\left|\tfrac{\d}{\d t}\big|E^{\psi}(t)-\mathcal{E}^{\Phi}(t)\big|\right|.
\end{equation}
For the second term in~\eqref{eqn:dt_alpha:1}, we compute first
\begin{equation}\label{eqn:dt_alpha:2}
\left|\tfrac{\d}{\d t}\big(E^{\psi}(t)-\mathcal{E}^{\Phi}(t)\big)\right|
=\left|\llr{\psi,\dot{\Vp}(t,\z_1)\psi}-\lr{\Phi,\dot{\Vp}\left(t,(x,0)\right)\Phi}_{L^2(\R)}\right|.
\end{equation}
By \cite[Theorem 6.17]{lieb_loss}, $\left|\tfrac{\d}{\d t}\big|E^{\psi}(t)-\mathcal{E}^{\Phi}(t)\big|\right| = \left|\tfrac{\d}{\d t}\big(E^{\psi}(t)-\mathcal{E}^{\Phi}(t)\big)\right|$ for almost every $t\in\R$ because $t\mapsto\tfrac{\d}{\d t}\big(E^{\psi}(t)-\mathcal{E}^{\Phi}(t)\big)$ is continuous due to assumption A3.
The first term in~\eqref{eqn:dt_alpha:1} yields
\begin{align}
\tfrac{\d}{\d t}\llr{\psi,\hat{m}\psi}
\overset{\text{\ref{lem:derivative_m:3}}}{=}&\;i\llr{\psi,\Big[H_\beta(t)-\sum\limits_{j=1}^N h_j(t),\hat{m}\Big]\psi}\nonumber\\
\overset{\text{\ref{lem:derivative_m:2}}}{=}
&\;\,iN\llr{\psi,\left[\Vp(t,\z_1)-\Vp\left(t,(x_1,0)\right),\hat{m}\right]\psi}+ i\tfrac{N(N-1)}{2}\llr{\psi,\left[Z_\beta^{(12)},\hat{m}\right]\psi}\nonumber\\
\overset{\text{\ref{lem:commutators:5}}}{=}
&\;\,iN\llr{\psi,\left[\Vp(t,\z_1)-\Vp\left(t,(x_1,0)\right),\hat{m}\right]\psi}&\label{eqn:dt_alpha:3}\\
&+ i\tfrac{N(N-1)}{2}\llr{\psi,\left[Z_\beta^{(12)},Q_0(\hat{m}-\hat{m}_2)+Q_1(\hat{m}-\hat{m}_1)\right]\psi},\label{eqn:dt_alpha:4}
\end{align}
where $Q_0:=p_1p_2$, $Q_1:=p_1q_2+q_1p_2$ and $Q_2:=q_1q_2$. To expand~\eqref{eqn:dt_alpha:4}, we write the commutator explicitly and insert $1=Q_0+Q_1+Q_2$ appropriately before or after $Z_\beta^{(12)}$. Terms with the same $Q_\mu$ on both sides cancel as a consequence of Lemma~\ref{lem:commutators:2}. Hence
\begin{align}
\frac{\eqref{eqn:dt_alpha:4}}{N(N-1)}=&
\tfrac{i}{2}\llr{\psi,\Big((Q_1+Q_2)Z_\beta^{(12)}(\hat{m}-\hat{m}_2)Q_0-Q_0(\hat{m}-\hat{m}_2)Z_\beta^{(12)}(Q_1+Q_2)\Big)\psi}\nonumber\\
&+\tfrac{i}{2}\llr{\psi,\Big((Q_0+Q_2)Z_\beta^{(12)}(\hat{m}-\hat{m}_1)Q_1-Q_1(\hat{m}-\hat{m}_1)Z_\beta^{(12)}(Q_0+Q_2)\Big)\psi}\nonumber\\
=&\tfrac{i}{2}\llr{\psi,\Big(Q_1(\hat{m}_{-1}-\hat{m}_1)Z_\beta^{(12)}Q_0+Q_2(\hat{m}_{-2}-\hat{m})Z_\beta^{(12)}Q_0\Big)\psi}
\nonumber\\
&-\tfrac{i}{2}\llr{\psi,\Big(Q_0Z_\beta^{(12)}(\hat{m}_{-1}-\hat{m}_1)Q_1+Q_0Z_\beta^{(12)}(\hat{m}_{-2}-\hat{m})Q_2\Big)\psi}
\nonumber\\
&+\tfrac{i}{2}\llr{\psi,\Big(Q_0Z_\beta^{(12)}(\hat{m}-\hat{m}_1)Q_1+Q_2(\hat{m}_{-1}-\hat{m})Z_\beta^{(12)}Q_1\Big)\psi}
\nonumber\\
&-\tfrac{i}{2}\llr{\psi,\Big(Q_1(\hat{m}-\hat{m}_1)Z_\beta^{(12)}Q_0+Q_1Z_\beta^{(12)}(\hat{m}_{-1}-\hat{m})Q_2\Big)\psi}
\nonumber\\
=&\Im\llr{\psi,Q_1(\hat{m}-\hat{m}_{-1})Z_\beta^{(12)}Q_0\psi}+\Im\llr{\psi,Q_2(\hat{m}-\hat{m}_{-2}) Z_\beta^{(12)}Q_0\psi}\nonumber\\
&+\Im\llr{\psi,Q_2(\hat{m}-\hat{m}_{-1}) Z_\beta^{(12)}Q_1\psi}.\nonumber
\end{align}
To simplify this expression, note that
\begin{align}
\hat{m}-\hat{m}_{-1}&=\sum\limits_{k=0}^Nm(k)P_k-\sum\limits_{k=1}^Nm(k-1)P_k=\sum\limits_{k=1}^N\left(m(k)-m(k-1)\right)P_k +m(0)P_0\nonumber\\
&=-\hat{m}^a_{-1}+m(0)P_0\nonumber
\end{align}
and analogously
\begin{equation*}
\hat{m}-\hat{m}_{-2}=-\hat{m}^b_{-2}+m(0)P_0+m(1)P_1.
\end{equation*}
Using that $Q_1P_0=Q_2P_0=Q_2P_1=0$, we obtain consequently
\begin{align}
\frac{\eqref{eqn:dt_alpha:4}}{N(N-1)}=&
-2\Im\llr{\psi,q_1p_2\hat{m}^a_{-1}Z_\beta^{(12)}p_1p_2\psi}\label{eqn:dt_alpha:5}\\
&-\Im\llr{\psi,q_1q_2\hat{m}^b_{-2}Z_\beta^{(12)}p_1p_2\psi}\label{eqn:dt_alpha:6}\\
&-2\Im\llr{\psi,q_1q_2\hat{m}^a_{-1}Z_\beta^{(12)}p_1q_2\psi}\label{eqn:dt_alpha:7},
\end{align}
where we have in~\eqref{eqn:dt_alpha:5} and~\eqref{eqn:dt_alpha:7} exploited the symmetry of $\psi$ in coordinates $1$ and $2$.
According to Corollary~\ref{cor:p:3}, $q=\qc+\qp\pc$, hence
\begin{align}
\eqref{eqn:dt_alpha:5}=&-2\Im\llr{\qc_1\psi,p_2\hat{m}^a_{-1}\wbot p_1p_2\psi}\label{eqn:dt_alpha:8}\\
&-2\Im\llr{\psi,\qp_1\hat{m}^a_{-1}\pc_1p_2Z_\beta^{(12)}p_1p_2\psi}\label{eqn:dt_alpha:9}.
\end{align}
In \eqref{eqn:dt_alpha:8}, we have used that the contribution of $|\Phi(x_1)|^2+|\Phi(x_2)|^2$ vanishes as $\qc_1|\Phi(x_1)|^2\pc_1=\qc_1|\Phi(x_2)|^2\pc_1=0$.
Similarly, we expand~\eqref{eqn:dt_alpha:6} and~\eqref{eqn:dt_alpha:7} into terms containing $\qc$ and terms containing $\pc_1\pc_2\wbot \pc_1\pc_2$:
\begin{align}
\eqref{eqn:dt_alpha:6}=&-\Im\llr{\qc_1\psi,q_2\hat{m}^b_{-2}\wbot p_1p_2\psi}
-\Im\llr{\qc_2\psi,\qp_1\pc_1\hat{m}^b_{-2}\wbot p_1p_2\psi}\nonumber\\
&-\Im\llr{\psi,\qp_1\qp_2\hat{m}^b_{-2}\pc_1\pc_2\wbot p_1p_2\psi}\nonumber\\
=&-\Im\llr{\qc_1\psi,q_2(1+\pc_2)\hat{m}^b_{-2}\wbot p_1p_2\psi}\label{eqn:dt_alpha:10}\\
&-\Im\llr{\psi,\qp_1\qp_2\hat{m}^b_{-2}\pc_1\pc_2\wbot p_1p_2\psi}\label{eqn:dt_alpha:11}
\end{align}
and
\begin{align}
\eqref{eqn:dt_alpha:7}=&-2\Im\llr{\qc_1\psi,q_2\hat{m}^a_{-1}\wbot p_1q_2\psi}\label{eqn:dt_alpha:12}\\
&-2\Im\llr{\qc_2\psi,\qp_1\pc_1\hat{m}^a_{-1}\wbot p_1q_2\psi}\label{eqn:dt_alpha:13}\\
&-2\Im\llr{\psi,\qp_1\qp_2\hat{m}^a_{-1}\pc_1\pc_2\wbot p_1\qc_2\psi}\label{eqn:dt_alpha:14}\\
&-2\Im\llr{\psi,\qp_1\qp_2\hat{m}^a_{-1}\pc_1\pc_2\wbot p_1\pc_2\qp_2\psi} \label{eqn:dt_alpha:15}\\
&+\tfrac{2\bb}{N-1}\Im\llr{\psi,q_1q_2\hat{m}^a_{-1}|\Phi(x_1)|^2p_1q_2\psi}.\label{eqn:dt_alpha:16}
\end{align}
Finally, we insert $1=p_1+q_1$ on both sides of the commutator in~\eqref{eqn:dt_alpha:3} and apply Lemma~\ref{lem:commutators:2}. Analogously to above, we obtain
\begin{align}
\eqref{eqn:dt_alpha:3}=&iN\llr{\psi,(p_1+q_1)\big(\Vp(t,\z_1)-\Vp(t,(x_1,0))\big)\hat{m}(p_1+q_1)\psi}\nonumber\\
&-iN\llr{\psi,(p_1+q_1)\hat{m}\big(\Vp(t,\z_1)-\Vp(t,(x_1,0))\big)(p_1+q_1)\psi}\nonumber\\
=&-2N\Im\llr{\psi,q_1\hat{m}^a_{-1}\big(\Vp(t,\z_1)-\Vp(t,(x_1,0))\big)p_1\psi}.\label{eqn:dt_alpha:19}
\end{align}
Collecting and regrouping all terms arising from~\eqref{eqn:dt_alpha:1} yields 
$\gamma_a=\eqref{eqn:dt_alpha:2}+\eqref{eqn:dt_alpha:19}$, 
$\gamma_b=\eqref{eqn:dt_alpha:4}$,
$\gamma_b^{(1)}=N(N-1)~\eqref{eqn:dt_alpha:9}$,
$\gamma_b^{(2)}=N(N-1)\big[\big(\eqref{eqn:dt_alpha:8}+\eqref{eqn:dt_alpha:10}\big)+\big(\eqref{eqn:dt_alpha:12}+\eqref{eqn:dt_alpha:13}\big)+\eqref{eqn:dt_alpha:14}\big]$
and $\gamma_b^{(3)}=N(N-1)\big(\eqref{eqn:dt_alpha:11}+\eqref{eqn:dt_alpha:15}+\eqref{eqn:dt_alpha:16}\big)$.
\qed

\subsection{Proof of Proposition \ref{prop:gamma}}\label{subsec:prop:gamma}

\subsubsection{Proof of the bound for $\gamma_a(t)$}\label{subsec:gamma_a}
As $2N\hat{m}^a_{-1}\ls\hat{l}$ for $\hat{l}$ from Lemma~\ref{lem:l}, we obtain with Lemma \ref{lem:psi-Phi}, Lemma \ref{lem:taylor}, Lemma \ref{lem:fqq:2} and Lemma \ref{lem:l:2} 
\begin{align*}
|\eqref{gamma_a:1}|
&\ls\left|\llr{\psi,\left(\dot{\Vp}(t,z_1)-\dot{\Vp}(t,(x_1,0))\right)\psi}\right|+\left|\llr{\psi,\dot{\Vp}(t,(x_1,0))\psi}-\lr{\Phi,\dot{\Vp}(t,(x,0))\Phi}_{L^2(\R)}\right|\\
&\ls\mathfrak{e}^3(t)\varepsilon+\mathfrak{e}(t)\llr{\psi,\hat{n}\psi},\\
|\eqref{gamma_a:2}|&\ls \norm{\hat{l}\hat{n}\psi}\norm{(\Vp(t,z_1)-\Vp(t,(x_1,0)))\pc_1\psi}\ls\mathfrak{e}^2(t)\varepsilon.
\end{align*}
\qed

\subsubsection{Proof of the bound for $\gamma_b^{(1)}(t)$}\label{subsec:gamma:b:1}
To estimate $\gamma_b^{(1)}$, we need to prove that $N\wb$ is close to the effective potential $\bb|\Phi|^2$.
As $(N-1)\hat{m}^a_{-1}\leq \hat{l}$, we obtain 
\begin{align*}
\big|\gamma_b^{(1)}\big|
&\ls \left|\llr{\hat{l}\qp_1\psi,\pc_1p_2\left(N\wbot -b_{N,\varepsilon}|\Phi(x_1)|^2+(b_{N,\varepsilon}-\tfrac{N}{N-1}\bb)|\Phi(x_1)|^2\right)p_1p_2\psi}\right|\\
&\overset{\text{\ref{lem:Phi}}}{\ls} \left|\llr{\hat{l}\qp_1\psi,\pc_1p_2\left(N\wbot -b_{N,\varepsilon}|\Phi(x_1)|^2\right)\pc_1p_2\pp_1\psi}\right|
+ \left((\tfrac{N}{\varepsilon^2})^{-\eta}+N^{-1}\right)\mathfrak{e}^2(t)
\end{align*}
for $\mu$ small enough and with $\eta$ from Definition \ref{def:W} since $\wb\in\mathcal{W}_{\beta,\eta}$ and as $\norm{\hat{l}\qp_1\psi}\ls 1$ by Lemma~\ref{lem:l:1}. 
Writing the action of the projectors explicitly, we obtain by definition of $b_{N,\varepsilon}$
\begin{align*}
\pc_1p_2b_{N,\varepsilon}|\Phi(x_1)|^2\pc_1p_2 & = b_{N,\varepsilon}|\Phi(x_1)|^2\pc_1p_2=
N\Bigg(\,\int\limits_{\R^2}\d{y}'_1|\chie(y'_1)|^4|\Phi(x_1)|^2\norm{\wb}_{L^1(\R^3)}\Bigg)\pc_1p_2,\\
\pc_1p_2N\wbot \pc_1p_2&=N \Bigg(\,\int\limits_{\R^2}\d y'_1|\chie(y'_1)|^2\int\limits_{\R^3}\d z'_2
|\phe(z'_2)|^2\wb(z_1''-z'_2)\Bigg)\pc_1p_2,
\end{align*}
where $z''_1:=(x_1,y'_1)$. The substitution $z'_2\mapsto z:=z''_1-z'_2$ and subtraction of both lines leads to
\begin{equation}\label{eqn:Gamma(x_1)}
\Gamma(x_1):=N\int\limits_{\R^2}|\chie(y'_1)|^2\d y'_1
\Bigg(\,\int\limits_{\R^3}|\phe(z''_1-z)|^2\wb(\z)\d z
-|\phe(z''_1)|^2\norm{\wb}_{L^1(\R)}\Bigg).
\end{equation}
Let us first consider an analogous expression where $|\phe|^2$ is replaced by some $g\in C^\infty_0(\R^3)$. Expanding $g(z_1''-\cdot)$ around $z_1''$ yields
\begin{align}
\int\limits_{\R^3} g(z''_1-z)\wb(\z)\d z
&=g(z''_1)\norm{\wb}_{L^1(\R^3)}-\int\limits_{\R^3} \d z\int\limits_0^1\nabla g(z''_1-sz)\cdot z\wb(\z)\d s\nonumber\\
&=:g(z''_1)\norm{\wb}_{L^1(\R^3)}+R(z''_1),\nonumber
\end{align}
where
$$|R(z_1'')|\leq\sup\limits_{\substack{s\in[0,1]\\z\in\R^3}}|\nabla g(z_1''-sz)|\int\limits_{\R^3}\d z|z|\wb(z).$$
Hence 
$$\norm{R}^2_{L^2(\R^3)}\ls\varepsilon^4 N^{-2}\mu^2\norm{\nabla g}^2_{L^2(\R^3)}$$
because $|z|\leq \mu$ for $\z\in\supp\wb$ and as $\wb\in\mathcal{W}_{\beta,\eta}$ implies 
\begin{equation}\label{eqn:int:wb}
\int_{\R^3}\wb(\z)\d\z\ls\varepsilon^2N^{-1} b_{N,\varepsilon}=\varepsilon^2N^{-1}(b_{N,\varepsilon}-\bb)+\varepsilon^2N^{-1}\bb\ls\varepsilon^2N^{-1}.
\end{equation}
Consequently,
\begin{align*}
&\Bigg\lVert N\int\limits_{\R^2}|\chie(y'_1)|^2\d y'_1 \Bigg(\,
\int\limits_{\R^3} g(z''_1-z)\wb(\z)\d z-g(z''_1)\norm{\wb}_{L^1(\R^3)}\bigg)\Bigg\rVert^2_{L^2(\R)}\\
&\qquad\leq N^2\int\limits_\R \d x_1\left|\,\int\limits_{\R^2}\d y'_1|\chie(y'_1)|^2R(z''_1)\right|^2
\leq N^2\norm{|\chie|^2}^2_{L^2(\R^2)}\norm{R}^2_{L^2(\R^3)}
\ls\mu^2\varepsilon^2\norm{\nabla g}^2_{L^2(\R^3)},
\end{align*}
where we have in the second step used Hölder's inequality. By density, this bound extends to $g\in H^1(\R^3)$ and in particular to $g\equiv |\phe|^2$, hence
\begin{equation}\label{eqn:norm:Gamma}
\norm{\Gamma}_{L^2(\R)}
\ls \mu\varepsilon\norm{\nabla|\phe|^2}_{L^2(\R^3)}\overset{\text{\ref{lem:Phi}}}{\ls}\tfrac{\mu}{\varepsilon}\mathfrak{e}(t)
\end{equation}
and
$$
\big|\gamma_b^{(1)}\big|\leq\norm{\hat{l}\qp_1\psi}\onorm{\pp_1\Gamma(x_1)}+\left(N^{-1}+(\tfrac{N}{\varepsilon^2})^{-\eta}\right)\mathfrak{e}^2(t)\overset{\text{\ref{lem:pfp:4}}}{\ls} \left(\tfrac{\mu}{\varepsilon}+(\tfrac{N}{\varepsilon^2})^{-\eta}+N^{-1}\right)\mathfrak{e}^2(t).
$$
\qed

\subsubsection{Proof of the bound for $\gamma_b^{(2)}(t)$}\label{subsec:gamma_b^2}
Let us first define the functions needed for the integration by parts of the interaction.
\begin{definition}\label{def:he}
Define $\he:\R^3\rightarrow\R$ by
\[
\he(\z):=\begin{cases}
 \frac{1}{4\pi}\left(\;\int\limits_{\R^3}\frac{\wb(\zeta)}{|\z-\zeta|}\d\zeta-\int\limits_{\R^3}\frac{\varepsilon}{|\zeta|}\frac{\wb(\zeta)}{|\zeta^*-\z|}\d\zeta\right)& \text{for }|\z| < \varepsilon,\\
	0 & \text{else}
\end{cases}
\]
where 
\begin{equation*}
\zeta^*:=\tfrac{\varepsilon^2}{|\zeta|^2}\zeta.
\end{equation*}
We will abbreviate
\begin{equation*}
\heij:=\he(\z_i-\z_j).
\end{equation*}
\end{definition}


\begin{lem}\label{lem:he}
Let $\mu\ll\varepsilon$. Then
\lemit{
	\item	\label{lem:he:1}
			$\he$ solves the boundary value problem
			\begin{equation}\label{eqn:boundary_problem}
			\begin{cases}
				\Delta\he(\z)=\wb(\z) 	& \text{for } \z\in B_{\varepsilon}(0),\vphantom{\bigg(}\\
				\hphantom{\Delta}\he(\z)=0			& \text{for } \z\in\partial B_{\varepsilon}(0),
			\end{cases}
			\end{equation}
			where $B_\varepsilon(0):=\{\z\in\R^3:|\z|<\varepsilon\}$.
	\item	\label{lem:he:3}
			$\norm{\na\he}_{L^\infty(\R^3)}\ls N^{-1}\mu^{-2}\varepsilon^2, \qquad
			\norm{\na\he}_{L^2(\R^3)}\ls N^{-1}\mu^{-\frac12}\varepsilon^2.$
}
\end{lem}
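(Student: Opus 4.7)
The plan is to recognise $\he$ as (a variant of) the Dirichlet Green's function on $B_\varepsilon(0)$ constructed via the method of images, where $\zeta^*=\tfrac{\varepsilon^2}{|\zeta|^2}\zeta$ is the inversion of $\zeta$ through the sphere of radius $\varepsilon$. The crucial geometric input throughout is that, by Definition \ref{def:W}(c), $\supp\wb\subseteq\overline{B_{C\mu}(0)}$ for some constant $C$, and since $\mu\ll\varepsilon$, we have $\supp\wb\subset B_\varepsilon(0)$ and moreover $|\zeta^*|=\varepsilon^2/|\zeta|\geq\varepsilon^2/(C\mu)\gg\varepsilon$ for every $\zeta\in\supp\wb$. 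This separation is what drives every bound.

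For part (a), I would differentiate under the integral sign. The first summand is the Newton potential of $\wb$, so $\Delta$ applied to it produces $\pm\wb$ in the sense of distributions on $B_\varepsilon(0)$. The second summand is harmonic on $B_\varepsilon(0)$: for $\zeta\in\supp\wb$ the image point $\zeta^*$ lies in $|\zeta^*|\geq\varepsilon^2/\mu$, well outside $B_\varepsilon(0)$, so $z\mapsto 1/|z-\zeta^*|$ is smooth and harmonic on $B_\varepsilon(0)$, and the $\zeta$-integral preserves harmonicity. For the boundary condition, a short algebraic computation using $|z|=\varepsilon$ yields
\[
|z-\zeta^*|^2=\tfrac{\varepsilon^2}{|\zeta|^2}\bigl(|\zeta|^2-2z\cdot\zeta+\varepsilon^2\bigr)=\tfrac{\varepsilon^2}{|\zeta|^2}|z-\zeta|^2,
\]
so that $\tfrac{\varepsilon}{|\zeta|}\tfrac{1}{|z-\zeta^*|}=\tfrac{1}{|z-\zeta|}$ on $\partial B_\varepsilon(0)$, forcing the two integrals to cancel.

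For part (b), I would treat the two summands of $\nabla\he$ separately. The gradient of the Newton-potential part is estimated pointwise by $\int\wb(\zeta)|z-\zeta|^{-2}\d\zeta$; since the singularity $|z-\zeta|^{-2}$ is integrable over $|\zeta-z|\ls\mu$, this is bounded by $\norm{\wb}_{L^\infty}\mu\ls\varepsilon^2N^{-1}\mu^{-2}$ after using Definition \ref{def:W}(a). For the image term, the separation $|z-\zeta^*|\geq\varepsilon^2/\mu-\varepsilon\gs\varepsilon^2/\mu$ makes the integrand uniformly smooth, so its contribution is bounded by $\tfrac{\varepsilon}{\mu}\cdot\tfrac{\mu^2}{\varepsilon^4}\norm{\wb}_{L^1}\ls N^{-1}\mu/\varepsilon$, which is of strictly lower order than the Newton part and absorbed. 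For the $L^2$-bound, I use that $\he=0$ on $\partial B_\varepsilon(0)$ and outside, so integration by parts gives
\[
\norm{\nabla\he}_{L^2(\R^3)}^2=\int_{B_\varepsilon(0)}\wb(z)\,\he(z)\,\d z\leq\norm{\he}_{L^\infty(\R^3)}\norm{\wb}_{L^1(\R^3)}.
\]
An analogous pointwise argument shows $\norm{\he}_{L^\infty}\ls\norm{\wb}_{L^\infty}\mu^2\ls\varepsilon^2N^{-1}\mu^{-1}$ (Newton part dominates, image part smaller by a factor $\mu/\varepsilon$), while $\norm{\wb}_{L^1}\ls\varepsilon^2N^{-1}$ follows from the convergence of $b_{N,\varepsilon}$ in Definition \ref{def:W}(d), exactly as already noted in \eqref{eqn:int:wb}. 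Multiplying gives $\norm{\nabla\he}_{L^2}^2\ls\varepsilon^4N^{-2}\mu^{-1}$, whence the claim.

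The proof is essentially bookkeeping; the only substantive point is the consistent use of the assumption $\mu\ll\varepsilon$ to control the image contribution, both in showing harmonicity (so that $\zeta^*\notin B_\varepsilon(0)$) and in producing estimates that are strictly smaller than the Newton-potential contribution. I anticipate no serious obstacle; one must merely be careful to track that $\|\wb\|_\infty\mu^k$ with $k=1,2$ yields the correct scaling in the advertised bounds, and to keep the sign convention between $\Delta\he=\wb$ and the classical identity $-\Delta(1/4\pi|z|)=\delta$ straight throughout (it does not affect the final estimates).
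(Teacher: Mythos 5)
Your proposal is correct, and part (a) together with the $L^\infty$ bound in part (b) follow essentially the same route as the paper: invoke the method-of-images Green's function for $B_\varepsilon(0)$ (the paper simply states it, you additionally verify the boundary cancellation and harmonicity of the image term), then estimate $\nabla$ of the Newton potential and of the image term separately, using $\supp\wb\subseteq\overline{B_{C\mu}(0)}$ and $|\zeta^*|\geq\varepsilon^2/\mu\gg\varepsilon$. Your $L^2$ bound, however, takes a genuinely different and arguably cleaner path. The paper squares its pointwise estimates for $\nabla h^{(1)}$ and $\nabla h^{(2)}$ (one bound for $|z|\leq 2\mu$, another for $2\mu\leq|z|\leq\varepsilon$) and integrates directly over the two regimes. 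You instead integrate by parts using $\Delta\he=\wb$ and $\he|_{\partial B_\varepsilon}=0$ to get $\norm{\nabla\he}_{L^2}^2=\big|\int\wb\he\big|\leq\norm{\he}_{L^\infty}\norm{\wb}_{L^1}$, reducing the $L^2$ estimate to a pointwise bound on $\he$ itself (plus the already-known $\norm{\wb}_{L^1}\ls\varepsilon^2N^{-1}$). This trades a two-regime quadrature for one extra $L^\infty$ estimate and exposes the energy-form structure; both give $N^{-2}\varepsilon^4\mu^{-1}$.

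Two small slips worth tidying. First, with $\Delta\he=\wb\geq 0$ and $\he=0$ on $\partial B_\varepsilon(0)$ one has $\norm{\nabla\he}_{L^2}^2=-\int\wb\,\he$ (the minus sign is missing; it is harmless because $\he\leq 0$ by the maximum principle, so the right-hand side is nonnegative). Second, in the image-term bound the factorization $\tfrac{\varepsilon}{\mu}\cdot\tfrac{\mu^2}{\varepsilon^4}$ is imprecise as written: on $\supp\wb$ one has $\tfrac{\varepsilon}{|\zeta|}\geq\tfrac{\varepsilon}{\mu}$, not $\leq$, so the two factors cannot be maximized independently. The correct step, as in the paper, is to note the $|\zeta|$-dependence cancels: $\tfrac{\varepsilon}{|\zeta|}\tfrac{1}{|z-\zeta^*|^2}\ls\tfrac{\varepsilon}{|\zeta|}\tfrac{|\zeta|^2}{\varepsilon^4}=\tfrac{|\zeta|}{\varepsilon^3}\leq\tfrac{\mu}{\varepsilon^3}$, whence the same bound $N^{-1}\mu/\varepsilon$. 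Neither point affects the validity of the conclusion.
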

\begin{proof}
Green's function for the problem \eqref{eqn:boundary_problem} is 
$G(z,\zeta)=\tfrac{1}{4\pi}\left(\frac{1}{|\zeta-z|}-\frac{\varepsilon}{|\zeta|}\frac{1}{|z-\zeta^*|}\right),$ hence $\he\big|_{\overline{B_{\varepsilon}(0)}}$ is the unique solution of \eqref{eqn:boundary_problem}.
For part (b), define 
$$h^{(1)}(\z):=\begin{cases} \int\limits_{\R^3}\frac{\wb(\zeta)}{|\z-\zeta|}\d\zeta & \text{ for }|z|<\varepsilon,\\
0 & \text{ else},
\end{cases} \quad\text{ and }\quad
h^{(2)}(\z):=\begin{cases} \int\limits_{\R^3}\frac{\varepsilon}{|\zeta|}\frac{\wb(\zeta)}{|\zeta^*-\z|}\d\zeta & \text{ for }|z|<\varepsilon,\\
0 & \text{ else},
\end{cases}	
$$
hence $\he(\z)=:\tfrac{1}{4\pi}\left(h^{(1)}(\z)+h^{(2)}(\z)\right).$ We estimate $h^{(1)}$ and $h^{(2)}$ separately.\\

\noindent\emph{Estimate of $|\na h^{(1)}|$.}
Let $|\z|\leq 2\mu$ and substitute $\zeta\mapsto\zeta':=\zeta-\z$. As $\supp\wb\subseteq \overline{B_\mu(0)}$, we conclude that $|\zeta'|\leq|\zeta|+|\z|\leq 3\mu$ for $\zeta\in\supp\wb$ and consequently
\begin{align*}
|\na h^{(1)}(\z)|&\leq\norm{\wb}_{L^\infty(\R^3)}\int\limits_{|\zeta|\leq\mu}\frac{1}{|\z-\zeta|^2}\d\zeta
\ls \left(\tfrac{N}{\varepsilon^2}\right)^{-1+3\beta}\int\limits_{|\zeta'|\leq 3\mu}\frac{1}{|\zeta'|^2}\d\zeta'
\ls N^{-1}\varepsilon^2\mu^{-2}.
\end{align*}
For $2\mu\leq |\z|< \varepsilon$, note that $\zeta\in\supp\wb$ implies $|\zeta|\leq \mu\leq \frac{1}{2}|\z|$, hence
$|\z-\zeta|\geq|\z|-|\zeta|\geq \tfrac12|\z|$
and consequently 
\begin{align*}
|\na h^{(1)}(\z)|&\leq\frac{4}{|\z|^2}\int\limits_{\R^3}\wb(\zeta)\d\zeta\ls N^{-1}\varepsilon^2|\z|^{-2}
\ls N^{-1}\varepsilon^2\mu^{-2}
\end{align*}
due to \eqref{eqn:int:wb}.
Hence
\begin{align*}
\int\limits_{\R^3}\big|\na h^{(1)}(z)\big|^2\d z
&\ls \int\limits_{|\z|\leq 2\mu}N^{-2}\varepsilon^{4}\mu^{-4}\d\z
+\int\limits_{2\mu\leq|\z|\leq \varepsilon}N^{-2}\varepsilon^4\frac{1}{|\z|^4}\d\z
\ls N^{-2}\varepsilon^4\mu^{-1}.
\end{align*}
\emph{Estimate of $|\na h^{(2)}|$.}
$\zeta\in\supp\wb$ implies $|\zeta|\leq \mu$, hence $|\zeta^*|=\frac{\varepsilon^2}{|\zeta|}\geq\frac{\varepsilon^2}{\mu}$. For $\mu$ sufficiently small that $\frac{\varepsilon}{\mu}>2$, we observe 
$|\z|<\varepsilon<\frac{1}{2}\frac{\varepsilon^2}{\mu}\leq\frac{1}{2}|\zeta^*|$
and consequently $|\zeta^*-\z|\geq |\zeta^*|-|\z| > \frac{1}{2}|\zeta^*|=\frac{1}{2}\frac{\varepsilon^2}{|\zeta|}$. This yields
\begin{align*}
\big|\na h^{(2)}(z)\big|&=\int\limits_{\R^3}\frac{\varepsilon}{|\zeta|}\frac{\wb(\zeta)}{|\zeta^*-\z|^2}\d\zeta
\ls\varepsilon^{-3}\norm{\wb}_{L^\infty(\R^3)}\int\limits_{|\zeta|\leq\mu}|\zeta|\d\zeta
\ls N^{-1}\varepsilon^{-1}{\mu}<N^{-1}\varepsilon^2\mu^{-2}
\end{align*}
and consequently
$
\int_{\R^3}\big|\na h^{(2)}(\z)\big|^2\d z\ls N^{-2}\mu^2\varepsilon<N^{-2}\varepsilon^4\mu^{-1}.
$
\end{proof}
Besides, we need a smooth step function to prevent contributions from the boundary when integrating by parts over the ball $\overline{B_\varepsilon(0)}$.

\begin{definition}\label{def:theta}
Define $\te:\R^3\rightarrow [0,1]$ by
\begin{equation*}
\te(\z)=\begin{cases}
	1 & \text{for }|\z|\leq\mu,\\
	\theta_\varepsilon(|\z|) & \text{for } \mu<|\z|<\varepsilon,\\
	0 &  \text{for }|\z|\geq \varepsilon,
\end{cases}
\end{equation*}
where $\theta_\varepsilon:[\mu,\varepsilon]\rightarrow [0,1]$ is given by
\begin{equation}\label{def:theta_smooth}
\theta_\varepsilon(x):=\frac{\exp\left(-\frac{\varepsilon-\mu}{\varepsilon-x}\right)}
{\exp\left(-\frac{\varepsilon-\mu}{\varepsilon-x}\right)+\exp\left(-\frac{\varepsilon-\mu}{x-\mu}\right)}.
\end{equation}
Clearly, $\theta_\varepsilon$ is a smooth, decreasing  function with $\theta_\varepsilon(\mu)=1$ and $\theta_\varepsilon(\varepsilon)=0$.
We will write
\begin{equation*}
\teij:=\te(\z_i-\z_j).
\end{equation*}
\end{definition}

\begin{lem}\label{lem:theta}
Let $\mu\ll\varepsilon$. Then 
\lemit{
	\item	\begin{tabular}{p{4cm}p{4cm}}
			$\norm{\te}_{L^\infty(\R^3)}= 1$,
			&$\norm{\te}_{L^2(\R^3)}\ls \varepsilon^{\frac32},$
			\end{tabular}
	\item	\begin{tabular}{p{4cm}p{4cm}}
			$\norm{\na\te}_{L^\infty(\R^3)}\ls \varepsilon^{-1}$,&
			$\norm{\na\te}_{L^2(\R^3)}\ls \varepsilon^{\frac12}$.
			\end{tabular}
}
\end{lem}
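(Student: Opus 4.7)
The plan is to reduce the estimates on $\te$ to bounds on a fixed, $\varepsilon$- and $\mu$-independent smooth transition function on the unit interval, so that all norms follow from elementary volume computations.

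For part (a), both bounds are immediate from the construction. Since $\theta_\varepsilon$ takes values in $[0,1]$ by Definition~\ref{def:theta} and $\te(0)=1$, we have $\norm{\te}_{L^\infty(\R^3)}=1$. Since moreover $\supp\te\subseteq\overline{B_\varepsilon(0)}$, the $L^2$-norm satisfies $\norm{\te}^2_{L^2(\R^3)}\le\norm{\te}^2_{L^\infty(\R^3)}\cdot|B_\varepsilon(0)|\ls\varepsilon^3$.

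For part (b), the key observation is that the affine substitution $t:=(x-\mu)/(\varepsilon-\mu)$, $x\in[\mu,\varepsilon]$, reduces \eqref{def:theta_smooth} to a fixed profile
\begin{equation*}
\tilde\theta(t):=\frac{e^{-1/(1-t)}}{e^{-1/(1-t)}+e^{-1/t}},\qquad t\in(0,1),
\end{equation*}
independent of $\varepsilon$ and $\mu$. The function $\tilde\theta$ extends to an element of $\mathcal{C}^\infty([0,1])$ with $\tilde\theta(0)=1$, $\tilde\theta(1)=0$, and all derivatives vanishing at both endpoints; this is the standard property of the classical non-analytic smooth bump built from $e^{-1/s}$, which decays together with all its derivatives as $s\to 0^+$. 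Consequently $C:=\norm{\tilde\theta'}_{L^\infty([0,1])}$ is a finite constant independent of $(N,\varepsilon)$, and the chain rule gives $|\theta_\varepsilon'(x)|\le C/(\varepsilon-\mu)$ for $x\in(\mu,\varepsilon)$. Using the hypothesis $\mu\ll\varepsilon$ to conclude $\varepsilon-\mu\gs\varepsilon$, and noting that $\na\te(\z)=\theta_\varepsilon'(|\z|)\z/|\z|$ is supported in the annulus $\mu<|\z|<\varepsilon$, the bound $\norm{\na\te}_{L^\infty(\R^3)}\ls\varepsilon^{-1}$ follows. The $L^2$ estimate then comes from combining this pointwise bound with $|B_\varepsilon(0)\setminus B_\mu(0)|\ls\varepsilon^3$, yielding $\norm{\na\te}^2_{L^2(\R^3)}\le\norm{\na\te}^2_{L^\infty(\R^3)}\cdot|B_\varepsilon(0)|\ls\varepsilon^{-2}\cdot\varepsilon^3=\varepsilon$.

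The only substantive point, and hence the expected main obstacle, is the $\mu,\varepsilon$-independence of $C=\norm{\tilde\theta'}_{L^\infty([0,1])}$; everything else is volume counting. This uniformity is a classical consequence of the smoothness and rapid decay of $e^{-1/t}$ and $e^{-1/(1-t)}$ at their respective endpoints, which ensure that $\tilde\theta$ extends to a genuine smooth transition function on $[0,1]$ with uniformly bounded derivative.
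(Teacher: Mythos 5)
Your proof is correct and follows essentially the same route as the paper: part (a) by the support and $L^\infty$ bound, part (b) by the pointwise derivative bound $|\theta_\varepsilon'|\ls(\varepsilon-\mu)^{-1}\ls\varepsilon^{-1}$ combined with the volume of the support. The paper simply asserts the bound $|\theta_\varepsilon'(x)|\le 2(\varepsilon-\mu)^{-1}$ in one line, whereas you make explicit the underlying scaling argument by introducing the fixed profile $\tilde\theta$ on $[0,1]$ and invoking its $\varepsilon,\mu$-independent $C^1$ bound; this is more detailed but not a genuinely different method.
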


\begin{proof}
Part (a) follows immediately from the definition of $\te$. For part (b), observe that
$|\frac{\d}{\d x}\theta_\varepsilon(x)|\leq 2(\varepsilon-\mu)^{-1}=2\varepsilon^{-1}(1-\frac{\mu}{\varepsilon})\ls\varepsilon^{-1}$. 
\end{proof}

 
\begin{cor}\label{cor:he:te} 
Let $\mu\ll\varepsilon$ and $j\in\{1,2\}$. Then
\lemit{
	\item 	\label{cor:he}	
			$\onorm{p_j(\na_1\heot)}=\onorm{\big(\na_1\heot\big)p_j}
			\ls \mathfrak{e}(t) N^{-1}\mu^{-\frac12}\varepsilon ,$
			\item[]
			$\onorm{\big(\na_1\heot\big)\cdot\na_jp_j}=\onorm{\ket{\phe(z_j)}\bra{\na\phe(z_j)}(\na_1\heot)}
			\ls\mathfrak{e}(t)N^{-1}\mu^{-\frac12},$	
	\item	\label{cor:te}
			$\onorm{p_j\teot}=\onorm{\teot p_j}\ls \mathfrak{e}(t)\varepsilon^\frac12,$
			\item[] $\onorm{p_j(\na_1\teot)}=\onorm{(\na_1\teot)p_j}\ls \mathfrak{e}(t)\varepsilon^{-\frac12}$,
			\item[]$\onorm{\teot\na_jp_j}=\onorm{\ket{\phe(z_j)}\bra{\na\phe(z_j)}\teot}\ls\mathfrak{e}(t)\varepsilon^{-\frac12}.$
}
\end{cor}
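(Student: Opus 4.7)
The plan is to derive all five inequalities as direct instances of Lemma~\ref{lem:pfp}, specialised to translation-invariant kernels. I first note that $\heot$ and $\teot$ depend on the pair $(z_1,z_2)$ only through the difference $z_1-z_2$, so that the multiplication operators $\na_1\heot$, $\teot$ and $\na_1\teot$ all admit a majorant of the form $G(z_1-z_2)$, with $G$ chosen to be $|\na\he|$, $\te$, or $|\na\te|$ respectively (the absolute value being the Euclidean norm in the vector-valued cases). The hypotheses of Lemma~\ref{lem:pfp}, namely $G \in L^2(\R^3)\cap L^\infty(\R^3)$, are exactly what Lemma~\ref{lem:he:3} and Lemma~\ref{lem:theta} supply, so no further analytic work is needed.

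For part~(a), I would apply Lemma~\ref{lem:pfp:2} with $G = |\na\he|$: plugging $\norm{\na\he}_{L^2(\R^3)} \ls N^{-1}\mu^{-1/2}\varepsilon^2$ into the bound $\mathfrak{e}(t)\varepsilon^{-1}\norm{G}_{L^2(\R^3)}$ gives precisely $\mathfrak{e}(t)N^{-1}\mu^{-1/2}\varepsilon$. The second line follows from Lemma~\ref{lem:pfp:3} with the same $G$, whose prefactor $\varepsilon^{-2}$ removes the extra power of $\varepsilon$ and produces $\mathfrak{e}(t)N^{-1}\mu^{-1/2}$. For part~(b), the three lines are, in order: Lemma~\ref{lem:pfp:2} with $G = \te$ combined with $\norm{\te}_{L^2(\R^3)} \ls \varepsilon^{3/2}$, giving $\mathfrak{e}(t)\varepsilon^{1/2}$; Lemma~\ref{lem:pfp:2} with $G = |\na\te|$ combined with $\norm{\na\te}_{L^2(\R^3)} \ls \varepsilon^{1/2}$, giving $\mathfrak{e}(t)\varepsilon^{-1/2}$; and Lemma~\ref{lem:pfp:3} with $G = \te$, giving $\mathfrak{e}(t)\varepsilon^{-2}\cdot\varepsilon^{3/2} = \mathfrak{e}(t)\varepsilon^{-1/2}$.

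I do not anticipate any genuine obstacle, since all the analytic content already lives in Lemma~\ref{lem:he:3} and Lemma~\ref{lem:theta}. The only minor bookkeeping is that the first equality in each displayed line is just the standard identity $\onorm{Ap_j} = \onorm{p_j A^*}$ applied to the real-valued (or componentwise real-valued) multiplication operator, and that the operator-norm convention for vector-valued kernels is the one induced by the Euclidean norm on the target space, which is already implicit in the statement of Lemma~\ref{lem:pfp}.
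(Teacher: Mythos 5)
Your proof is correct and matches the paper's own, which simply states that the corollary ``follows immediately from Lemma \ref{lem:pfp}, Lemma \ref{lem:he} and Lemma \ref{lem:theta}.'' You have worked out exactly the intended instantiations (Lemma \ref{lem:pfp:2} and \ref{lem:pfp:3} with $G=|\na\he|$, $\te$, $|\na\te|$, combined with the $L^2$-bounds from Lemma \ref{lem:he:3} and Lemma \ref{lem:theta}), and the arithmetic checks out in each case.
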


\begin{proof}
This follows immediately from Lemma \ref{lem:pfp}, Lemma \ref{lem:he} and Lemma~\ref{lem:theta}.
\end{proof}

\noindent\emph{Estimate of~\eqref{gamma_b_2:1}.} Define $t_2:=2p_2+q_2(1+\pc_2)$. Then we obtain with $\hat{l}$ from Lemma~\ref{lem:l}
\begin{align*}
|\eqref{gamma_b_2:1}|&
\ls N\left|\llr{\hat{l}t_2\qc_1\psi,\wbot p_1p_2\psi}\right|
=N\left|\llr{\hat{l}t_2\qc_1\psi,\teot\wbot p_1p_2\psi}\right|\\
&=N\int\limits_{\R^{3(N-1)}}\hspace{-0.2cm}\d z^{N-1}\hspace{-0.1cm}\int\limits_{\overline{B_\varepsilon(z_2)}}\hspace{-0.1cm}\d z_1\overline{(\hat{l}t_2\qc_1\psi)(z_1\mydots z_N)}
\te(z_1-z_2)\wb(z_1-z_2)(p_2p_1\psi)(z_1\mydots z_N)
\end{align*}
as $\te(\z_1-\z_2)=1$ for $\z_1-\z_2\in\supp\wb$ and $\supp\te=\overline{B_\varepsilon(0)}$. Thus $\wb(z_1-z_2)=\Delta_1\he(z_1-z_2)$ on the whole domain of integration in the $\d z_1$-integral. Integration by parts in $z_1$ yields
\begin{align}
|\eqref{gamma_b_2:1}|
\ls& N\left|\llr{\hat{l}\qc_1\psi,t_2\teot(\na_1\heot)\cdot\nabla_1p_1p_2\psi}\right|
\label{eqn:gamma:b_2:1:2}\\
&+N\left|\llr{\hat{l}\qc_1\psi,t_2(\na_1\teot)\cdot(\na_1\heot)p_1p_2\psi}\right|
\label{eqn:gamma:b_2:1:3}\\
&+N\left|\llr{\na_1\hat{l}\qc_1\psi,t_2\teot(\na_1\heot)p_1p_2\psi}\right|,\label{eqn:gamma:b_2:1:4}
\end{align}
where the boundary terms vanish because $\te(|\z|)=0$ for $|\z|=\varepsilon$. 
We estimate these expressions by application of Lemma~\ref{lem:GammaLambda:and:O_12}. To this end, we write each term as $\llr{\Gamma,O_{1,2}\Lambda}$, where $\Gamma$ and $\Lambda$ are symmetric in the coordinates $\{2\mydots N\}$. Hence
\begin{align*}
|\eqref{eqn:gamma:b_2:1:2}|
&\overset{\text{\ref{lem:GammaLambda}}}{\ls}N\norm{\hat{l}\qc_1\psi}
	\Big(\left|\llr{t_2\teot(\na_1\heot)p_2\cdot\na_1p_1\psi,t_3\teoth(\na_1\heoth)p_3\cdot\na_1p_1\psi}\right|\\
	&\hphantom{\leq\norm{\hat{l}\qc_1\psi}\Big(}\quad+N^{-1}\norm{t_2\teot(\na_1\heot)p_2\cdot\na_1p_1\psi}^2 \Big)^\frac{1}{2}\\
&\overset{\text{\ref{lem:O_12}}}{\leq} N\norm{\hat{l}\qc_1\psi}\Big(\norm{p_2\teot(\na_1\heot)t_2\cdot\na_1p_1\psi}^2
		+N^{-1}\norm{t_2\teot(\na_1\heot)p_2\cdot\na_1p_1\psi}^2\Big)^\frac12\\
&\leq N\norm{\hat{l}\qc_1\psi}\Big(\onorm{p_2\teot}^2\onorm{(\na_1\heot)\cdot\na_1p_1}^2
		+N^{-1}\norm{\te}^2_{L^\infty(\R^3)}\onorm{(\na_1\heot)p_2}^2\onorm{\na_1p_1}^2 \Big)^\frac{1}{2}\\
&\ls\mathfrak{e}^3(t)\left(\tfrac{\varepsilon^2}{\mu}\right)^\frac12\left(\varepsilon+N^{-1}\right)^\frac12N^\xi
\end{align*}
by Lemma \ref{lem:a_priori}, Lemma \ref{lem:theta} and Corollary \ref{cor:he:te}. Analogously,
\begin{align*}
|\eqref{eqn:gamma:b_2:1:3}|
&\ls N\norm{\hat{l}\qc_1\psi}\Big(\onorm{p_2(\na_1\heot)}^2\onorm{(\na_1\teot)p_1}^2+N^{-1}\norm{\na\te}^2_{L^\infty(\R^3)}\onorm{(\na_1\heot)p_2}^2\Big)^\frac12\\
&\ls\mathfrak{e}^3(t)\left(\tfrac{\varepsilon^2}{\mu}\right)^{\frac12}\left(\varepsilon+N^{-1}\right)^\frac12N^\xi,\\
|\eqref{eqn:gamma:b_2:1:4}|
&\ls N\norm{\nabla_1\hat{l}\qc_1\psi}\Big(\onorm{p_2\teot}^2\onorm{(\nabla_1\heot)p_1}^2
	+N^{-1}\norm{\te}^2_{L^\infty(\R^3)}\onorm{(\na_1\heot)p_2}^2\Big)^\frac12\\
&\ls\mathfrak{e}^3(t)\left(\tfrac{\varepsilon^2}{\mu}\right)^\frac12\left(\varepsilon+N^{-1}\right)^\frac12N^\xi.
\end{align*}
Hence
\begin{equation*}
|\eqref{gamma_b_2:1}|\ls\mathfrak{e}^3(t)\left(\tfrac{\varepsilon^2}{\mu}\right)^\frac12\left(\varepsilon+N^{-1}\right)^\frac12N^\xi
\ls\mathfrak{e}^3(t)\left(\tfrac{\varepsilon^2}{\mu}\right)^\frac12
\end{equation*}
because  $N^{-\frac12+\xi}<1$ as $\xi<\frac{1}{2}$ and
$\varepsilon^\frac12 N^{\xi}=(\frac{\varepsilon^2}{\mu})^\frac14 N^{\xi-\frac{\beta}{4}}\varepsilon^{\frac{\beta}{2}}\ls 1$
for $\mu\ll\varepsilon$ as $\xi\leq\frac{\beta}{4}$.\\\\

\noindent\noindent\emph{Estimate of~\eqref{gamma_b_2:2}.} Define $t_{12}:=\qp_1\pc_1\qc_2+\qc_1q_2$. Analogously to the estimate of $\eqref{gamma_b_2:1}$,
\begin{align}
|\eqref{gamma_b_2:2}|
\leq& N\left|\llr{\hat{l}t_{12}\psi,\wbot p_1q_2\psi}\right|
=N\left|\llr{\hat{l}t_{12}\psi,\teot\left(\Delta_1\heot\right)p_1q_2\psi}\right|\nonumber\\
\leq &N\left|\llr{\hat{l}t_{12}\psi,\teot(\na_1\heot)\cdot\na_1p_1q_2\psi}\right|\label{eqn:gamma:b_2:2:1}\\
&+N\left|\llr{\hat{l}t_{12}\psi,(\na_1\teot)\cdot(\na_1\heot)p_1q_2\psi}\right|\label{eqn:gamma:b_2:2:2}\\
&+N\left|\llr{\na_1\hat{l}t_{12}\psi,\teot(\na_1\heot)p_1q_2\psi}\right|.\label{eqn:gamma:b_2:2:3}
\end{align}
To estimate \eqref{eqn:gamma:b_2:2:1} to \eqref{eqn:gamma:b_2:2:3}, we apply first Lemma~\ref{lem:commutators:2} to commute $\hat{l}$ next to $q_2$ and use the fact that $\norm{\hat{l}_1q_2\psi}\ls 1$ by Lemma~\ref{lem:l} and Lemma \ref{lem:fqq}. Observing that $t_{12}=t_{12}q_1q_2$ and consequently $\norm{t_{12}\psi}\leq\norm{\qc_1\psi}\leq\varepsilon\mathfrak{e}(t)$ by Lemma~\ref{lem:a_priori:1}, we obtain
\begin{align*}
\eqref{eqn:gamma:b_2:2:1}&=N\left|\llr{t_{12}\psi,q_1q_2\hat{l}\teot(\na_1\heot)\cdot(p_1+q_1)q_2\na_1p_1\psi}\right|\\
&=N\left|\llr{t_{12}\psi,\teot(\na_1\heot)\cdot(\hat{l}_1p_1+\hat{l}q_1)\na_1p_1q_2\psi}\right|\\
&=N\left|\llr{t_{12}\psi,\teot(\na_1\heot)\cdot\na_1p_1\hat{l}_1q_2\psi}\right|\\
&\leq N\norm{t_{12}\psi}\norm{\te}_{L^\infty(\R^3)}\onorm{(\na_1\heot)\cdot\na_1p_1}\norm{\hat{l}_1q_2\psi}\ls\mathfrak{e}^2(t)\left(\tfrac{\varepsilon^2}{\mu}\right)^\frac{1}{2}
\end{align*}
and analogously
\begin{align*}
\eqref{eqn:gamma:b_2:2:2}&=N\left|\llr{t_{12}\psi,(\na_1\teot)\cdot(\na_1\heot)p_1\hat{l}_1q_2\psi}\right|\\
&\leq N\norm{t_{12}\psi}\norm{\na\te}_{L^\infty(\R^3)}\onorm{(\na_1\heot)p_1}\norm{\hat{l}_1q_2\psi}
\ls\mathfrak{e}^2(t)\left(\tfrac{\varepsilon^2}{\mu}\right)^\frac{1}{2},\\
\eqref{eqn:gamma:b_2:2:3}
&=N\left|\llr{\na_1t_{12}\psi,\teot(\na_1\heot)p_1\hat{l}_1q_2\psi}\right|\\
&\leq N\left(\norm{\na_1\qp_1\pc_1\qc_2\psi}+\norm{q_2\na_1\qc_1\psi}\right)\norm{\te}_{L^\infty(\R^3)}\onorm{(\na_1\heot)p_1}
\norm{\hat{l}_1q_2\psi}
\ls\mathfrak{e}^2(t)\left(\tfrac{\varepsilon^2}{\mu}\right)^\frac12
\end{align*}
by Lemma~\ref{lem:theta}, Corollary~\ref{cor:he} and Lemma \ref{lem:a_priori}. \\\\

\noindent\emph{Estimate of~\eqref{gamma_b_2:3}.} Analogously to before,
\begin{align*}
|\eqref{gamma_b_2:3}|\leq& N\left|\llr{\hat{l}\qp_1\qp_2\psi,\pc_1\pc_2\wbot p_1\qc_2\psi}\right|
=N\left|\llr{\hat{l}\qp_1\qp_2\psi,\pc_1\pc_2\teot\left(\Delta_1\heot\right)p_1\qc_2\psi}\right|\\
\leq &N\left|\llr{\hat{l}\qp_1\qp_2\psi,\pc_1\pc_2\teot(\na_1\heot)\cdot\nabla_1p_1\qc_2\psi}\right|\\
&+N\left|\llr{\hat{l}\qp_1\qp_2\psi,\pc_1\pc_2(\na_1\teot)\cdot(\na_1\heot)p_1\qc_2\psi}\right|\\
&+N\left|\llr{\na_1\hat{l}\pc_1\qp_1\qp_2\psi,\pc_2\teot(\na_1\heot)p_1\qc_2\psi}\right|\\
\leq& N\norm{\hat{l}\qp_1\qp_2\psi}\norm{\qc_2\psi}\Big(\norm{\te}_{L^\infty(\R^3)}\onorm{(\na_1\heot)\cdot\nabla_1p_1}+\norm{\na\te}_{L^\infty(\R^3)}\onorm{p_1(\na_1\heot)}\Big)\\
&+N\norm{\na_1\hat{l}\pc_1\qp_1\qp_2\psi}\norm{\te}_{L^\infty(\R^3)}\onorm{p_1(\na_1\heot)}\norm{\qc_2\psi}
\ls\mathfrak{e}^2(t)\left(\tfrac{\varepsilon^2}{\mu}\right)^\frac12
\end{align*}
by Lemma~\ref{lem:a_priori}, Lemma~\ref{lem:theta}, Corollary~\ref{cor:he:te} and Lemma~\ref{lem:fqq}.
\qed

\subsubsection{Proof of the bound for $\gamma_b^{(3)}(t)$}
We estimate~\eqref{gamma_b_3:3} as
$$|\eqref{gamma_b_3:3}|\ls \left|\llr{\hat{l}q_1q_2\psi,|\Phi(x_1)|^2p_1q_2\psi}\right|
\ls \norm{\Phi}^2_{L^\infty(\R)}\norm{\hat{l}q_1q_2\psi}\norm{q_2\psi}
\ls\mathfrak{e}^2(t)\llr{\psi,\hat{n}\psi}$$
by Lemma~\ref{lem:Phi} and Lemma \ref{lem:fqq:3}. 
For~\eqref{gamma_b_3:1} and~\eqref{gamma_b_3:2}, we proceed similarly as in Section \ref{subsec:gamma_b^2} for the quasi one-dimensional interaction $\overline{w}$ instead of the three-dimensional interaction $\wb$. 
\begin{definition}\label{def:1d:eff:int}
Define
\begin{equation}\label{def:wbar}
\overline{w}(x):=\int\limits_{\R^2}\d y_1|\chie(y_1)|^2\int\limits_{\R^2}\d y_2|\chie(y_2)|^2\wb(x,y_1-y_2).
\end{equation}
Further, for $\bo\in[0,1]$, define $\hbo:\R\rightarrow\R$ by
\begin{equation}
\hbo(x):=\begin{cases}
	\int\limits_{-N^{-\beta_1}}^{N^{-\beta_1}}G(x',x)\overline{w}(x')\d x' & \text{for }|x|\leq N^{-\beta_1},\\
	0					& else,
\end{cases}
\end{equation}
where
\begin{equation}
G(x',x):=\tfrac12N^{\beta_1}\begin{cases}
	\left(x'+N^{-\beta_1}\right)\left(x-N^{-\beta_1}\right)  & \text{for } x'<x,\\
	\left(x'-N^{-\beta_1}\right)\left(x+N^{-\beta_1}\right)  & \text{for } x'>x.
\end{cases}
\end{equation}
Besides, define
\begin{equation}
\tb(x):=\begin{cases}
	1 						& \text{for }|x|\leq\mu,\\
	\theta_{\beta_1}(|x|) 	& \text{for }\mu<|x|<N^{-\beta_1},\\
	0						& \text{for }|x|\geq N^{-\beta_1},
\end{cases}
\end{equation}
where $\theta_{\beta_1}:[\mu,N^{-\beta_1}]\rightarrow[0,1]$ is a smooth decreasing function with $\theta_{\beta_1}(\mu)=1$, \hbox{$\theta_{\beta_1}(N^{-\beta_1})=0$} analogously to \eqref{def:theta_smooth}.
As before, we will write
\begin{align*}
\owij&:=\overline{w}(x_i-x_j),\qquad 
\hboij:=\hbo(x_i-x_j),\qquad 
\tbij:=\tb(x_i-x_j).
\end{align*}
\end{definition}

\begin{lem}\label{lem:hbar}
\lemit{
	\item	$\hbo$ solves the boundary-value problem
			\begin{equation}\label{boundary:problem:1d}
			\begin{cases}
				\tfrac{\d^2}{\d x^2}\hbo=\overline{w} 	& \text{ for }\; x\in[-N^{-\beta_1},N^{-\beta_1}],\vspace{0.1cm}\\
				\quad\;\,\hbo=0									&\text{ for }\; |x|=N^{-\beta_1}.
			\end{cases}\end{equation}
	\item	
			\begin{tabular}{p{4cm}p{5cm}}
			$\norm{\tfrac{\d}{\d x}\hbo}_{L^\infty(\R)}\ls N^{-1},$&$\norm{\tfrac{\d}{\d x}\hbo}_{L^2(\R)}\ls N^{-1-\frac{\beta_1}{2}},$
			\end{tabular}
	
	\item	\begin{tabular}{p{4cm}p{4cm}}
			$\norm{\tb}_{L^\infty(\R)}\leq 1$,&$\norm{\tb}_{L^2(\R)}\ls N^{-\frac{\beta_1}{2}},$			
			\end{tabular}\vspace{0.1cm} \\
			\begin{tabular}{p{4cm}p{4cm}}
			$\norm{\tfrac{\d}{\d x}\tb}_{L^\infty(\R)}\ls N^{\beta_1}$,&$\norm{\tfrac{\d}{\d x}\tb}_{L^2\R)}\ls N^{\frac{\beta_1}{2}}.$
			\end{tabular}
}
\end{lem}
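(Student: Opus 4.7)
The plan is to prove the three parts of Lemma~\ref{lem:hbar} in order. Part (a) is a direct verification that the integral kernel $G(x',x)$ defining $\hbo$ is Green's function for the Dirichlet problem on $[-N^{-\beta_1},N^{-\beta_1}]$. For $|x|<N^{-\beta_1}$, I would split the integral at $x'=x$ to obtain
\begin{equation*}
\hbo(x)=\tfrac12 N^{\beta_1}(x-N^{-\beta_1})\int_{-N^{-\beta_1}}^x(x'+N^{-\beta_1})\overline{w}(x')\,\d x' + \tfrac12 N^{\beta_1}(x+N^{-\beta_1})\int_x^{N^{-\beta_1}}(x'-N^{-\beta_1})\overline{w}(x')\,\d x'.
\end{equation*}
A first differentiation produces two Leibniz boundary contributions equal to $\pm\tfrac12 N^{\beta_1}(x-N^{-\beta_1})(x+N^{-\beta_1})\overline{w}(x)$, which cancel, so
\begin{equation*}
\hbo'(x)=\tfrac12 N^{\beta_1}\int_{-N^{-\beta_1}}^x(x'+N^{-\beta_1})\overline{w}(x')\,\d x' + \tfrac12 N^{\beta_1}\int_x^{N^{-\beta_1}}(x'-N^{-\beta_1})\overline{w}(x')\,\d x'.
\end{equation*}
A second differentiation then gives $\hbo''(x)=\tfrac12 N^{\beta_1}\bigl[(x+N^{-\beta_1})-(x-N^{-\beta_1})\bigr]\overline{w}(x)=\overline{w}(x)$, and the boundary condition $\hbo(\pm N^{-\beta_1})=0$ is read off directly from the definition.

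For part (b), the formula for $\hbo'$ above combined with $|x'\mp N^{-\beta_1}|\leq 2N^{-\beta_1}$ on the integration domain yields $|\hbo'(x)|\ls N^{\beta_1}\cdot N^{-\beta_1}\cdot\norm{\overline{w}}_{L^1(\R)}$. The key step is therefore to establish $\norm{\overline{w}}_{L^1(\R)}\ls N^{-1}$. Changing variables $y_2\mapsto y:=y_1-y_2$ in \eqref{def:wbar} and applying Fubini gives
\begin{equation*}
\int_\R\overline{w}(x)\,\d x=\int_{\R^3}\wb(x,y)\,h(y)\,\d x\,\d y,\qquad h(y):=\int_{\R^2}|\chie(y_1)|^2|\chie(y_1-y)|^2\,\d y_1,
\end{equation*}
and Cauchy--Schwarz together with $\norm{\chie}_{L^4(\R^2)}^4\ls\varepsilon^{-2}$ yields $\norm{h}_{L^\infty(\R^2)}\ls\varepsilon^{-2}$; combined with \eqref{eqn:int:wb} this gives $\norm{\overline{w}}_{L^1(\R)}\ls\varepsilon^{-2}\cdot\varepsilon^2 N^{-1}=N^{-1}$, hence $\norm{\tfrac{\d}{\d x}\hbo}_{L^\infty(\R)}\ls N^{-1}$. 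Since $\hbo'$ is continuous and supported in $[-N^{-\beta_1},N^{-\beta_1}]$, the $L^2$ bound follows from $\norm{\tfrac{\d}{\d x}\hbo}_{L^2(\R)}\leq\sqrt{2N^{-\beta_1}}\,\norm{\tfrac{\d}{\d x}\hbo}_{L^\infty(\R)}\ls N^{-1-\beta_1/2}$.

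Part (c) is essentially immediate from the definition of $\tb$. The estimate $\norm{\tb}_{L^\infty(\R)}\le 1$ is clear, and $\norm{\tb}_{L^2(\R)}\ls N^{-\beta_1/2}$ follows because $\supp\tb\subseteq[-N^{-\beta_1},N^{-\beta_1}]$. For the derivative, $\tb'$ vanishes outside the annular region $\mu<|x|<N^{-\beta_1}$, where it equals $\pm\theta_{\beta_1}'$; exactly as in the proof of Lemma~\ref{lem:theta}, the explicit form of $\theta_{\beta_1}$ modeled on \eqref{def:theta_smooth} yields $|\theta_{\beta_1}'(x)|\leq 2(N^{-\beta_1}-\mu)^{-1}\ls N^{\beta_1}$, using $\mu\ll N^{-\beta_1}$ (which holds since $\mu\leq N^{-\beta}\leq N^{-\beta_1}$ and $\varepsilon\to 0$). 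The $L^2$ bound then follows from the length $\ls N^{-\beta_1}$ of the support of $\tb'$. Nothing in the argument is genuinely difficult: the only nontrivial point is the convolution estimate for $\norm{\overline{w}}_{L^1(\R)}$ in part (b), and the rest is either an explicit Green's function computation or a direct mimic of Lemma~\ref{lem:theta}.
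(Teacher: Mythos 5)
Your proof is correct and takes essentially the same route as the paper: explicit verification that $G$ is Green's function, reduction of part (b) to $\norm{\overline{w}}_{L^1(\R)}\ls N^{-1}$ via the support of $\hbo'$, and the Lemma~\ref{lem:theta} analogy for part (c). The only cosmetic difference is in the key estimate for $\norm{\overline{w}}_{L^1(\R)}$: you change variables and bound the resulting autocorrelation $h(y)=\int|\chie(y_1)|^2|\chie(y_1-y)|^2\d y_1$ by $\norm{\chie}_{L^4}^4\ls\varepsilon^{-2}$ via Cauchy--Schwarz, whereas the paper pulls out $\norm{\chie}_{L^\infty(\R^2)}^2\ls\varepsilon^{-2}$ directly in \eqref{eqn:norm:wbar}; both give the same $\varepsilon^{-2}$ scaling and then invoke \eqref{eqn:int:wb}.
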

\begin{proof}
Part (a) is evident as $G(x',x)$ is Green's function for the problem \eqref{boundary:problem:1d}. For part (b), we compute for $x\in[-N^{-\beta_1},N^{-\beta_1}]$
\begin{equation*}
\left|\tfrac{\d}{\d x}\hbo(x)\right|=
\tfrac{N^{\beta_1}}{2}\bigg|\int\limits_{-N^{-\beta_1}}^x(x'+N^{-\beta_1})\overline{w}(x')\d x'
+\int\limits_x^{N^{-\beta_1}}(x'-N^{-\beta_1})\overline{w}(x')\d x'\bigg|
\ls\norm{\overline{w}}_{L^1(\R)}\ls N^{-1}
\end{equation*}
since
\begin{align}
\norm{\overline{w}}_{L^1(\R)}&=
\int\limits_\R \d x\int\limits_{\R^2}\d y_1|\chie(y_1)|^2\int\limits_{\R^2}\d y_2|\chie(y_2)|^2\wb(x,y_1-y_2)
\nonumber\\
&\leq \norm{\chie}^2_{L^\infty(\R^2)}\int\limits_{\R^2}\d y_1|\chie(y_1)|^2\norm{\wb}_{L^1(\R^3)}
\ls N^{-1}\label{eqn:norm:wbar}
\end{align}
by \eqref{eqn:int:wb}. The second inequality in (b) follows from this as $\supp\hbo=[-N^{-\beta_1},N^{-\beta_1}]$. Part (c) is shown analogously to Lemma~\ref{lem:theta}.
\end{proof}


\begin{cor}\label{cor:hbar}
Let $j\in\{0,1\}$. Then
\lemit{
	\item	$
			\onorm{\pp_j(\tfrac{\d}{\d x_1}\hboot)}\ls\mathfrak{e}(t)N^{-1-\frac{\beta_1}{2}},\qquad
			\onorm{(\tfrac{\d}{\d x_1}\hboot)(\partial_{x_j}\pp_j)}\ls\norm{\Phi(t)}_{H^2(\R)}N^{-1-\frac{\beta_1}{2}},
			$
	\item	$\onorm{\pp_j\left(\tfrac{\d}{\d x_1}\tbot\right)}\ls\mathfrak{e}(t)N^{\frac{\beta_1}{2}}.$
}
\end{cor}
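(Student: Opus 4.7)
The plan is to apply Lemma \ref{lem:pfp:4} directly, exactly in the same spirit that Corollary \ref{cor:he:te} follows from Lemmas \ref{lem:pfp}, \ref{lem:he}, and \ref{lem:theta} in the three-dimensional setting. The only work needed is to identify the correct one-dimensional bounding function $G$ and to read off its $L^2(\R)$ norm from Lemma \ref{lem:hbar}.

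For part (a), I would set $g(x_1,x_2):=(\tfrac{\d}{\d x}\hbo)(x_1-x_2)$ and $G(x):=|\tfrac{\d}{\d x}\hbo(x)|$, so that $|g(x_1,x_2)|\leq G(x_1-x_2)$ trivially. By Lemma \ref{lem:hbar}(b), both $\norm{G}_{L^\infty(\R)}\ls N^{-1}$ and $\norm{G}_{L^2(\R)}\ls N^{-1-\beta_1/2}$ are finite, so $G\in L^2(\R)\cap L^\infty(\R)$ as required by Lemma \ref{lem:pfp:4}. The first bound of Lemma \ref{lem:pfp:4} then yields
\begin{equation*}
\onorm{\pp_j(\tfrac{\d}{\d x_1}\hboot)}=\onorm{(\tfrac{\d}{\d x_1}\hboot)\pp_j}\leq\mathfrak{e}(t)\norm{G}_{L^2(\R)}\ls \mathfrak{e}(t)N^{-1-\frac{\beta_1}{2}},
\end{equation*}
and the second bound of Lemma \ref{lem:pfp:4} gives the companion estimate for $(\tfrac{\d}{\d x_1}\hboot)(\partial_{x_j}\pp_j)$ with $\mathfrak{e}(t)$ replaced by $\norm{\Phi(t)}_{H^2(\R)}$, as claimed.

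Part (b) is handled the same way with $g(x_1,x_2):=(\tfrac{\d}{\d x}\tb)(x_1-x_2)$ and $G(x):=|\tfrac{\d}{\d x}\tb(x)|$. By Lemma \ref{lem:hbar}(c) we have $\norm{G}_{L^\infty(\R)}\ls N^{\beta_1}$ and $\norm{G}_{L^2(\R)}\ls N^{\beta_1/2}$, so again $G\in L^2(\R)\cap L^\infty(\R)$ and Lemma \ref{lem:pfp:4} gives $\onorm{\pp_j(\tfrac{\d}{\d x_1}\tbot)}\ls\mathfrak{e}(t)N^{\beta_1/2}$. There is no real obstacle: the only thing to verify is that both $\tfrac{\d}{\d x}\hbo$ and $\tfrac{\d}{\d x}\tb$ are genuine functions of the difference $x_1-x_2$ (immediate from the chain rule), and that they are compactly supported in a set of diameter $\ls N^{-\beta_1}$, which is what yields the $N^{-\beta_1/2}$ gain from $L^\infty$ to $L^2$.
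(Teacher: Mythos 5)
Your proposal is correct and matches the paper's proof, which simply states that the corollary follows from Lemma \ref{lem:pfp:4} and Lemma \ref{lem:hbar}; you have filled in exactly the intended details, correctly identifying $G=|\tfrac{\d}{\d x}\hbo|$ and $G=|\tfrac{\d}{\d x}\tb|$ and reading off their $L^2$ norms from Lemma \ref{lem:hbar}(b) and (c).
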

\begin{proof}
This follows immediately from Lemma \ref{lem:pfp:4} and Lemma \ref{lem:hbar}.
\end{proof}

\noindent\emph{Estimate of~\eqref{gamma_b_3:1}.} 
Observing that $\pc_1\pc_2\wbot \pc_1\pc_2=\owot\pc_1\pc_2$, we obtain analogously to the estimate of \eqref{gamma_b_2:1} 
\begin{align}
|\eqref{gamma_b_3:1}|\ls& N\left|\llr{\hat{l}\qp_1\qp_2\psi,\owot p_1p_2\psi}\right|
=N\left|\llr{\hat{l}\qp_1\qp_2\psi,\tbot\left(\tfrac{\d^2}{\d x_1^2}\hboot\right)p_1p_2\psi}\right|\nonumber\\
\leq& N\left|\llr{\hat{l}\qp_1\qp_2\psi,\tbot\left(\tfrac{\d}{\d x_1}\hboot\right)\partial_{x_1}\pp_1\pc_1p_2\psi}\right|
\label{eqn:gamma:b_3:1:2}\\
&+ N\left|\llr{\hat{l}\qp_1\qp_2\psi,\left(\tfrac{\d}{\d x_1}\tbot\right)\left(\tfrac{\d}{\d x_1}\hboot\right)\pp_1\pc_1p_2\psi}\right|
\label{eqn:gamma:b_3:1:3}\\
&+ N\left|\llr{\partial_{x_1}\hat{l}\qp_1\qp_2\psi,\tbot\left(\tfrac{\d}{\d x_1}\hboot\right)\pp_1\pc_1p_2\psi}\right|.\label{eqn:gamma:b_3:1:4}
\end{align}
The boundary terms upon integration by parts vanish as $\tb(\pm N^{-\bo})=0.$ With Lemmata \ref{lem:l:2}, \ref{lem:fqq:3}, \ref{lem:a_priori}, \ref{lem:hbar} and Corollary \ref{cor:hbar}, we conclude
\begin{align*}
\eqref{eqn:gamma:b_3:1:2}&\leq N \norm{\hat{l}\qp_1\qp_2\psi}\norm{\tb}_{L^\infty(\R)}\onorm{\left(\tfrac{\d}{\d x_1}\hboot\right)\pp_2} \onorm{\partial_{x_1}\pp_1}
\ls \mathfrak{e}^2(t)\llr{\psi,\hat{n}\psi}^\frac12 N^{-\frac{\beta_1}{2}},\\
\eqref{eqn:gamma:b_3:1:3}
&\overset{\text{\ref{lem:commutators:2}}}{=}N\left|\llr{\hat{l}^\frac12\qp_1\psi,\Big(\qp_2\left(\tfrac{\d}{\d x_1}\tbot\right)
\left(\tfrac{\d}{\d x_1}\hboot\right)p_2\Big)p_1\hat{l}^\frac12_2 \psi}\right|\\
&\overset{\text{\ref{lem:GammaLambda}}}{\ls}N\norm{\hat{l}^\frac{1}{2}\qp_1\psi}\bigg(
	\left|\llr{\qp_2\left(\tfrac{\d}{\d x_1}\tbot\right)\left(\tfrac{\d}{\d x_1}\hboot\right)p_2p_1\hat{l}^\frac12_2 \psi,
	\qp_3\left(\tfrac{\d}{\d x_1}\tboth\right)\left(\tfrac{\d}{\d x_1}\hbooth\right)p_3p_1\hat{l}^\frac12_2 \psi}\right|\\
	&\hphantom{\overset{\text{\ref{lem:GammaLambda}}}{\ls}\norm{\hat{l}^\frac{1}{2}\qp_1\psi}\Big(}
	+N^{-1}\norm{\qp_2\left(\tfrac{\d}{\d x_1}\tbot\right)\left(\tfrac{\d}{\d x_1}\hboot\right)p_2p_1\hat{l}^\frac12_2 \psi}^2\bigg)^\frac12\\
&\overset{\text{\ref{lem:O_12}}}{\leq} N\norm{\hat{l}^\frac{1}{2}q_1\psi}\Big(
	\onorm{\pp_2\left(\tfrac{\d}{\d x_1}\tbot\right)}^2\onorm{\left(\tfrac{\d}{\d x_1}\hboot\right)\pp_1}^2
	\norm{\hat{l}_2^\frac12 q_2\psi}^2\\
	&\hphantom{\overset{\text{\ref{lem:GammaLambda}}}{\ls}\norm{\hat{l}^\frac{1}{2}\qp_1\psi}\Big(}
	+N^{-1}\norm{\tfrac{\d}{\d x}\tb}^2_{L^\infty(\R)}\onorm{\left(\tfrac{\d}{\d x_1}\hboot\right)\pp_1}^2\onorm{\hat{l}^\frac12_2}^2\Big)^\frac12\\
&\ls \mathfrak{e}^2(t)\llr{\psi,\hat{n}\psi}^\frac12\left(\llr{\psi,\hat{n}\psi}+N^{-1+\beta_1+\xi}\right)^\frac12
\ls\mathfrak{e}^2(t)\left(\llr{\psi,\hat{n}\psi}+N^{-1+\beta_1+\xi}\right),\\
\eqref{eqn:gamma:b_3:1:4}
&\leq N\norm{\partial_{x_1}\hat{l}\qp_1\qp_2\psi}\norm{\tb}_{L^\infty(\R)}\onorm{\left(\tfrac{\d}{\d x_1}\hboot\right)\pp_1}\overset{\text{\ref{cor:fqq:2}}}{\ls}\mathfrak{e}^2(t)N^{-\frac{\beta_1}{2}}.
\end{align*}
Hence
\begin{equation*}
|\eqref{gamma_b_3:1}|\ls\mathfrak{e}^2(t)\left(\llr{\psi,\hat{n}\psi}+N^{-\frac{\beta_1}{2}}+N^{-1+\beta_1+\xi}\right).
\end{equation*}\\

\noindent\emph{Estimate of~\eqref{gamma_b_3:2}.} For this term, we choose $\beta_1=0$. Analogously to the estimate of~\eqref{gamma_b_3:1}, 
\begin{align*}
|\eqref{gamma_b_3:2}|
\ls& N\left|\llr{\hat{l}\qp_1\qp_2\psi,\tzot\left(\tfrac{\d^2}{\d x_1^2}\hzot\right)p_1\pc_2\qp_2\psi}\right|\\
\leq& N\left|\llr{\hat{l}\qp_1\qp_2\psi,\tzot\left(\tfrac{\d}{\d x_1}\hzot\right) \partial_{x_1}\pp_1\pc_1\pc_2\qp_2\psi}\right|\\
&+N\left|\llr{\hat{l}\qp_1\qp_2\psi,\left(\tfrac{\d}{\d x_1}\tzot\right)\left(\tfrac{\d}{\d x_1}\hzot\right)
\pp_1\pc_1\pc_2\qp_2\psi}\right|\\
&+N\left|\llr{\partial_{x_1}\hat{l}\qp_1\qp_2\psi,\tzot\left(\tfrac{\d}{\d x_1}\hzot\right)\pp_1\pc_1\pc_2\qp_2\psi}\right|\\
\leq& N\norm{\hat{l}\qp_1\qp_2\psi}\norm{\qp_2\psi}
	\Big(\norm{\tz}_{L^\infty(\R)}\onorm{\left(\tfrac{\d}{\d x_1}\hzot\right)\partial_{x_1}\pp_1}
	+\norm{\tfrac{\d}{\d x}\tz}_{L^\infty(\R)}\onorm{\left(\tfrac{\d}{\d x_1}\hzot\right)\pp_1}\Big)\\
&+N\norm{\qp_2\psi}\norm{\tz}_{L^\infty(\R)}\onorm{\left(\tfrac{\d}{\d x_1}\hzot\right)\pp_1}\norm{\partial_{x_1}\hat{l}\qp_1\qp_2\psi}\\
\;\overset{\text{\ref{cor:fqq:2}}}{\ls}&\norm{\Phi}_{H^2(\R)}\llr{\psi,\hat{n}\psi}+\mathfrak{e}(t)\llr{\psi,\hat{n}\psi}^\frac12\norm{\partial_{x_1}\qp_1\psi}.
\end{align*}
The estimate $\norm{\partial_{x_1}\qp_1\psi}\ls \mathfrak{e}(t)$ (Lemma~\ref{lem:a_priori:3}) is not sharp enough to see that this expression is small. We need a better control of the kinetic energy, which is established in the following refined energy lemma: 

\begin{lem}\label{lem:E_kin}
Let $\beta\in(0,1)$. Then
$$\norm{\partial_{x_1}\qp_1\psi^{N,\varepsilon}(t)}\ls \exp\left\{\mathfrak{e}^2(t)+\int_0^t\mathfrak{e}^2(s)\d s\right\}
\left(\alpha_\xi(t)+\tfrac{\mu}{\varepsilon}+\left(\tfrac{\varepsilon^2}{\mu}\right)^\frac12+N^{-\beta}+(\tfrac{N}{\varepsilon^2})^{-\eta}\right)^\frac12.$$
\end{lem}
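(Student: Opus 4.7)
The plan is to bound $\norm{\partial_{x_1}\qp_1\psi^{N,\varepsilon}(t)}^2$ by expressing it through the energy difference $E^\psi(t)-\mathcal{E}^\Phi(t)$, which is controlled by $\alpha_\xi(t)$ via \eqref{alpha}. Abbreviating $\psi\equiv\psi^{N,\varepsilon}(t)$ and writing $\pp_1\psi = \Phi(t,x_1)\phi$, where $\phi(y_1,z_2,\ldots,z_N):=\int_\R\overline{\Phi(t,x_1')}\psi(x_1',y_1,\ldots,z_N)\d x_1'$ is independent of $x_1$, the one-body $x_1$-kinetic energy decomposes as
\begin{equation*}
\norm{\partial_{x_1}\psi}^2 \;=\; \norm{\partial_{x_1}\qp_1\psi}^2 + \norm{\Phi'}^2_{L^2(\R)}\bigl(1-\norm{\qp_1\psi}^2\bigr) + 2\Re\llr{\partial_{x_1}\pp_1\psi,\partial_{x_1}\qp_1\psi}.
\end{equation*}
Writing out $E^\psi$ and $\mathcal{E}^\Phi$ from \eqref{E^psi}, \eqref{E^Phi}, and discarding the non-negative transverse expectation $\llr{\psi,(-\Delta_{y_1}+\varepsilon^{-2}V^\perp(y_1/\varepsilon)-\varepsilon^{-2}E_0)\psi}$ in the upper bound, I isolate
\begin{align*}
\norm{\partial_{x_1}\qp_1\psi}^2 \leq{}& \alpha_\xi(t) + \bigl|\llr{\psi,V^\parallel(t,z_1)\psi}-\lr{\Phi,V^\parallel(t,(\cdot,0))\Phi}_{L^2(\R)}\bigr|\\
&+ \bigl|\tfrac{N-1}{2}\llr{\psi,\wbot\psi} - \tfrac{\bb}{2}\lr{\Phi,|\Phi|^2\Phi}_{L^2(\R)}\bigr| + \norm{\Phi'}^2_{L^2(\R)}\norm{\qp_1\psi}^2 + 2\bigl|\llr{\partial_{x_1}\pp_1\psi,\partial_{x_1}\qp_1\psi}\bigr|.
\end{align*}

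Each of the five summands must then be bounded. The first is $\leq\alpha_\xi(t)$ by construction. The external-field discrepancy splits as $V^\parallel(t,z_1) = V^\parallel(t,(x_1,0)) + (V^\parallel(t,z_1)-V^\parallel(t,(x_1,0)))$: Lemma \ref{lem:psi-Phi} together with $\llr{\psi,\hat n\psi}\leq\alpha_m\leq\alpha_\xi(t)$ controls the first piece by $\mathfrak{e}^2(t)\alpha_\xi(t)$, and Lemma \ref{lem:taylor} controls the second by $\varepsilon\,\mathfrak{e}^2(t)$. The penultimate term is $\leq\mathfrak{e}^2(t)\alpha_\xi(t)$, using that $\norm{\qp_1\psi}^2\leq\norm{q_1\psi}^2\leq\alpha_m(t)\leq\alpha_\xi(t)$ (since $q_1=\qp_1+\pp_1\qc_1$ with orthogonal ranges by Corollary \ref{cor:p:3}, and $m\geq n^2$ for $N$ sufficiently large). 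The cross term is handled by one integration by parts in $x_1$: since $\phi$ does not depend on $x_1$,
\[
\llr{\partial_{x_1}\pp_1\psi,\partial_{x_1}\qp_1\psi} \;=\; -\llr{\Phi''(t,x_1)\phi,\qp_1\psi},
\]
which by Cauchy--Schwarz is at most $\norm{\Phi''}_{L^2(\R)}\norm{\qp_1\psi}\ls\norm{\Phi(t)}_{H^2(\R)}\alpha_\xi^{1/2}(t)$; a standard Young-type step then absorbs it into the target form $\norm{\Phi(t)}^2_{H^2(\R)}(\alpha_\xi(t)+\text{small})$, together with the interaction remainder estimated below.

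The main obstacle is the interaction discrepancy $|\tfrac{N-1}{2}\llr{\psi,\wbot\psi} - \tfrac{\bb}{2}\lr{\Phi,|\Phi|^2\Phi}|$, because the pair potential $\wbot$ is too singular to be estimated naively once $\beta\geq\tfrac13$. I treat it by the same integration-by-parts strategy used for $\gamma_b^{(1)},\gamma_b^{(2)},\gamma_b^{(3)}$ in Sections \ref{subsec:gamma:b:1}--\ref{subsec:gamma_b^2}: insert $1 = p_1p_2 + (p_1q_2+q_1p_2) + q_1q_2$ on both sides of $\wbot$. The leading $p_1p_2\wbot p_1p_2$ sector reproduces $\tfrac{\bb}{2}\lr{\Phi,|\Phi|^2\Phi}$ up to the Born-approximation error $\mu/\varepsilon$ from \eqref{eqn:norm:Gamma} and the convergence error $(N/\varepsilon^2)^{-\eta}$ of Definition \ref{def:W}. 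The non-leading sectors are controlled by writing $\wb=\Delta\he$ with the radial cutoff $\teot$ (Definitions \ref{def:he}--\ref{def:theta}) and, in parallel, $\owot = \tfrac{d^2}{dx_1^2}\hboot$ with the 1D cutoff $\tbot$ (Definition \ref{def:1d:eff:int}); integrating by parts shifts the derivatives onto $\psi$ or onto the projectors, after which the estimates of Corollaries \ref{cor:he:te} and \ref{cor:hbar} together with the a priori bounds $\norm{\nabla_1\psi}\ls\varepsilon^{-1}$ and $\norm{\qc_1\psi}\ls\varepsilon\mathfrak{e}(t)$ from Lemma \ref{lem:a_priori} produce the remaining errors $(\varepsilon^2/\mu)^{1/2}$ and $N^{-\beta}$, with an overall $\norm{\Phi(t)}^2_{H^2(\R)}$ factor from the operator norms of $\pp_j(\nabla_1\he)$ and $\pp_j(\partial_{x_1}\hbo)$. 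To avoid circularity, only the coarse kinetic bound $\norm{\partial_{x_1}\qp_1\psi}\ls\mathfrak{e}(t)$ of Lemma \ref{lem:a_priori:3}---never the refined bound being proved here---is used throughout this step. Combining all estimates, substituting $\norm{\Phi(t)}_{H^2(\R)}\ls\exp\{\mathfrak{e}^2(t)+\int_0^t\mathfrak{e}^2(s)\d s\}$ from Lemma \ref{lem:Phi}, and taking a square root yields the claim.
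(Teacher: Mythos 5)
Your strategy is structurally the right one — start from $\alpha_\xi(t)\ge E^\psi(t)-\mathcal E^\Phi(t)$, drop the non-negative transverse term, separate out $\norm{\partial_{x_1}\qp_1\psi}^2$ via the kinetic decomposition, and bound the remaining pieces, with integration by parts on the singular interaction. However, there is a decisive error in the very first display you "isolate": you replaced the interaction discrepancy by its absolute value. The proof hinges precisely on keeping the sign. What comes out of the rearrangement is $-\bigl(\tfrac{N-1}{2}\llr{\psi,\wbot\psi}-\tfrac{\bb}{2}\lr{\Phi,|\Phi|^2\Phi}\bigr)$, and when one inserts $1=p_1p_2+(1-p_1p_2)$ on both sides, the diagonal sector produces the term $-(N-1)\norm{\sqrt{\wbot}(1-p_1p_2)\psi}^2$, which is non-positive and is simply dropped (this is \eqref{eqn:E_kin:5} in the paper). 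If you take $|\cdot|$, you must instead control $+(N-1)\norm{\sqrt{\wbot}(1-p_1p_2)\psi}^2$; even after extracting $\llr{\psi,\hat n\psi}$ this is of size $N\norm{\wb}_{L^\infty}\sim N^{3\beta}\varepsilon^{2-6\beta}$, which diverges for $\beta\ge\tfrac13$, and the integration-by-parts device does not help here because the $q_1q_2\wbot q_1q_2$ piece has no adjacent $p$-projector to absorb a derivative. Your sketch never addresses this sector, and it cannot be addressed within the scheme you describe.

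There is a second, independent gap in the cross term. Writing $\llr{\partial_{x_1}\pp_1\psi,\partial_{x_1}\qp_1\psi}=-\llr{\Phi''\phi,\qp_1\psi}$ and applying plain Cauchy--Schwarz yields $\norm{\Phi}_{H^2(\R)}\norm{\qp_1\psi}\ls\norm{\Phi}_{H^2(\R)}\,\llr{\psi,\hat n\psi}^{1/2}$, which is of order $\alpha_\xi(t)^{1/2}$ — the \emph{square root} of what the lemma requires. No "Young-type step" can repair this: $\alpha_\xi^{1/2}\le\tfrac12\alpha_\xi+\tfrac12$ leaves behind an $O(1)$ remainder, destroying the smallness of the right-hand side. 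The paper instead commutes $\hat n^{\pm 1/2}$ through via Lemma~\ref{lem:commutators:4} and Lemma~\ref{lem:fqq:2}, bounding the cross term by $\llr{\psi,\hat n\psi}\bigl(\mathfrak e^2(t)+\norm{\Phi}_{H^2(\R)}\bigr)$, i.e.\ a full power of $\alpha_\xi$. You need this weighted-operator trick; a bare Cauchy--Schwarz loses a square root. Everything else in your outline — the treatment of the external-field terms via Lemma~\ref{lem:psi-Phi} and Lemma~\ref{lem:taylor}, the use of $\Gamma(x_1)$ for the leading $p_1p_2$ sector, and the $\he$/$\te$ and $\hbo$/$\tb$ integration-by-parts with the coarse bound of Lemma~\ref{lem:a_priori:3} to avoid circularity — matches the paper and is sound.
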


\noindent The proof is given in the next section. As a consequence,
$$|\eqref{gamma_b_3:2}|\ls\mathfrak{e}(t)\exp\left\{\mathfrak{e}^2(t)+\int_0^t\mathfrak{e}^2(s)\d s\right\}
\left(\alpha_\xi(t)+\tfrac{\mu}{\varepsilon}+\left(\tfrac{\varepsilon^2}{\mu}\right)^\frac12+N^{-\beta}+(\tfrac{N}{\varepsilon^2})^{-\eta}\right).$$
\qed

\subsection{Proof of Lemma~\ref{lem:E_kin}.}\label{subsec:E_kin}
We prove a refined bound for the kinetic energy. The basic idea of the proof is comparable to Lemma \ref{lem:a_priori}. However, we estimate the single terms in terms of $\alpha_\xi(t)$ instead of using $\mathfrak{e}^2(t)$. Abbreviating $\psi^{N,\varepsilon}(t)\equiv \psi$ and $\Phi(t)\equiv\Phi$, we obtain
\begin{align}
\alpha_\xi(t)&\geq E^{\psi}(t)-\mathcal{E}^{\Phi}(t)\nonumber\\
=&\norm{\partial_{x_1}\psi}^2-\norm{\Phi'}^2_{L^2(\R)}+\llr{\psi,\left(-\Delta_{y_1}+\tfrac{1}{\varepsilon^2}V^\perp(\tfrac{y_1}{\varepsilon})-	
	\tfrac{E_0}{\varepsilon^2}\right)\psi}\nonumber\\
&+\tfrac{N-1}{2}\llr{\psi,\wbot \psi}-\tfrac{\bb}{2}\llr{\psi,|\Phi(x_1)|^2\psi}\nonumber\\
	&+\tfrac{\bb}{2}\Big(\llr{\psi,|\Phi(x_1)|^2\psi}-\lr{\Phi,|\Phi|^2\Phi}_{L^2(\R)}\Big)
	+\llr{\psi,\Vp(t,\z_1)\psi}-\lr{\Phi,\Vp\big(t,(x,0)\big)\Phi}_{L^2(\R)}\nonumber\\
\geq&\norm{\partial_{x_1}\psi}^2-\norm{\Phi'}^2_{L^2(\R)}
	+\tfrac{1}{2}\llr{\psi,\left((N-1)\wbot -\bb|\Phi(x_1)|^2\right)\psi}\nonumber\\
	&-\tfrac{\bb}{2}\Big|\llr{\psi,|\Phi(x_1)|^2\psi}-\lr{\Phi,|\Phi|^2\Phi}_{L^2(\R)}\Big|-\Big|\llr{\psi,\Vp(t,\z_1)\psi}-\lr{\Phi,\Vp\big(t,(x,0)\big)\Phi}_{L^2(\R)}\Big|\nonumber\\
\gs& \norm{\partial_{x_1}\psi}^2-\norm{\Phi'}^2_{L^2(\R)}
	+\tfrac{1}{2}\llr{\psi,\left((N-1)\wbot -\bb|\Phi(x_1)|^2\right)\psi}- \mathfrak{e}^2(t)\llr{\psi,\hat{n}\psi}-\mathfrak{e}^3(t)\varepsilon\label{eqn:E_kin:1}
\end{align}
as $\llr{\psi,\left(-\Delta_{y_1}+\tfrac{1}{\varepsilon^2}V^\perp(\tfrac{y_1}{\varepsilon})-\tfrac{E_0}{\varepsilon^2}\right)\psi}\geq 0$. The last step follows by Lemma~\ref{lem:psi-Phi}, Lemma \ref{lem:Phi} and Lemma \ref{lem:taylor}, analogously to Section \ref{subsec:gamma_a}.
Further, using that $\norm{\partial_{x_1}\pp_1\psi}^2=\norm{\Phi'}^2_{L^2(\R)}\norm{\pp_1\psi}^2
=\norm{\Phi'}^2_{L^2(\R)}(1-\norm{\qp_1\psi}^2)$, we obtain
\begin{align}
\norm{\partial_{x_1}\psi}^2
&=\norm{\partial_{x_1}\qp_1\psi}^2+\norm{\partial_{x_1}\pp_1\psi}^2+\left(\llr{\partial_{x_1}\qp_1\psi,\partial_{x_1}\pp_1\psi}+c.c.
		\right)\nonumber\\
&\overset{\text{\ref{lem:commutators:4}}}{\geq} \norm{\partial_{x_1}\qp_1\psi}^2+\norm{\Phi'}_{L^2(\R)}^2\left(1-\norm{\qp_1\psi}^2\right)
	-2\left|\llr{\hat{n}^{-\frac12}\qp_1\psi,\partial_{x_1}^2\pp_1(\hat{n}^\frac12\qc_1+\hat{n}^\frac12_1\pc_1)\psi}\right|\nonumber\\
&\overset{\text{\ref{lem:a_priori:2}}}{\gs}\norm{\partial_{x_1}\qp_1\psi}^2+\norm{\Phi'}_{L^2(\R)}^2-\llr{\psi,\hat{n}\psi}
	\left(\mathfrak{e}^2(t)+\norm{\Phi}_{H^2(\R)}\right),\label{eqn:E_kin:2}
\end{align}
where we have used that $\hat{n}_1\ls\hat{n}$ and Lemma \ref{lem:fqq:2}.
\eqref{eqn:E_kin:1} and~\eqref{eqn:E_kin:2} yield
\begin{equation}\label{eqn:E_kin:3}
\norm{\partial_{x_1}\qp_1\psi}^2\ls\norm{\Phi}_{H^2(\R)}\alpha_\xi(t)+\llr{\psi,\left(\bb|\Phi(x_1)|^2-(N-1)\wbot \right)\psi}+\mathfrak{e}^3(t)\varepsilon.
\end{equation}
We estimate the second term of~\eqref{eqn:E_kin:3} by inserting $1=p_1p_2+1-p_1p_2$ into both slots of the scalar product:
\begin{align}
&\llr{\psi,(p_1p_2+1-p_1p_2)\left(\bb|\Phi(x_1)|^2-(N-1)\wbot \right)(p_1p_2+1-p_1p_2)\psi}\nonumber\\
&\qquad=\llr{\psi,p_1p_2\left(\bb|\Phi(x_1)|^2-N\wbot \right)p_1p_2\psi}+\norm{\sqrt{\wbot }p_1p_2\psi}^2
	\label{eqn:E_kin:4}\\
&\quad\qquad+\llr{\psi,(1-p_1p_2)\bb|\Phi(x_1)|^2(1-p_1p_2)\psi}-(N-1)\norm{\sqrt{\wbot }(1-p_1p_2)\psi}^2
	\label{eqn:E_kin:5}\\
&\quad\qquad+\left(\llr{\psi,p_1p_2\bb|\Phi(x_1)|^2(1-p_1p_2)\psi}+c.c.\right)
	\label{eqn:E_kin:6}\\
&\quad\qquad- (N-1)\left(\llr{\psi,p_1p_2\wbot (1-p_1p_2)\psi}+c.c.\right).\label{eqn:E_kin:7}
\end{align}
Making use of $\Gamma(x_1)$ from~\eqref{eqn:Gamma(x_1)}, the first term can be estimated as
\begin{align*}
\eqref{eqn:E_kin:4}
&=\llr{\psi,\pp_1\Gamma(x_1)p_1p_2\psi}+\llr{\psi,p_1p_2(b_{N,\varepsilon}-\bb)|\Phi(x_1)|^2p_1p_2\psi}+ \onorm{\sqrt{\wbot }p_1}^2 \\
&\overset{\text{\ref{lem:pfp:2}}}{\ls} \mathfrak{e}^2(t)\left(\tfrac{\mu}{\varepsilon}+N^{-1}+(\tfrac{N}{\varepsilon^2})^{-\eta}\right)
\end{align*}
by \eqref{eqn:norm:Gamma} and \eqref{eqn:int:wb} with $\eta$ from Definition \ref{def:W}. Note that at this point, it is crucial that $\beta<1$. For the second and third term, note that $1-p_1p_2=q_2+q_1p_2$ and $\norm{\sqrt{\wbot }(1-p_1p_2)}^2\geq 0$. Hence
\begin{align*}
\eqref{eqn:E_kin:5}&\leq \llr{\psi,q_2\bb|\Phi(x_1)|^2q_2\psi}+\llr{\psi,q_1p_2\bb|\Phi(x_1)|^2q_1p_2\psi}
\ls\llr{\psi,\hat{n}\psi}\mathfrak{e}^2(t),\\
\eqref{eqn:E_kin:6}&\leq 2 \left|\llr{\hat{n}^\frac12_1\psi,p_1p_2\bb|\Phi(x_1)|^2p_2q_1\hat{n}^{-\frac12}\psi}\right|
\ls\mathfrak{e}^2(t)\llr{\psi,\hat{n}\psi}
\end{align*}
by Lemma \ref{lem:fqq:1} and Lemma \ref{lem:Phi}. For the last term, observe that $1-p_1p_2=p_1q_2+q_1p_2+q_1q_2$, hence, by symmetry of $\psi$, 
\begin{align}
\eqref{eqn:E_kin:7}\leq&
2N\left|\llr{\psi,p_1q_2 \wbot p_1p_2\psi}\right|+N\left|\llr{\psi,q_1q_2 \wbot p_1p_2\psi}\right|\nonumber\\
\ls& N\left|\llr{\hat{n}^{-\frac12}q_2\psi,p_1\wbot p_1p_2\hat{n}^\frac12_1\psi}\right|\label{eqn:E_kin:8}\\
&+N\left|\llr{\qc_1\psi,q_2(1+\pc_2)\wbot p_1p_2\psi}\right|\label{eqn:E_kin:9}\\
&+ N\left|\llr{\psi,\qp_1\qp_2\pc_1\pc_2\wbot p_1p_2\psi}\right|\label{eqn:E_kin:10}
\end{align}
analogously to the decomposition of~\eqref{eqn:dt_alpha:7}. Using \eqref{eqn:int:wb}, \eqref{eqn:E_kin:8} is easily estimated as
\begin{equation*}
\eqref{eqn:E_kin:8}\overset{\text{\ref{lem:pfp:1}}}{\ls}\mathfrak{e}^2(t)\llr{\psi,\hat{n}\psi}.
\end{equation*}
For~\eqref{eqn:E_kin:9}, we obtain with $t_2:=q_2(1+\pc_2)$, similarly to the estimate of~\eqref{gamma_b_2:1},
\begin{align*}
\eqref{eqn:E_kin:9}
\leq& N\left|\llr{\qc_1\psi,t_2\teot(\na_1\heot)\cdot\nabla_1p_1p_2\psi}\right|\\
&+N\left|\llr{\qc_1\psi,t_2(\na_1\teot)\cdot(\na_1\heot)p_1p_2\psi}\right|
+N\left|\llr{\na_1\qc_1\psi,t_2\teot(\na_1\heot)p_1p_2\psi}\right|\\
\leq& N\norm{\qc_1\psi}\left(\norm{\te}_{L^\infty(\R^3)}\onorm{(\na_1\heot)\cdot\nabla_1p_1}
+\onorm{p_1(\na_1\heot)}\norm{\na\te}_{L^\infty(\R^3)}\right)\\
&+N\norm{\na_1\qc_1\psi}\norm{\te}_{L^\infty(\R^3)}\onorm{p_1(\na_1\heot)}
\ls\mathfrak{e}^2(t)\left(\tfrac{\varepsilon^2}{\mu}\right)^\frac12.
\end{align*}
\eqref{eqn:E_kin:10} is of the same structure as~\eqref{gamma_b_3:1}. Choosing $\bo=\beta$, one computes analogously to~\eqref{eqn:gamma:b_3:1:2} to~\eqref{eqn:gamma:b_3:1:4}
\begin{align*}
\eqref{eqn:E_kin:10}
=&N\left|\llr{\hat{n}^{-\frac12}\qp_1\qp_2\psi,\tbetaot\left(\tfrac{\d^2}{\d x_1^2}\hbot\right)\pp_1\pc_1\hat{n}_2^\frac12p_2\psi}\right|\\
\leq& N\left|\llr{\hat{n}^{-\frac12}\qp_1\qp_2\psi,\tbetaot\left(\tfrac{\d}{\d x_1}\hbot\right)\partial_{x_1}\pp_1\hat{n}_2^\frac12\pc_1p_2\psi}\right|\\
&+N\left|\llr{\partial_{x_1}\hat{n}^{-\frac12}\qp_1\qp_2\psi,\tbetaot\left(\tfrac{\d}{\d x_1}\hbot\right)p_1p_2\hat{n}_2^\frac12\psi}\right|\\
&+N\left|\llr{\hat{n}^{-\frac12}\qp_1\psi,\qp_2\left(\tfrac{\d}{\d x_1}\tbetaot\right)\left(\tfrac{\d}{\d x_1}\hbot\right)p_2\hat{n}_2^\frac12p_1\psi}\right|\\
\overset{\text{\ref{lem:GammaLambda:and:O_12}}}{\ls}&\norm{\Phi}_{H^2(\R)}\llr{\psi,\hat{n}\psi}N^{-\frac{\beta}{2}}
+\mathfrak{e}^2(t)N^{-\frac{\beta}{2}}\llr{\psi,\hat{n}\psi}^\frac12\\
&+N\norm{\hat{n}^\frac12\psi}\left(\norm{p_2(\tfrac{\d}{\d x_1}\tbetaot)(\tfrac{\d}{\d x_1}\hbot)\qp_2\hat{n}^\frac12_2p_1\psi}^2+N^{-1}\norm{\qp_2(\tfrac{\d}{\d x_1}\tbetaot)(\tfrac{\d}{\d x_1}\hbot)p_2\hat{n}^\frac12_2p_1\psi}^2\right)^\frac12\\
\ls&\mathfrak{e}^2(t)\left(\llr{\psi,\hat{n}\psi}+N^{-\beta}\right),
\end{align*}
since $n_2(k)\ls n(k)$ and by Corollary~\ref{cor:fqq:2} and Lemma~\ref{lem:a_priori:3}. Besides, we have used that $N^{-1+\beta}<1$ and $\norm{\Phi}_{H^2(\R)}N^{-\frac{\beta}{2}}\ls\mathfrak{e}^2(t)$ for sufficiently large $N$ at fixed time $t$. Thus,
\begin{equation}
\eqref{eqn:E_kin:7}\ls\mathfrak{e}^2(t)\left(\left(\tfrac{\varepsilon^2}{\mu}\right)^\frac12+N^{-\beta}+\llr{\psi,\hat{n}\psi}\right).
\end{equation}
Finally, inserting the bounds for~\eqref{eqn:E_kin:4} to~\eqref{eqn:E_kin:7} into~\eqref{eqn:E_kin:3} yields
\begin{align*}
\norm{\partial_{x_1}\qp_1\psi}^2&\ls
\norm{\Phi}_{H^2(\R)}\alpha_\xi(t)+\mathfrak{e}^2(t)\left(\left(\tfrac{\varepsilon^2}{\mu}\right)^\frac12+\tfrac{\mu}{\varepsilon}+N^{-\beta}+(\tfrac{N}{\varepsilon^2})^{-\eta}+\llr{\psi,\hat{n}\psi}\right)\\
&\ls\left(\alpha_\xi(t)+\tfrac{\mu}{\varepsilon}+\left(\tfrac{\varepsilon^2}{\mu}\right)^\frac12+N^{-\beta}+(\tfrac{N}{\varepsilon^2})^{-\eta}\right)\exp\left\{2\mathfrak{e}^2(t)+2\int_0^t\mathfrak{e}^2(s)\d s\right\}
\end{align*}
since $\varepsilon<\left(\frac{\varepsilon^2}{\mu}\right)^\frac12$ and $\mathfrak{e}^2(t)\ls \exp\left\{2\mathfrak{e}^2(t)\right\}$.
\qed


\section*{Acknowledgments}
\noindent I thank Stefan Teufel for helpful remarks and for his  involvement in the closely related joint project \cite{GP}.
Helpful discussions with Maximilian Jeblick, Nikolai Leopold, Peter Pickl and Christof Sparber are gratefully acknowledged. This work was supported by the German Research Foundation within the Research Training Group 1838 ``Spectral Theory and Dynamics of Quantum Systems''.

\begin{appendix}
\section{Well-posedness of the effective equation}\label{appendix}
Let $\frac12<r\leq 4 $ and let the initial datum $\Phi_0\in H^r(\R)$. Local existence of $H^{r}$-solutions of \eqref{NLS} on the maximal time interval $t\in[0,T_{r})$ follows from the usual contraction argument on the subset $K:=\{u\in X:\norm{u}_X\leq 2R\}$ of the Banach space $X:=\mathcal{C}\left([0,T]; H^r(\R)\right)$ for some $R>0$ and $T<T_r$, where one uses that the map $f:u\mapsto \bb |u|^2u+\Vp(t,\cdot)u$ is locally Lipschitz continuous on $H^r(\R)$. 
To prove global existence, one shows first that $T_s=T_r$ for all $\frac12<r,s\leq 4$ and concludes from an estimate of $\norm{\Phi(t)}_{H^1(\R)}$ that no blow-up can occur \cite{wahlen_lec}:\\
Let $\frac12<r<s\leq 4$ and $\Phi_0\in H^s(\R)$. Clearly, $T_s\leq T_r$. Assume now $T_s<T_r$. Then $C_{T_s}:=\sup_{t\in[0,T_s]}\norm{\Phi(t)}_{H^r(\R)}<\infty$. Applying twice the inequality 
$$\norm{uv}_{H^s(\R)}\leq C\left(\norm{u}_{H^s(\R)}\norm{v}_{H^r(\R)}+\norm{u}_{H^r(\R)}\norm{v}_{H^s(\R)}\right)$$
and using the fact that $H^s(\R)$ is an algebra, one concludes that for $t\in[0,T_s]$
\begin{align*}
\norm{\Phi(t)}_{H^s(\R)}&\leq\norm{\Phi_0}_{H^s(\R)}+\int\limits_0^t\norm{f(\Phi(s))}_{H^s(\R)}\d s\\
&\leq\norm{\Phi_0}_{H^s(\R)}+C\int\limits_0^t\left(C_{T_s}^2+\norm{\Vp(s,\cdot)}_{H^s(\R)}\right)\norm{\Phi(s)}_{H^s(\R)}\d s.
\end{align*}
Grönwall's inequality implies that $\norm{\Phi(t)}_{H^s(\R)}$ cannot blow up at $t=T_s$, which contradicts $[0,T_s)$ being the maximal time interval where $H^s$-solutions exist. Therefore $T_s=T_r=:T_\mathrm{max}$. Hence for $\Phi_0\in H^2(\R)$, $\Phi(t)\in H^2(\R)$ for $t\in[0,T_\mathrm{max})$. Consequently, \eqref{H^1-bound} implies that $\lim_{t\rightarrow T_\mathrm{max}}\norm{\Phi(t)}_{H^1(\R)}<\infty $, hence $T_1=T_\mathrm{max}=\infty$.
\end{appendix}

\renewcommand{\bibname}{References}
\bibliographystyle{abbrv}
    \bibliography{bib_PhD}
\end{document}